\tikzset{state/.style={
  rectangle,
  rounded corners,
  draw=black,
  minimum height=2em,
  minimum width=2em,
  align=center}
}
\tikzset{every picture/.append style={initial text=}}
\tikzset{accepting/.style = {double}}
\tikzset{>=stealth}
\tikzset{parallel above/.append style={transform canvas={yshift= 1mm}}}
\tikzset{parallel below/.append style={transform canvas={yshift=-1mm}}}
\tikzset{parallel right/.append style={transform canvas={xshift= 1.3mm}}}
\tikzset{parallel left/.append style={transform canvas={xshift=-1.3mm}}}
\newcommand{\lang}[0]{\ensuremath{\mathcal{L}}}
\newcommand{\true}{{\ensuremath{\mathbf{t\hspace{-0.5pt}t}}}}
\newcommand{\false}{{\ensuremath{\mathbf{ff}}}}
\newcommand{\F}{{\ensuremath{\mathbf{F}}}}
\renewcommand{\G}{{\ensuremath{\mathbf{G}}}}
\newcommand{\X}{{\ensuremath{\mathbf{X}}}}
\renewcommand{\U}{{\ensuremath{\mathbf{U}}}}
\newcommand{\W}{{\ensuremath{\mathbf{W}}}}
\newcommand{\M}{{\ensuremath{\mathbf{M}}}}
\newcommand{\R}{{\ensuremath{\mathbf{R}}}}
\newcommand{\subf}{\textit{sf}\,}
\newcommand{\sfmu}{{\ensuremath{\mathbb{\mu}}}}
\newcommand{\sfnu}{{\ensuremath{\mathbb{\nu}}}}
\newcommand{\setmu}{\ensuremath{M}}
\newcommand{\setnu}{\ensuremath{N}}
\newcommand{\setF}{\ensuremath{\mathcal{F}}}
\newcommand{\setG}{\ensuremath{\mathcal{G}}}
\newcommand{\setFG}{\ensuremath{\mathcal{F\hspace{-0.1em}G}}}
\newcommand{\setGF}{\ensuremath{\mathcal{G\hspace{-0.1em}F}\!}}
\newcommand{\evalnu}[2]{{#1[#2]^\Pi_1}}
\newcommand{\evalmu}[2]{{#1[#2]^\Sigma_1}}
\newcommand{\flatten}[2]{{#1[#2]^\Sigma_2}}
\newcommand{\flattentwo}[2]{{#1[#2]^\Pi_2}}
\newcommand{\trans}[1]{\overset{#1}{\longrightarrow}}
\newcommand{\Op}{\mathbin{op}}
\newtheorem{theorem}{Theorem}
\newtheorem{definition}[theorem]{Definition}
\newtheorem{lemma}[theorem]{Lemma}
\newtheorem{corollary}[theorem]{Corollary}
\newtheorem{proposition}[theorem]{Proposition}
\newtheorem{example}[theorem]{Example}
\newtheorem{remark}[theorem]{Remark}
\newcommand{\aww}{\textnormal{AWW}}
\newcommand{\alw}{\textnormal{A1W}}
\newcommand{\dbw}{\textnormal{DBW}}
\newcommand{\dcw}{\textnormal{DCW}}
\newcommand{\drw}{\textnormal{DRW}}
\newcommand{\A}{\textbf{A}}
\newcommand{\SetToWidest}[1]{\mathmakebox[6.5cm][l]{#1}}%
\newcommand{\SetToWidestTwo}[1]{\mathmakebox[0.75cm][r]{#1}}%
\newif\ifarxiv
\begin{document}

\title[An Efficient Normalisation Procedure for LTL and A1W]{An Efficient Normalisation Procedure for Linear Temporal Logic and Very Weak Alternating Automata}

\ifarxiv
\subtitle{Extended Version}
\fi


\author{Salomon Sickert}

\orcid{0000-0002-0280-8981}            

\affiliation{
  \institution{Technische Universität München} 
  \country{Germany}                 
}

\email{s.sickert@tum.de}               

\author{Javier Esparza}               
\orcid{0000-0001-9862-4919}

\affiliation{
  \institution{Technische Universität München} 
  \country{Germany}                 
}

\email{esparza@in.tum.de}  

\begin{abstract}
In the mid 80s, Lichtenstein, Pnueli, and Zuck proved a classical theorem stating that every formula of Past LTL (the extension of LTL with past operators) is equivalent to a formula of the form $\bigwedge_{i=1}^n \G\F \varphi_i \vee \F\G \psi_i $, where $\varphi_i$ and $\psi_i$ contain only past operators. Some years later, Chang, Manna, and Pnueli built on this result to derive a similar normal form for LTL. Both normalisation procedures have a non-elementary worst-case blow-up, and follow an involved path from formulas to counter-free automata to star-free regular expressions and back to formulas. We improve on both points. We present a direct and purely syntactic normalisation procedure for LTL yielding a normal form, comparable to the one by Chang, Manna, and Pnueli, that has only a single exponential blow-up. As an application, we derive a simple algorithm to translate LTL into deterministic Rabin automata. The algorithm normalises the formula, translates it into a special very weak alternating automaton, and applies a simple determinisation procedure, valid only for these special automata. 
\end{abstract}

\begin{CCSXML}
<ccs2012>
   <concept>
       <concept_id>10003752.10003790.10003793</concept_id>
       <concept_desc>Theory of computation~Modal and temporal logics</concept_desc>
       <concept_significance>500</concept_significance>
       </concept>
   <concept>
       <concept_id>10003752.10003766.10003770</concept_id>
       <concept_desc>Theory of computation~Automata over infinite objects</concept_desc>
       <concept_significance>500</concept_significance>
       </concept>
 </ccs2012>
\end{CCSXML}

\ccsdesc[500]{Theory of computation~Modal and temporal logics}
\ccsdesc[500]{Theory of computation~Automata over infinite objects}

\keywords{Linear Temporal Logic, Normal Form, Weak Alternating Automata, Deterministic Automata}  

\maketitle

\section{Introduction}

In seminal work carried out in the middle 80s, Lichtenstein, Pnueli, and Zuck 
investigated Past Linear Temporal Logic (Past LTL), a temporal logic with future and past operators. They proved the classical result stating that every formula is equivalent to another one
of the form 
\begin{equation}
\label{eq:nf}
\bigwedge_{i=1}^n \G\F \varphi_i \vee \F\G \psi_i 
\end{equation}
\noindent where $\varphi_i$ and $\psi_i$ only contain past operators  \cite{DBLP:conf/lop/LichtensteinPZ85,XXXX:phd/Zuck86}. Shortly after, Manna and Pnueli
introduced the \emph{safety-progress} hierarchy, containing six classes of properties (\Cref{fig:temporal_hierarchy}), and presented a logical characterisation of each class in terms of syntactic fragments of Past LTL \cite{DBLP:conf/podc/MannaP89,DBLP:books/daglib/0077033}. The class of \emph{reactivity properties}, placed at the top of the hierarchy, contains all Past LTL properties, and its syntactic characterisation, given by (\ref{eq:nf}), is the class of \emph{reactivity formulas}.

In the early 90s, LTL (which only has future operators, but is known to be as expressive as Past LTL), became the logic of choice for most model-checking applications. At that time Chang, Manna, and Pnueli showed that the classes of the safety-progress hierarchy also admit syntactic characterisations in terms of LTL fragments \cite{DBLP:conf/icalp/ChangMP92}. In particular,  
they proved that every LTL formula is equivalent to another one in which every path of the syntax tree alternates at most once between the \enquote{least-fixed-point} operators $\U$ and $\M$ and 
the \enquote{greatest-fixed-point} operators $\W$ and $\R$. In the notation introduced in \cite{DBLP:conf/mfcs/CernaP03}, which mimics the definition of the $\Sigma_i$, $\Pi_i$, and $\Delta_i$ classes of the arithmetical and polynomial hierarchies, they proved that every LTL formula is equivalent to a $\Delta_2$-formula.

While these normal forms have had large conceptual impact in model checking, automatic synthesis, and deductive verification (see e.g. \cite{PitermanP18} for a recent survey), the  normalisation \emph{procedures} proving that they are indeed normal forms have had none. In particular, contrary to the case of propositional or first-order logic, they have not been implemented in tools. The reason is that they are not direct, have high complexity, and their correctness proofs are involved.  Let us elaborate on this. In \cite{XXXX:phd/Zuck86}, Zuck gives a detailed description of the normalisation procedure of \cite{DBLP:conf/lop/LichtensteinPZ85}. First, Zuck translates the initial Past LTL formula into a counter-free semi-automaton, then applies the Krohn-Rhodes decomposition and other results to translate the automaton into a star-free regular expression, and finally translates this expression into a reactivity formula with a non-elementary blow-up. In \cite{DBLP:conf/podc/MannaP89,DBLP:books/daglib/0077033} the procedure is not even presented, the reader is referred to \cite{XXXX:phd/Zuck86} and/or to previous results\footnote{Including papers by Burgess, McNaughton and Pappet, Choueka, Thomas, and Gabby, Pnueli, Shela, and Stavi.}. The normalisation procedure of \cite{DBLP:conf/icalp/ChangMP92} for LTL calls the translation procedure of \cite{DBLP:conf/lop/LichtensteinPZ85,XXXX:phd/Zuck86} for Past LTL as a subroutine, and so it is not any simpler\footnote{Further, \cite{DBLP:conf/icalp/ChangMP92} only contains a short sketch of the translation of reactivity formulas into $\Delta_2$-formulas.}. Finally, while  
Maler and Pnueli present in \cite{DBLP:conf/focs/MalerP90} an improved translation of star-free regular languages to Past LTL, their work still leads to a triple exponential normalisation procedure for Past LTL. Further, it is not clear to us if this translation can also be used to obtain $\Delta_2$-formulas.

In this paper we present a novel normalisation procedure that translates any LTL formula into an equivalent $\Delta_2$-formula. Our procedure is:
\begin{itemize}
\item \emph{Direct}. It does not require any detour through automa\-ta or regular expressions.
\item \emph{Syntax-guided}. It consists of a few syntactic rewrite rules---not unlike the rules for putting a boolean formula in conjunctive or disjunctive normal form---that can be described in less than a page.
\item \emph{Single exponential}. The length of the $\Delta_2$-formula is at most exponential in the length of the original formula, a dramatic improvement on the previous non-elemen\-tary and triple exponential bounds.
\end{itemize}
The correctness proof of the procedure consists of a few lemmas, all of them with routine proofs by structural induction. 
It is presented in \Cref{sec:overview,sec:eval,sec:newnf}, modulo the omission of some straightforward induction cases. To make this paper self-contained, the proofs of three lemmas taken from \cite{DBLP:conf/lics/EsparzaKS18,DBLP:phd/dnb/Sickert19} are reproduced in 
\ifarxiv
\Cref{sec:updated_proofs}.
\else
the appendix of the extended version of this paper \cite{XXXX:technical_report}.
\fi
We have mechanised the complete correctness proof in Isabelle/HOL \cite{DBLP:books/sp/NipkowPW02}, building upon previous work \cite{DBLP:conf/itp/0001SS19,DBLP:journals/afp/SeidlS19,DBLP:journals/afp/Sickert16}. 
The formalised proof consists of roughly 1000 lines, from which one can extract a formally verified normalisation procedure consisting of ca. 200 lines of Standard ML code, excluding standard definitions added by the code-generation. 
Both the formalisation and instructions for extracting code are located in \cite{XXXX:journals/afp/Sickert20}.

In the second part of the paper (\Cref{sec:LTLtoDRW,sec:hierarchy}) we use the new normalisation procedure to derive a simple translation of LTL into deterministic Rabin automata (\drw). First, we show that every formula of $\Delta_2$ can be translated into a very weak alternating Büchi automaton (\alw) in which every path has at most one alternation between accepting and non-accepting states. Further, we provide a simple determinisation procedure for these automata, based on a breakpoint construction. The LTL-to-\drw{} translation normalises the formula, transforms it into an \alw{} with at most one alternation, and determinises this intermediate automaton.

Due to space constraints we do not provide an overview of LTL-to-DRW translations and refer the reader to \cite[Ch. 1]{DBLP:phd/dnb/Sickert19}. Furthermore, we only provide a preliminary experimental evaluation of the proposed translations and leave a detailed analysis as future work.

\section{Preliminaries}

Let $\Sigma$ be a finite alphabet. A \emph{word} $w$ over $\Sigma$ is an infinite sequence of letters $a_0 a_1 a_2 \dots$ with $a_i \in \Sigma$ for all $i \geq 0$, and a language is a set of words. A \emph{finite word} is a finite sequence of letters. As usual, the set of all words (finite words) is denoted $\Sigma^\omega$ ($\Sigma^*$). We let $w[i]$ (starting at $i=0$) denote the $i$-th letter of a word $w$. The finite infix $w[i]w[i+1]\dots w[j - 1]$ is abbreviated with $w_{ij}$ and the infinite suffix $w[i] w[i+1] \dots$ with $w_{i}$. We denote the infinite repetition of a finite word $\sigma_1 \dots \sigma_n$ by $(\sigma_1 \dots \sigma_n)^\omega = \sigma_1 \dots \sigma_n \sigma_1 \dots \sigma_n \sigma_1 \dots$.

\begin{definition}[LTL syntax]
\label{def:ltlsyntax}
LTL formulas over a set $Ap$ of atomic propositions are constructed by the following syntax:
\begin{align*}
\varphi \Coloneqq \; & \true \mid \false \mid a \mid \neg a \mid \varphi \wedge \varphi \mid \varphi\vee\varphi \\ 
                     & \mid \X\varphi \mid \varphi\U\varphi \mid \varphi\W\varphi \mid \varphi\R\varphi \mid \varphi\M\varphi 
\end{align*}
\noindent where $a \in Ap$ is an atomic proposition and $\X$, $\U$, $\W$, $\R$, and $\M$ 
are the next, (strong) until, weak until, (weak) release, and strong release operators, respectively. 
\end{definition}

The inclusion of both the strong and weak until operators as well as the negation normal form are essential to our approach. The operators $\R$ and $\M$, however, are added to ensure that every formula of length $n$ in the standard syntax, with negation but only the until operator, is equivalent to a formula of length $O(n)$ in our syntax. They could be removed, if we accept an exponential blow-up incurred by expressing $\R$ with $\W$. The semantics is defined as usual:

\begin{definition}[LTL semantics]
\label{def:ltlsemantics}
Let $w$ be a word over the alphabet $2^{Ap}$ and let $\varphi$ be a formula. The satisfaction relation $w \models \varphi$ is inductively defined as follows:
{\arraycolsep=1.8pt%
\[\begin{array}[t]{lcl}
w \models \true               & &                                                              \\
w \not \models \false         & &                                                              \\
w \models a & \mbox{ iff }    & a \in w[0]                                                     \\
w \models \neg a              & \mbox{ iff } & a \notin w[0]                                   \\
w \models \varphi \wedge \psi & \mbox{ iff } & w \models \varphi \text{ and } w \models \psi   \\
w \models \varphi \vee \psi   & \mbox{ iff } & w \models \varphi \text{ or } w \models \psi    \\
w \models \X \varphi      & \mbox{ iff } & w_1 \models \varphi \\
w \models \varphi \U \psi & \mbox{ iff } & \exists k. \, w_k \models \psi \text{ and } \forall j < k. \, w_j \models \varphi \\
w \models \varphi \M \psi & \mbox{ iff } & \exists k. \, w_k \models \varphi \text{ and } \forall j \leq k. \, w_j \models \psi \\
w \models \varphi \R \psi & \mbox{ iff } & \forall k. \, w_k \models \psi \text{ or } w \models \varphi\M \psi \\
w \models \varphi \W \psi & \mbox{ iff } & \forall k. \, w_k \models \varphi \text{ or } w \models \varphi\U \psi
\end{array}\]}%
We let $\lang(\varphi) \coloneqq \{ w \in (2^{Ap})^\omega : w \models \varphi\}$ denote the language of $\varphi$.
We overload the definition of $\models$ and write $\varphi \models \psi$ as a shorthand for $\lang(\varphi) \subseteq \lang(\psi)$.
\end{definition}

We use the standard abbreviations $\F \varphi \coloneqq \true \, \U \, \varphi$ (eventually) and $\G \varphi \coloneqq \false \, \R \, \varphi$ (always). Finally, we introduce the notion of equivalence of formulas, and equivalence within a language.
\newcommand{\equ}[1]{\equiv^{#1}}

\begin{definition}
Two formulas $\varphi$ and $\psi$ are \emph{equivalent}, denoted $\varphi \equiv \psi$, if $\lang(\varphi) = \lang(\psi)$. Given a language $L \subseteq (2^{Ap})^\omega$, two formulas $\varphi$ and $\psi$ are \emph{equivalent within $L$}, denoted $\varphi \equ{L} \psi$,  if $\lang(\varphi) \cap L = \lang(\psi) \cap L$.
\end{definition}

\section{The Safety-Progress Hierarchy}
\label{sec:hierarchy}

We recall the hierarchy of temporal properties studied by Manna and Pnueli \cite{DBLP:conf/podc/MannaP89} following the formulation of {\v{C}}ern{\'{a}} and Pel{\'{a}}nek \cite{DBLP:conf/mfcs/CernaP03}. In the ensuing sections we describe structures that have a direct correspondence to this hierarchy and in this sense the hierarchy provides a map to navigate the results of this paper.

\begin{definition}[\cite{DBLP:conf/podc/MannaP89,DBLP:conf/mfcs/CernaP03}]
Let $P \subseteq \Sigma^\omega$ be a property over $\Sigma$.
\begin{itemize}
\item $P$ is a safety property if there exists a language of finite words $L \subseteq \Sigma^*$ such that for every $w \in P$ all finite prefixes of $w$ belong to $L$.
\item $P$ is a guarantee property if there exists a language of finite words $L \subseteq \Sigma^*$ such that for every $w \in P$ there exists a finite prefix of $w$ which belongs to $L$.
\item $P$ is an obligation property if it can be expressed as a positive boolean combination of safety and guarantee properties.
\item $P$ is a recurrence property if there exists a language of finite words $L \subseteq \Sigma^*$ such that for every $w \in P$ infinitely many prefixes of $w$ belong to $L$.
\item $P$ is a persistence property if there exists a language of finite words $L \subseteq \Sigma^*$ such that for every $w \in P$ all but finitely many prefixes of $w$ belong to $L$.
\item $P$ is a reactivity property if $P$ can be expressed as a positive boolean combination of recurrence and persistence properties.
\end{itemize}
\end{definition}

The inclusions between these classes are shown in \Cref{fig:temporal_hierarchy}. 
Chang, Manna, and Pnueli give in \cite{DBLP:conf/icalp/ChangMP92} a syntactic characterisation of the classes of the safety-progress hierarchy in terms of fragments of LTL. The following is a corollary of the proof of \cite[Thm. 8]{DBLP:conf/icalp/ChangMP92}:

\begin{definition}[Adapted from \cite{DBLP:conf/mfcs/CernaP03}]
\label{def:future_hierarchy}
We define the following classes of LTL formulas:
\begin{itemize}
	\item The class $\Sigma_0 = \Pi_0 = \Delta_0$ is the least set containing all atomic propositions and their negations, and is closed under the application of conjunction and disjunction.
	\item The class $\Sigma_{i+1}$ is the least set containing $\Pi_i$ and is closed under the application of conjunction, disjunction, and the $\X$, $\U$, and $\M$ operators.
	\item The class $\Pi_{i+1}$ is the least set containing $\Sigma_i$ and is closed under the application of conjunction, disjunction, and the $\X$, $\R$, and $\W$ operators.
	\item The class $\Delta_{i+1}$ is the least set containing $\Sigma_{i+1}$ and $\Pi_{i+1}$ and is closed under the application of conjunction and disjunction.
\end{itemize}
\end{definition}

\begin{theorem}[Adapted from \cite{DBLP:conf/mfcs/CernaP03}]\label{thm:hierarchy:correspondence}
A property that is specifiable in LTL is a guarantee (safety, obligation, persistence, recurrence, reactivity, respectively) property if and only if it is specifiable by a formula from the class $\Sigma_1$, $(\Pi_1$, $\Delta_1$, $\Sigma_2$, $\Pi_2$, $\Delta_2$, respectively$).$
\end{theorem}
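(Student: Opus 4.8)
The plan is to read the statement as six biconditionals---one for each (semantic, syntactic) pair---and to trim the work with two reductions. First, because our formulas are in negation normal form, the syntactic classes come in dual pairs: negation turns a $\Sigma_1$-formula into a $\Pi_1$-formula and vice versa, since $\neg(\varphi\,\U\,\psi)\equiv\neg\varphi\,\R\,\neg\psi$, $\neg(\varphi\,\M\,\psi)\equiv\neg\varphi\,\W\,\neg\psi$, and $\X$ is self-dual, and the same holds at level $2$. On the semantic side, safety is the complement of guarantee and recurrence the complement of persistence. Hence it suffices to settle the guarantee/$\Sigma_1$ and persistence/$\Sigma_2$ cases; safety/$\Pi_1$ and recurrence/$\Pi_2$ then follow by complementation. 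Second, both $\Delta_1$ and the obligation properties are, by definition, the positive Boolean combinations of the two level-$1$ classes, and likewise $\Delta_2$ and reactivity at level $2$; so the two $\Delta$-cases drop out once the other four are in place, simply by closing both sides under $\wedge$ and $\vee$.

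Each remaining biconditional separates into an easy \emph{if} direction (soundness: the syntactic class lands inside the semantic one) and a hard \emph{only if} direction (expressive completeness). For soundness I would use the topological reading of the Manna--Pnueli classes over $(2^{Ap})^\omega$: guarantee $=$ open, safety $=$ closed, persistence $= F_\sigma$, and recurrence $= G_\delta$. The inclusions ``syntactic class $\subseteq$ point class'' are then one structural induction. Writing $S$ for the suffix map $w\mapsto w_1$, every temporal operator is a Boolean combination of shifts, e.g. $\lang(\varphi\,\U\,\psi)=\bigcup_k\bigl(S^{-k}\lang(\psi)\cap\bigcap_{j<k}S^{-j}\lang(\varphi)\bigr)$ and $\lang(\varphi\,\W\,\psi)=\bigcap_k\bigl(S^{-k}\lang(\varphi)\cup\bigcup_{j\le k}S^{-j}\lang(\psi)\bigr)$, with analogous forms for $\M$ and $\R$. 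Since $S^{-1}$ preserves every point class, and open and $F_\sigma$ sets are closed under countable unions and finite intersections while closed and $G_\delta$ sets are closed under countable intersections and finite unions, each inductive step stays in the intended class; the base class $\Delta_0$ is clopen, hence inside all four. To convert ``denotes an $F_\sigma$ (resp. $G_\delta$) set'' back into ``is a persistence (resp. recurrence) property'' I would invoke Landweber's theorem, which identifies these point classes with the Manna--Pnueli classes on $\omega$-regular---in particular LTL-definable---languages.

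The \emph{only if} direction is the real obstacle. It asserts that an LTL-specifiable property whose language sits at a given topological level is \emph{already} specifiable by a formula of the matching syntactic level, and this is exactly the expressive-completeness content of the syntactic characterisation of Chang, Manna, and Pnueli \cite[Thm.~8]{DBLP:conf/icalp/ChangMP92} in the reformulation of \cite{DBLP:conf/mfcs/CernaP03}. The only proofs I know route the formula through a counter-free automaton and a star-free expression before returning to LTL, incurring precisely the non-elementary blow-up that the rest of the paper sets out to avoid. Since the theorem is only recalled here, I would import this direction as a black box; removing its cost, at least for the top level $\Delta_2$ (reactivity), is the very purpose of the direct normalisation developed in \Cref{sec:newnf}.
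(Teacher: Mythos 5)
The first thing to note is that the paper does not prove this theorem at all: it is stated as ``Adapted from \cite{DBLP:conf/mfcs/CernaP03}'' and introduced as a corollary of the proof of \cite[Thm.~8]{DBLP:conf/icalp/ChangMP92}, so there is no in-paper argument to compare yours against. Assessed on its own terms, most of your reconstruction is sound. The duality reductions work: NNF negation exchanges $\Sigma_i$ with $\Pi_i$, LTL-specifiability is preserved under complementation, and safety/guarantee and recurrence/persistence are complementary semantic classes, so the $\Pi$-cases follow from the $\Sigma$-cases. The soundness induction is also correct: unfolding the temporal operators into countable unions and intersections of shifted languages shows that $\Sigma_1$, $\Pi_1$, $\Sigma_2$, $\Pi_2$ land in open, closed, $F_\sigma$, $G_\delta$ respectively, and Landweber's theorem legitimately converts the topological classes back into the Manna--Pnueli classes for $\omega$-regular (hence LTL-definable) languages. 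Black-boxing the expressive-completeness direction is consistent with what the paper itself does for the entire theorem, and you correctly identify it as the non-elementary bottleneck that the paper's normalisation procedure is designed to bypass at level $\Delta_2$.

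The one step that does not go through as written is the claim that the $\Delta$-cases ``drop out'' of the other four by closing both sides under $\wedge$ and $\vee$. That works for the easy direction (a $\Delta_1$-formula is a positive Boolean combination of $\Sigma_1$- and $\Pi_1$-formulas, hence denotes an obligation property), but not for the converse: an LTL-specifiable obligation property is, by definition, a positive Boolean combination of \emph{some} safety and guarantee properties, and nothing in that definition guarantees that the constituents are themselves LTL-specifiable, which is what you would need in order to apply the level-$1$ cases to them. The decomposition of an LTL-definable obligation (respectively reactivity) property into LTL-definable safety and guarantee (respectively recurrence and persistence) constituents is itself part of the expressive-completeness content of \cite{DBLP:conf/icalp/ChangMP92}. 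Since you are importing that result anyway, the fix is simply to widen the black box to cover the only-if direction of the $\Delta$-cases as well, rather than presenting them as formal consequences of the other four.
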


\begin{figure}
\begin{subfigure}[c]{0.51\columnwidth}
  \begin{center}
  \small
	\begin{tikzpicture}[x=1cm,y=0.75cm,outer sep=2pt]

    \node (padding1) at ( 0,0.3) {};
	\node (1) at ( 0, 0) {reactivity};
	\node (2) at ( 1,-1) {recurrence};
	\node (3) at (-1,-1) {persistence};
    \node (4) at ( 0,-2) {obligation};
	\node (5) at ( 1,-3) {safety};
	\node (6) at (-1,-3) {guarantee};
	\node (padding2) at ( 0,-3.2) {};

	\path
	(2) edge[draw=none] node[sloped]{$\supset$} (1)
    (3) edge[draw=none] node[sloped]{$\subset$} (1)
    
    (4) edge[draw=none] node[sloped]{$\subset$} (2)
    (4) edge[draw=none] node[sloped]{$\supset$} (3)
    
    (5) edge[draw=none] node[sloped]{$\supset$} (4)
    (6) edge[draw=none] node[sloped]{$\subset$} (4);

	\end{tikzpicture}
  \end{center}
\subcaption{Safety-progress hierarchy \cite{DBLP:conf/podc/MannaP89}}
\label{fig:temporal_hierarchy}
\end{subfigure}%
\begin{subfigure}[c]{0.51\columnwidth}
  \begin{center}
  \small
	\begin{tikzpicture}[x=1cm,y=0.75cm,outer sep=2pt]

    \node (padding1) at ( 0,0.3) {};
	\node (1) at ( 0, 0) {$\Delta_2$};
	\node (2) at ( 1,-1) {$\Pi_2$};
	\node (3) at (-1,-1) {$\Sigma_2$};
    \node (4) at ( 0,-2) {$\Delta_1$};
	\node (5) at ( 1,-3) {$\Pi_1$};
	\node (6) at (-1,-3) {$\Sigma_1$};
    \node (padding2) at ( 0,-3.2) {};

	\path[->]
	(2) edge node{} (1)
    (3) edge node{} (1)
    
    (4) edge node{} (2)
    (4) edge node{} (3)
    
    (5) edge node{} (4)
    (6) edge node{} (4);

	\end{tikzpicture}
  \end{center}
\subcaption{Syntactic-future hierarchy}
\label{fig:syntactic-future}
\end{subfigure}
\caption{Both hierarchies, side-by-side, indicating the correspondence of \Cref{thm:hierarchy:correspondence}}
\label{fig:hierarchies}
\end{figure}
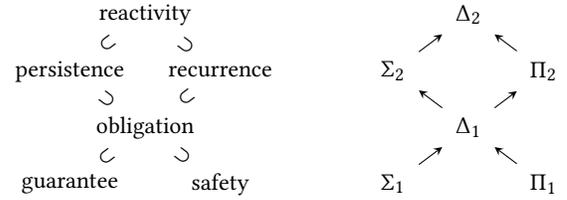

\section{Overview of the Normalisation Result}
\label{sec:overview}

\newcommand{\stablewords}{\mathcal{S}}
\newcommand{\partition}{\mathcal{P}}
\newcommand{\Univ}{\mathcal{U}}

Fix an LTL formula $\varphi$ over a set of atomic propositions $Ap$. Our new normal form is based on two notions: 
\begin{itemize}
\item A partition of the universe $\Univ\coloneqq (2^{Ap})^\omega$ of all words into equivalence classes of words that, loosely speaking, exhibit the same \enquote{limit-behaviour} with respect to $\varphi$.
\item The notion of stable word with respect to $\varphi$.
\end{itemize}

\paragraph{A partition of $\Univ$.}  Let $\sfmu(\varphi)$ and $\sfnu(\varphi)$ be the sets containing the subformulas of $\varphi$ of the form $\psi_1 \Op \psi_2$  for $\Op \in \{\U, \M\}$ and $\Op \in \{\W, \R\}$, respectively. Given a word $w$, define:
\begin{align*}
\setGF_w^{\;\varphi} & \coloneqq \{\psi \colon \psi \in \sfmu(\varphi) \wedge w \models \G\F\psi \} \\
\setFG_w^\varphi     & \coloneqq \{\psi \colon \psi \in \sfnu(\varphi) \wedge w \models \F\G\psi \}
\end{align*}
(To simplify the notation, when $\varphi$ is clear from the context we simply write $\setGF_w$ and $\setFG_w$.) Two words $w, v$ have the same \emph{limit-behaviour} w.r.t. $\varphi$ if $\setGF_w=\setGF_{v}$ and $\setFG_w=\setFG_{v}$. Having the same limit-behaviour is an equivalence relation, which induces the partition $\mathcal{P} = \{ \partition_{\setmu, \setnu} \subseteq \Univ \colon \setmu \subseteq \sfmu(\varphi), \setnu \subseteq \sfnu(\varphi)\}$ given by:
\begin{equation}
\label{eq:partition}
\partition_{\setmu, \setnu} \coloneqq \{ w \in \Univ \colon \setmu = \setGF_w \; \wedge \; \setnu = \setFG_w \} 
\end{equation} 

\begin{example}
Let $\varphi = \G a \vee b \U c$. We have $\sfmu(\varphi) = \{ b \U c\}$ and 
$\sfnu(\varphi) = \{ \G a\}$. The partition $\mathcal{P}$ has four equivalence classes: 
\begin{itemize}
\item $\partition_{\emptyset, \emptyset}$ contains all words such that $b \U c$ holds only finitely often and $\G a$ fails infinitely often (which in this case implies that $\G a$ never holds), e.g. $\{b\}^\omega$ or $\{c\}\{b\}^\omega$.
\item $\partition_{\emptyset, \{\G a\}}$ contains all words such that $b \U c$ holds finitely often and $\G a$ fails finitely often, e.g. $\{a\}^\omega$ or $\{c\}\{a\}^\omega$.
\item $\partition_{\{b \U c\}, \emptyset}$  contains all words such that $b \U c$ holds infinitely often and $\G a$ fails infinitely often, e.g. $(\{a\}\{c\})^\omega$ or $\{a\} \{c\}^\omega$.
\item $\partition_{\{b \U c\}, \{\G a\}}$ contains all words such that  $b \U c$ holds infinitely often and $\G a$ fails finitely often, e.g. $\{b \} \{a,c\}^\omega$ or $(\{a,c\} \{a\})^\omega$.
\end{itemize}
The partition is graphically shown in Figure \ref{fig:ex:partition}. The equivalence classes are shown in blue, red, yellow, and green (ignore the inner part in darker colour for the moment).
\end{example}

\definecolor{lightgreen}{RGB}{152,179,155}
\definecolor{lightblue}{RGB}{56,165,215}
\definecolor{lightred}{RGB}{225,130,133}
\definecolor{lightyellow}{RGB}{254,226,170}

\begin{figure}[btp]
  \begin{center}
  \small
\begin{tikzpicture}[xscale=4,yscale=3]

\filldraw[lightred, opacity=0.80]
  (0,0) rectangle (1,0.5);
\filldraw[lightgreen, opacity=0.80]
  (2,0) rectangle (1,0.5);
\filldraw[lightblue, opacity=0.80]
  (0,1) rectangle (1,0.5);
\filldraw[lightyellow, opacity=0.80]
  (2,1) rectangle (1,0.5);

\filldraw[black, opacity=0.3] 
  (0.5,0.25) rectangle (1.5,0.75);

\draw[thick]
  (0,0) rectangle (2,1);
\draw[]
  (0.5,0.25) rectangle (1.5,0.75);

\draw[]
  (1,0) -- (1,1);
\draw[]
  (0.0,0.5) -- (2,0.5);
 
\draw
  (0.25,0.625) node {$\{c\}\{b\}^\omega$};
\draw
  (0.75,0.625) node {$\{b\}^\omega$};

\draw
  (1.25,0.625) node {$(\{a\}\{c\})^\omega$};
\draw
  (1.75,0.625) node {$\{a\}\{c\}^\omega$};
 
\draw
  (0.25,0.375) node {$\{c\}\{a\}^\omega$};
\draw
  (0.75,0.375) node {$\{a\}^\omega$};

\draw
  (1.25,0.375) node {$(\{a,c\}\{a\})^\omega$};
\draw
  (1.75,0.375) node {$\{b\}\{a,c\}^\omega$};

\draw
  (0,0.9) node[anchor=west] {$\partition_{\emptyset, \emptyset}$}
  (0,0.1) node[anchor=west] {$\partition_{\emptyset, \{\G a\}} $}
  (2,0.9) node[anchor=east] {$\partition_{\{b \U c\},\emptyset}$}
  (2,0.1) node[anchor=east] {$\partition_{\{b \U c\},\{\G a\}} $}; 
\end{tikzpicture}
  \end{center}
  \caption{Partition of $(2^{\{a,b,c\}})^\omega$ according to $\varphi = \G a \vee b \U c$.}
  \label{fig:ex:partition}
\end{figure}

\paragraph{Stable words.} A word $w$ is \emph{stable} with respect to $\varphi$ if
every formula of $\sfmu(\varphi)$ holds either never or infinitely often along $w$ (i.e., either 
none or infinitely many of its suffixes satisfy the formula), and every formula of $\sfnu(\varphi)$
fails never or infinitely often along $w$. In particular, for a stable word no formula of $\sfmu(\varphi)$ can hold a finite, nonzero number of times before it fails forever, and no formula of $\sfnu(\varphi)$ can fail a finite, nonzero number of times before it holds forever.  It follows immediately from this definition that not every word is stable, but every word eventually stabilises, meaning that all but finitely many of its suffixes are stable. 
Let $\stablewords_\varphi$ denote the set of stable words with respect to $\varphi$. Defining 
\begin{align*}
\setF^{\;\!\varphi}_w  & \coloneqq \{\psi : \psi \in \sfmu(\varphi) \wedge w \models \F\psi \}  \\
\setG^{\varphi}_w  & \coloneqq \{\psi : \psi \in \sfnu(\varphi) \wedge w \models \G\psi \} 
\end{align*}
\noindent we easily obtain:
\begin{equation}
\label{eq:stability}
\stablewords_\varphi  \coloneqq \{ w \in \Univ : \setF^{\;\!\varphi}_w = \setGF^{\;\varphi}_w  \; \wedge \; \setG^\varphi_w = \setFG^\varphi_w \}
\end{equation}

\begin{example}
Let $\varphi = \G a \vee b \U c$. The words $\{c\}^n \{a\}^\omega$ for $n \geq 1$ are not stable w.r.t. $\varphi$, because $b \U c$ holds exactly $n$ times along the word. However, the suffix $\{a\}^\omega$ is stable.
\Cref{fig:ex:partition} represents the stable words of each element $\partition_{\setmu,\setnu}$ of the partition in darker colour, and gives examples of stable words for each class.
\end{example}

The starting point of this paper is the observation that some results of \cite{DBLP:conf/lics/EsparzaKS18,DBLP:phd/dnb/Sickert19} allow us to easily derive a normal form
for LTL, albeit only when LTL is interpreted on stable words. More precisely, in \Cref{sec:eval} we show that for every 
$\setmu \subseteq \sfmu(\varphi)$ and $\setnu \subseteq \sfnu(\varphi)$ there exist formulas $\evalnu{\varphi}{\setmu} \in \Pi_1$ and $\evalmu{\varphi}{\setnu} \in \Sigma_1$ such that:
{\small
\begin{equation}
\label{eq:stableeq}
\varphi \! \equ{\stablewords_\varphi} \!\!\! \bigvee_{\substack{\setmu \subseteq \sfmu(\varphi)\\\setnu \subseteq \sfnu(\varphi)}} \left( \evalnu{\varphi}{\setmu} \! \wedge \! \bigwedge_{\psi \in \setmu} \!\! \G\F(\evalmu{\psi}{\setnu}) \wedge \bigwedge_{\psi \in \setnu} \!\! \F\G(\evalnu{\psi}{\setmu}) \right) \\
\end{equation}
}

\noindent Further, $\evalnu{\varphi}{\setmu}$ and $\evalmu{\varphi}{\setnu}$ are obtained from $\varphi$, $\setmu$, and $\setnu$ by means of a simple, linear-time syntactic substitution procedure. Observe that the right-hand side is a formula of $\Delta_2$, and that we write $\equ{\stablewords_\varphi}$, i.e., the equivalence is only valid within the universe of stable words. In this paper we lift this restriction. In \Cref{sec:newnf} we define a formula $\flatten{\varphi}{\setmu} \in \Sigma_2$ by means of another linear-time, syntactic substitution procedure, such that:
{\small
\begin{equation}
\label{eq:generaleq}
\varphi  \equiv  \bigvee_{\substack{\setmu \subseteq \sfmu(\varphi)\\\setnu \subseteq \sfnu(\varphi)}} \left( \flatten{\varphi}{\setmu} \wedge \bigwedge_{\psi \in \setmu} \G\F(\evalmu{\psi}{\setnu}) \wedge \bigwedge_{\psi \in \setnu} \F\G(\evalnu{\psi}{\setmu}) \right)
\end{equation}
}

\begin{example}
For $\varphi = \F( a \wedge \G( b \vee \F c)) \in \Sigma_3$, the still-to-be-defined normal form (\ref{eq:stableeq}) will yield:
\[\varphi \equ{\stablewords_\varphi} (\G\F a \wedge \F\G b) \vee (\G\F a \wedge \G\F c)\] 
\noindent Indeed, since $\varphi \in \sfmu(\varphi)$, every stable word satisfying $\varphi$ must satisfy it infinitely often, and so equivalence for stable words holds, although the formulas are not equivalent. For Equation (\ref{eq:generaleq}) we will obtain:
\[\varphi \equiv \F (a \wedge ((b \vee \F c) \; \U \; \G b))  \vee (\F a \wedge \G\F c)\]
Observe that the right-hand-side belongs to $\Delta_2$.
\end{example}

\section{The Formulas $\evalnu{\varphi}{\setmu}$ and $\evalmu{\varphi}{\setnu}$}
\label{sec:eval}

We recall the definitions of the formulas $\evalnu{\varphi}{\setmu}$ and $\evalmu{\varphi}{\setnu}$, introduced in \cite{DBLP:conf/lics/EsparzaKS18,DBLP:phd/dnb/Sickert19} with a slightly different notation. 

\paragraph{The formula $\evalnu{\varphi}{\setmu}$.}  Define $\partition_\setmu \coloneqq \bigcup_{\setnu \subseteq \sfnu(\varphi)} \partition_{\setmu,\setnu}$. Observe that $\partition_\setmu$ is the language of the words $w$ such that $\setmu = \setGF_w$. The formula 
$\evalnu{\varphi}{\setmu}$ is defined with the goal of satisfying the following identity:
 \begin{equation}
 \label{eq:specmu}
 \varphi \equ{\stablewords_\varphi \cap \partition_\setmu} \evalnu{\varphi}{\setmu}
 \end{equation}
\noindent Intuitively, the identity states that within the universe of the stable words of $\partition_\setmu$, the formula $\varphi$ can be replaced by the simpler formula $\evalnu{\varphi}{\setmu}$. 

All insights required to define  $\evalnu{\varphi}{\setmu}$ are illustrated by the following examples, where we assume that $w \in \stablewords_\varphi \cap \partition_\setmu$:
\begin{itemize}
\item $\varphi = \F a \wedge \G b$ and $\setmu = \{ \F a \}$. Since $\setmu = \setGF_w$, we have  $\F a \in \setGF_w$, which implies $w \models \G\F a$. So $w \models \F a \wedge \G b$ if{}f $w \models \G b$, and so we can set $\evalnu{\varphi}{\setmu} \coloneqq \true \wedge \G b$, i.e., we can define $\evalnu{\varphi}{\setmu}$ as the result of substituting $\true$ for $\F a$ in $\varphi$. The yet-to-be-defined substitution in-fact replaces the abbreviation $\F a = \true \U a$ by $\true \W a \equiv \true$.
\item $\varphi = \F a \wedge \G b$ and $\setmu = \emptyset$. Since $\setmu = \setF_w$, we have $\F a \notin \setF_w$, and so $w \not\models \F a$. In other words, $w \models \F a \wedge \G b$ if{}f $w \models \false$, and so we can set $\evalnu{\varphi}{\setmu} \coloneqq \false \wedge \G b$.
\item $\varphi = \G(b \U c)$ and $\setmu=\{ b \U c\}$. Since $\setmu = \setGF_w$, we have  $b \U c \in \setGF_w$, and so $w \models \G\F (b \U c)$. This does not imply $w_i \models b \U c$ for all suffixes of $w$, but it implies that $c$ will hold infinitely often in the future. So $w \models \G(b \U c)$ if{}f $w \models \G(b \W c)$, and so we can define $\evalnu{\varphi}{\setmu} \coloneqq \G(b \W c)$.
\end{itemize}

\begin{definition}[\cite{DBLP:conf/lics/EsparzaKS18,DBLP:phd/dnb/Sickert19}]\label{def:evalnu}
Let $\setmu \subseteq \sfmu(\varphi)$ be a set of formulas. The formula $\evalnu{\varphi}{\setmu}$ is inductively defined as follows:
\begin{align*}
\evalnu{(\varphi \U \psi)}{\setmu} & \coloneqq \begin{cases} \evalnu{\varphi}{\setmu} \; \W \; \evalnu{\psi}{\setmu}\hphantom{\R} & \mbox{if $\varphi \U \psi \in \setmu$} \\ \false & \mbox{otherwise.}\end{cases} \\
\evalnu{(\varphi \M \psi)}{\setmu} & \coloneqq \begin{cases} \evalnu{\varphi}{\setmu} \; \R \; \evalnu{\psi}{\setmu} \hphantom{\W} & \mbox{if $\varphi \M \psi \in \setmu$} \\ \false & \mbox{otherwise.}\end{cases} 
\end{align*} %
All other cases are defined homomorphically, e.g., $\evalnu{a}{\setmu} \coloneqq$ $a$ for every $a \in Ap$, 
$\evalnu{(\X \varphi)}{\setmu} \coloneqq \X(\evalnu{\varphi}{\setmu})$, and $\evalnu{(\varphi \W \psi)}{\setmu}$ $\coloneqq $ $(\evalnu{\varphi}{\setmu}) \, \W \, (\evalnu{\psi}{\setmu})$.
\end{definition}

The following lemma, proved in \cite{DBLP:conf/lics/EsparzaKS18,DBLP:phd/dnb/Sickert19}, shows that $\evalnu{\varphi}{\setmu}$ indeed satisfies \Cref{eq:specmu}. Since the notation of \cite{DBLP:conf/lics/EsparzaKS18,DBLP:phd/dnb/Sickert19} is slightly different, we include proofs with the new notation for the cited results in 
\ifarxiv
\Cref{sec:updated_proofs} for convenience.
\else
the appendix of the extended version of this paper \cite{XXXX:technical_report} for convenience.
\fi 

\begin{restatable}[\cite{DBLP:conf/lics/EsparzaKS18,DBLP:phd/dnb/Sickert19}]{lemma}{lemEvalnu}\label{lem:evalnu}
Let $w$ be a word, and let $\setmu \subseteq \sfmu(\varphi)$ be a set of formulas.
	\begin{enumerate}
		\item If $\setF_w^{\;\!\varphi} \subseteq \setmu$ and $w \models \varphi$, then $w \models \evalnu{\varphi}{\setmu}$.
		\item If $\setmu \subseteq \setGF_w^{\;\varphi}$ and $w \models \evalnu{\varphi}{\setmu}$, then $w \models \varphi$.
	    \item $\varphi \equ{\stablewords_\varphi \cap \partition_\setmu} \evalnu{\varphi}{\setmu}$
	\end{enumerate}
\end{restatable}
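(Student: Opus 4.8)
The plan is to prove statements (1) and (2) simultaneously by structural induction on $\varphi$, quantifying over all words $w$ while keeping $\setmu$ fixed; statement (3) then falls out as a corollary. I will denote a generic formula by $\chi$ with immediate subformulas $\chi_1,\chi_2$, reading the premise of (1) as ``$w \models \F\psi$ implies $\psi \in \setmu$ for every $\psi \in \sfmu(\chi)$'' and the premise of (2) as ``$w \models \G\F\psi$ for every $\psi \in \setmu$''. Two elementary observations let me push these premises down to subformulas and suffixes: first, $w_k \models \F\psi$ implies $w \models \F\psi$, so the premise of (1) is inherited by every suffix and, together with $\sfmu(\chi_i) \subseteq \sfmu(\chi)$, by every subformula; second, $w \models \G\F\psi$ iff $w_k \models \G\F\psi$ (shift invariance of $\G\F$), so the premise of (2) is inherited verbatim by every suffix. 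These make the induction hypothesis freely applicable to the components evaluated at arbitrary suffixes.

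The atomic, Boolean, and $\X$ cases are immediate, since $\evalnu{\cdot}{\setmu}$ acts homomorphically there; one simply pushes the induction hypothesis through, using the suffix $w_1$ for $\X$. The cases $\chi = \chi_1 \W \chi_2$ and $\chi = \chi_1 \R \chi_2$ are also routine: here $\evalnu{\cdot}{\setmu}$ preserves the top operator, so after splitting $\W$ as $\G\chi_1 \vee (\chi_1 \U \chi_2)$ (and $\R$ as $\G\chi_2 \vee (\chi_1 \M \chi_2)$) both directions follow directly from the induction hypothesis applied to the relevant suffixes, without consuming either premise at the current step.

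The substance of the argument lies in the $\U$ and $\M$ cases, which I would treat by splitting on membership in $\setmu$. Take $\chi = \chi_1 \U \chi_2$. If $\chi \notin \setmu$ then $\evalnu{\chi}{\setmu} = \false$: part (2) holds vacuously, and for part (1) note that $w \models \chi$ forces $w \models \F\chi$, hence $\chi \in \setF_w^\varphi \subseteq \setmu$, contradicting $\chi \notin \setmu$, so its premise is unsatisfiable. If $\chi \in \setmu$ then $\evalnu{\chi}{\setmu} = \evalnu{\chi_1}{\setmu} \W \evalnu{\chi_2}{\setmu}$. For part (1) the passage from $\U$ to $\W$ is a weakening, so applying the induction hypothesis at the suffixes furnished by $w \models \chi_1 \U \chi_2$ yields $w \models \evalnu{\chi_1}{\setmu}\U\evalnu{\chi_2}{\setmu}$, which implies the $\W$-formula. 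The delicate direction is part (2): from $w \models \evalnu{\chi_1}{\setmu}\W\evalnu{\chi_2}{\setmu}$ I must recover the \emph{strong} until. I split the $\W$ into $\G\evalnu{\chi_1}{\setmu}$ or $\evalnu{\chi_1}{\setmu}\U\evalnu{\chi_2}{\setmu}$. In the latter case the induction hypothesis for (2) immediately gives $w \models \chi_1 \U \chi_2$. In the former it gives only $w \models \G\chi_1$, and this is where the premise is essential: since $\chi \in \setmu$ we have $w \models \G\F\chi$, and because $\chi_1 \U \chi_2 \models \F\chi_2$ this forces $\F\chi_2$, providing a position at which $\chi_2$ holds and thereby upgrading $\G\chi_1$ to $\chi_1 \U \chi_2$. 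The case $\chi = \chi_1 \M \chi_2$ is entirely analogous with $\R$ in place of $\W$; in the $\G$-branch there the premise $w \models \G\F\chi$ supplies $\F\chi_1$ to witness the existential requirement of $\M$. I expect this recovery of the strong operator from its weak replacement, powered by the $\G\F$-premise, to be the only genuinely non-mechanical step.

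Finally, for part (3) I would fix $w \in \stablewords_\varphi \cap \partition_\setmu$. Membership in $\partition_\setmu$ gives $\setGF_w = \setmu$, and stability gives $\setF_w = \setGF_w$, whence $\setF_w = \setmu = \setGF_w$. In particular both $\setF_w^\varphi \subseteq \setmu$ and $\setmu \subseteq \setGF_w^\varphi$ hold, so parts (1) and (2) apply and give $w \models \varphi$ iff $w \models \evalnu{\varphi}{\setmu}$, which is exactly $\varphi \equ{\stablewords_\varphi \cap \partition_\setmu} \evalnu{\varphi}{\setmu}$.
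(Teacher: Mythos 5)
Your proposal is correct and follows essentially the same route as the paper's proof: a structural induction on $\varphi$ with the statement strengthened to hold at all suffixes, where the only non-homomorphic work is in the $\U$/$\M$ cases — the $\setF_w \subseteq \setmu$ premise rules out the $\false$ branch in part (1), and the $\setmu \subseteq \setGF_w$ premise supplies the eventuality ($\F\chi_2$, resp.\ $\F\chi_1$) needed to upgrade the weak operator back to the strong one in part (2), with (3) as an immediate corollary. The only cosmetic difference is that you split $\W$ into $\G \vee \U$ where the paper unfolds the semantics directly; these are equivalent.
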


Observe that the first two statements do not assume that $w$ is stable. This is an aspect we will later make use of for the definition of the normalisation procedure.

\paragraph{The formula $\evalmu{\varphi}{\setnu}$.} Let $\partition_\setnu \coloneqq \bigcup_{\setmu \subseteq \sfmu(\varphi)} \partition_{\setmu,\setnu}$. The formula $\evalmu{\varphi}{\setnu}$ is designed to satisfy 
\begin{equation}\label{eq:specnu}
\varphi \equiv^{\stablewords_\varphi \cap \partition_\setnu} \evalmu{\varphi}{\setnu}
\end{equation}
\noindent and its definition is completely dual to that of $\evalnu{\varphi}{\setmu}$.
\begin{definition}[\cite{DBLP:conf/lics/EsparzaKS18,DBLP:phd/dnb/Sickert19}]\label{def:evalmu}
Let $\setnu \subseteq \sfnu(\varphi)$ be a set of formulas. The formula $\evalmu{\varphi}{\setnu}$ is inductively defined as follows:
\begin{align*}
\evalmu{(\varphi \R \psi)}{\setnu} & = \begin{cases} \true & \mbox{if $\varphi \R \psi \in \setnu$} \\ \evalmu{\varphi}{\setnu} \; \M \; \evalmu{\psi}{\setnu} \hphantom{\U} & \mbox{otherwise.}\end{cases} \\
\evalmu{(\varphi \W \psi)}{\setnu} & = \begin{cases} \true & \mbox{if $\varphi \W \psi \in \setnu$} \\ \evalmu{\varphi}{\setnu} \; \U \; \evalmu{\psi}{\setnu} \hphantom{\M} & \mbox{otherwise.}\end{cases}
\end{align*} %
All other cases are defined homomorphically.
\end{definition}

\noindent The dual of \Cref{lem:evalnu} also holds:

\begin{restatable}[\cite{DBLP:conf/lics/EsparzaKS18,DBLP:phd/dnb/Sickert19}]{lemma}{lemEvalmu}\label{lem:evalmu}
Let $w$ be a word, and let $\setnu \subseteq \sfnu(\varphi)$ be a set of formulas.
	\begin{enumerate}[resume]
		\item If $\setFG_w^{\,\varphi} \subseteq \setnu$ and $w \models \varphi$, then $w \models \evalmu{\varphi}{\setnu}$.
		\item If $\setnu \subseteq \setG_w^{\varphi}$ and  $w \models \evalmu{\varphi}{\setnu}$, then $w \models \varphi$.
		\item $\varphi \equ{\stablewords_\varphi \cap \partition_\setnu} \evalmu{\varphi}{\setnu}$
	\end{enumerate}
\end{restatable}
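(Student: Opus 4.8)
The plan is to prove statements (1) and (2) by structural induction with $\setnu$ held fixed, and then read off statement (3). Since the induction hypothesis must be applied to proper subformulas, I would prove the slightly more general claims in which $\varphi$ is replaced by an arbitrary formula $\chi$: for (1) the hypothesis becomes $\setFG_w^\chi \subseteq \setnu$, and for (2) I would use the suffix-stable phrasing ``$w \models \G\psi$ for every $\psi \in \setnu$'' (which for $\chi = \varphi$ is equivalent to $\setnu \subseteq \setG_w^\varphi$ because $\setnu \subseteq \sfnu(\varphi)$). The whole argument is dual to that of \Cref{lem:evalnu}, exchanging $\U,\M$ with $\W,\R$, the modality $\G\F$ with $\F\G$, and $\F$ with $\G$. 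Two routine facts keep the hypotheses intact along the induction: $\F\G\psi$ is suffix-invariant ($w \models \F\G\psi$ iff $w_i \models \F\G\psi$), and $\G\psi$ is preserved under taking suffixes ($w \models \G\psi$ implies $w_i \models \G\psi$). Together with $\sfnu(\chi') \subseteq \sfnu(\chi)$ for every subformula $\chi'$ of $\chi$, the former yields $\setFG_{w_i}^{\chi'} \subseteq \setFG_w^{\chi} \subseteq \setnu$, so the hypothesis of (1) descends to subformulas and suffixes, and the latter does the same for the hypothesis of (2).

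For statement (1), the base cases, the boolean cases, and the cases $\X$, $\U$, $\M$ are homomorphic: $\evalmu{\cdot}{\setnu}$ leaves the top operator unchanged, so I would unfold the semantics, apply the induction hypothesis at the relevant suffixes, and fold back. The only genuine work is in $\chi = \chi_1 \W \chi_2$ and $\chi = \chi_1 \R \chi_2$. If $\chi \in \setnu$ then $\evalmu{\chi}{\setnu} = \true$ and there is nothing to prove. If $\chi \notin \setnu$, the substitution strengthens the operator ($\W \mapsto \U$, $\R \mapsto \M$), so I must show that the \emph{strong} version already holds on $w$. This is the crux: from $\chi \in \sfnu(\chi)$, $\chi \notin \setnu$, and $\setFG_w^\chi \subseteq \setnu$ I deduce $w \not\models \F\G\chi$; since $w \models \chi_1 \W \chi_2$ unfolds to $\G\chi_1 \vee (\chi_1 \U \chi_2)$ and the disjunct $\G\chi_1$ would force $w \models \G(\chi_1 \W \chi_2)$ and hence $w \models \F\G\chi$, that disjunct is excluded and $w \models \chi_1 \U \chi_2$ must hold; the induction hypothesis applied along the witnessing positions then gives $w \models \evalmu{\chi_1}{\setnu} \U \evalmu{\chi_2}{\setnu}$. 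The $\R$ case is identical with $\G\chi_2$ and $\M$ in place of $\G\chi_1$ and $\U$.

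For statement (2), the homomorphic cases again reduce directly to the induction hypothesis. In the cases $\chi = \chi_1 \W \chi_2$ and $\chi = \chi_1 \R \chi_2$ with $\chi \in \setnu$, the substituted formula is $\true$, so $w \models \evalmu{\chi}{\setnu}$ is vacuous; here I instead \emph{use} the standing assumption, namely $\chi \in \setnu$ gives $w \models \G\chi$, hence $w \models \chi$, which is exactly $w \models \chi_1 \W \chi_2$ (resp.\ $\chi_1 \R \chi_2$). When $\chi \notin \setnu$ the substituted formula carries the strong operator, and from $w \models \evalmu{\chi_1}{\setnu} \U \evalmu{\chi_2}{\setnu}$ (resp.\ $\M$) and the induction hypothesis I obtain $w \models \chi_1 \U \chi_2$ (resp.\ $\chi_1 \M \chi_2$), which implies the weaker $w \models \chi_1 \W \chi_2$ (resp.\ $\chi_1 \R \chi_2$).

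Finally, statement (3) follows without further induction. For $w \in \stablewords_\varphi \cap \partition_\setnu$ we have $\setFG_w^\varphi = \setnu$ by definition of $\partition_\setnu$, and $\setG_w^\varphi = \setFG_w^\varphi = \setnu$ by the stability identity \Cref{eq:stability}. Thus both $\setFG_w^\varphi \subseteq \setnu$ and $\setnu \subseteq \setG_w^\varphi$ hold, so (1) and (2) give $w \models \varphi \iff w \models \evalmu{\varphi}{\setnu}$, which is precisely $\varphi \equ{\stablewords_\varphi \cap \partition_\setnu} \evalmu{\varphi}{\setnu}$. The main obstacle is the single step highlighted in statement (1): recognising that membership of $\chi$ in $\setnu$ is exactly what controls whether the weak operator can safely be strengthened to its strong counterpart. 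Everything else is bookkeeping that mirrors the proof of \Cref{lem:evalnu}.
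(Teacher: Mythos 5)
Your proposal is correct and follows essentially the same route as the paper's proof: a structural induction on the formula (strengthened to hold on all suffixes), with the only non-homomorphic work in the $\W$/$\R$ cases, where for part (1) the hypothesis $\setFG_w \subseteq \setnu$ together with $\chi \notin \setnu$ rules out the $\G$-disjunct and licenses strengthening the weak operator, and for part (2) membership in $\setnu$ supplies $w \models \G\chi$ directly while the strong operator trivially implies the weak one. Part (3) is read off from (1) and (2) exactly as in the paper.
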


\paragraph{A normal form for stable words.} We use the following result from \cite{DBLP:conf/lics/EsparzaKS18,DBLP:phd/dnb/Sickert19} to characterise the stable words of a partition $\partition_{M,N}$ that satisfy $\varphi$:

\begin{restatable}[\cite{DBLP:conf/lics/EsparzaKS18,DBLP:phd/dnb/Sickert19}]{lemma}{lemInduction}\label{lem:evalFG:setFG1}
Let $w$ be a word, and let $\setmu \subseteq \sfmu(\varphi)$ and $\setnu \subseteq \sfnu(\varphi)$. Then define:
\[\Phi(\setmu, \setnu) \coloneqq \bigwedge_{\psi \in \setmu} \G\F(\evalmu{\psi}{\setnu}) \wedge \bigwedge_{\psi \in \setnu} \F\G(\evalnu{\psi}{\setmu})\]
\noindent We have:
\begin{enumerate}
\item If $\setmu = \setGF_w$ and $\setnu = \setFG_w$, then $w \models \Phi(\setmu, \setnu)$.
\item If $w \models \Phi(\setmu, \setnu)$, then $\setmu \subseteq \setGF_w$ and $\setnu \subseteq \setFG_w$.
\end{enumerate}
\end{restatable}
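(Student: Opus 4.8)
The plan is to establish the two statements by rather different means. Statement~(1) follows directly from the first parts of \Cref{lem:evalnu} and \Cref{lem:evalmu} together with two elementary facts about the sets $\setGF_w$, $\setFG_w$ and $\setF_{w_i}$, whereas statement~(2) conceals a circular dependency between its two conclusions that I would resolve by an induction on the subformula order of $\varphi$.

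For statement~(1) I would first record that $\setGF$ and $\setFG$ are prefix-independent, i.e.\ $\setGF_{w_i}=\setGF_w$ and $\setFG_{w_i}=\setFG_w$ for every suffix $w_i$, since $\G\F$ and $\F\G$ are insensitive to finite prefixes. Fix $\psi\in\setmu=\setGF_w$, so $w\models\G\F\psi$ and there are infinitely many $i$ with $w_i\models\psi$; for each such $i$ we have $\setFG_{w_i}=\setnu$, and the first part of \Cref{lem:evalmu} gives $w_i\models\evalmu{\psi}{\setnu}$, whence $w\models\G\F(\evalmu{\psi}{\setnu})$. For the second conjunct fix $\psi\in\setnu=\setFG_w$, so $w_i\models\psi$ for all large $i$. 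Here I would use that the family $(\setF_{w_i})_i$ is non-increasing in $i$ with $\bigcap_i\setF_{w_i}=\setGF_w=\setmu$, so that $\setF_{w_i}\subseteq\setmu$ for all large $i$; combined with $w_i\models\psi$, the first part of \Cref{lem:evalnu} yields $w_i\models\evalnu{\psi}{\setmu}$ for all large $i$, i.e.\ $w\models\F\G(\evalnu{\psi}{\setmu})$.

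For statement~(2) the temptation is to read the two inclusions off the two halves of $\Phi$ using the second parts of \Cref{lem:evalmu} and \Cref{lem:evalnu}. The difficulty---and the heart of the lemma---is that these parts carry side conditions of the opposite polarity: to turn $w\models\G\F(\evalmu{\psi}{\setnu})$ into $\psi\in\setGF_w$ one must supply $\setnu\subseteq\setG_{w_i}$ at the relevant positions, which is tantamount to the inclusion $\setnu\subseteq\setFG_w$ one is after; symmetrically, obtaining $\setnu\subseteq\setFG_w$ would require $\setmu\subseteq\setGF_w$. So the two inclusions look mutually dependent.

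I would break this circularity by proving both inclusions simultaneously by well-founded induction on the subformula order, exploiting that $\evalmu{\psi}{\setnu}$ depends only on the $\W/\R$-subformulas of $\psi$ lying in $\setnu$, all of which are \emph{proper} subformulas of $\psi$ since $\psi$ is a $\U/\M$-formula (dually for $\evalnu{\chi}{\setmu}$). Concretely, for $\psi\in\setmu$ put $\setnu'\coloneqq\{\chi\in\setnu:\chi\text{ is a subformula of }\psi\}$, so that $\evalmu{\psi}{\setnu}=\evalmu{\psi}{\setnu'}$; the induction hypothesis gives $\chi\in\setFG_w$ for each $\chi\in\setnu'$, hence $\setnu'\subseteq\setG_{w_i}$ for all large $i$, and applying the second part of \Cref{lem:evalmu} to $\psi$ with the parameter set $\setnu'$ at the infinitely many large positions where $w_i\models\evalmu{\psi}{\setnu'}$ yields $\psi\in\setGF_w$. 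The case $\chi\in\setnu$ is dual, using $\setmu'\coloneqq\{\psi\in\setmu:\psi\text{ is a subformula of }\chi\}$, the identity $\evalnu{\chi}{\setmu}=\evalnu{\chi}{\setmu'}$, prefix-independence of $\setGF$, and the second part of \Cref{lem:evalnu}; the base cases $\setnu'=\emptyset$ and $\setmu'=\emptyset$ are the same arguments read vacuously. I expect the only genuinely load-bearing step to be this reduction from the full parameter sets to their proper-subformula restrictions $\setnu'$ and $\setmu'$; the remainder is routine.
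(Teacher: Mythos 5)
Your proposal is correct and follows essentially the same route as the paper: part (1) via the first parts of \Cref{lem:evalnu} and \Cref{lem:evalmu} together with prefix-independence of $\setGF_w$, $\setFG_w$ and the eventual stabilisation $\setF_{w_i} = \setGF_w$, and part (2) by breaking the mutual dependency through an induction along the subformula order using the restriction identities $\evalmu{\psi}{\setnu}=\evalmu{\psi}{\setnu'}$ and $\evalnu{\chi}{\setmu}=\evalnu{\chi}{\setmu'}$. The only difference is presentational: the paper runs a linear induction over an enumeration of $\setmu\cup\setnu$ compatible with the subformula order, building up the pairs $(\setmu_i,\setnu_i)$ incrementally, whereas you phrase it as a direct well-founded induction with restriction to proper subformulas --- the load-bearing step you identify is exactly the paper's ``Claim 1''.
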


Equipped with this lemma, let us show that a stable word of $\partition_{\setmu,\setnu}$  satisfies $\varphi$ if{}f it satisfies $\evalnu{\varphi}{\setmu} \wedge \Phi(\setmu,\setnu)$. Let $w$ be a stable word of $\partition_{\setmu,\setnu}$. If $w$ satisfies $\varphi$, then it satisfies $\evalnu{\varphi}{\setmu}$ by \Cref{lem:evalnu}.3 and $\Phi(\setmu,\setnu)$ by \Cref{lem:evalFG:setFG1}.1 (recall that, since $w \in \partition_{\setmu,\setnu}$, we have
$\setmu =\setGF_w$ and $\setnu =\setGF_w$ by Equation (\ref{eq:partition})). For the other direction, assume that
$w$ satisfies $\evalnu{\varphi}{M} \wedge \Phi(\setmu,\setnu)$. Then we have 
$\setmu \subseteq \setGF_w$ by \Cref{lem:evalFG:setFG1}.2 and so $w$ satisfies $\varphi$
by \Cref{lem:evalnu}.2. (This direction does not even require stability.)

Since every word belongs to some element of the partition, we obtain a normal form for stable words:

\begin{proposition}
\label{prop:nf1}
\[
\varphi \! \equ{\stablewords_\varphi} \!\!\! \bigvee_{\substack{\setmu \subseteq \sfmu(\varphi)\\\setnu \subseteq \sfnu(\varphi)}} \left( \evalnu{\varphi}{\setmu} \! \wedge \! \bigwedge_{\psi \in \setmu} \!\! \G\F(\evalmu{\psi}{\setnu}) \wedge \bigwedge_{\psi \in \setnu} \!\! \F\G(\evalnu{\psi}{\setmu}) \right) 
\]
\end{proposition}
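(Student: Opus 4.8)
The plan is to lift the per-class equivalence already established in the paragraph preceding the statement to a global equivalence within $\stablewords_\varphi$, using the fact that the partition $\mathcal{P}$ is exhaustive: every word $w$ determines the sets $\setGF_w$ and $\setFG_w$, and hence lies in exactly one class $\partition_{\setmu,\setnu}$, namely the one with $\setmu = \setGF_w$ and $\setnu = \setFG_w$. I would prove the two inclusions of languages (each intersected with $\stablewords_\varphi$) separately, reusing the lemmas verbatim.

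For the forward direction I would take a stable word $w$ with $w \models \varphi$ and set $\setmu_0 \coloneqq \setGF_w$ and $\setnu_0 \coloneqq \setFG_w$, so that $w \in \partition_{\setmu_0,\setnu_0}$. Then $w \models \evalnu{\varphi}{\setmu_0}$ by \Cref{lem:evalnu}.3, whose equivalence holds within $\stablewords_\varphi \cap \partition_{\setmu_0}$ and which contains $w$; and $w \models \Phi(\setmu_0,\setnu_0)$ by \Cref{lem:evalFG:setFG1}.1. Hence $w$ satisfies the single disjunct indexed by $(\setmu_0,\setnu_0)$, and therefore the whole disjunction.

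For the backward direction I would take a stable word $w$ satisfying some disjunct, say the one indexed by $(\setmu_1,\setnu_1)$, so that $w \models \evalnu{\varphi}{\setmu_1}$ and $w \models \Phi(\setmu_1,\setnu_1)$. From \Cref{lem:evalFG:setFG1}.2 the latter yields $\setmu_1 \subseteq \setGF_w$, and then \Cref{lem:evalnu}.2, applied with this inclusion together with $w \models \evalnu{\varphi}{\setmu_1}$, gives $w \models \varphi$. Note that this direction uses neither stability nor any knowledge of which class $w$ inhabits; it relies only on the one-sided pairing between $\evalnu{\varphi}{\cdot}$ and $\setGF_w$.

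The argument is essentially a packaging of the already-proven fact that a stable word $w \in \partition_{\setmu,\setnu}$ satisfies $\varphi$ if and only if it satisfies $\evalnu{\varphi}{\setmu} \wedge \Phi(\setmu,\setnu)$, so there is no genuine obstacle. The only point requiring care is the asymmetry between the two directions: in the forward direction I must select the disjunct matching $w$'s own class, since an arbitrary disjunct need not be satisfied, whereas in the backward direction any satisfied disjunct suffices, because \Cref{lem:evalnu}.2 demands only the inclusion $\setmu_1 \subseteq \setGF_w$ rather than equality.
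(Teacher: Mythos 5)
Your proposal is correct and follows essentially the same route as the paper: pick the disjunct indexed by $(\setGF_w,\setFG_w)$ for the forward direction and use \Cref{lem:evalFG:setFG1}.1, then use \Cref{lem:evalFG:setFG1}.2 together with \Cref{lem:evalnu}.2 for the backward direction. The only cosmetic difference is that you invoke \Cref{lem:evalnu}.3 where the paper invokes \Cref{lem:evalnu}.1 together with the stability identity $\setF_w = \setGF_w$, which is exactly what \Cref{lem:evalnu}.3 packages.
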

\begin{proof}
Define $\Phi(\setmu, \setnu)$ as in \Cref{lem:evalFG:setFG1} and let $w \in \stablewords_\varphi$ be a stable word. We show that $w$ satisfies $\varphi$ if{}f it satisfies $\evalnu{\varphi}{\setmu}$ and $\Phi(\setmu, \setnu)$ for some $\setmu \subseteq \sfmu(\varphi)$ and $\setnu \subseteq \sfnu(\varphi)$.

Assume $w \models \varphi$. Let $M\coloneqq\setGF_w$ and $N\coloneqq\setFG_w$.
By \Cref{lem:evalFG:setFG1}.1 $w \models \Phi(M,N)$ holds. 
Since $w$ is stable, we have $\setF_w = \setGF_w = M$ (see \Cref{eq:stability}). 
By \Cref{lem:evalnu}.1 we have $w \models \evalnu{\varphi}{M}$, and we are done.

Assume $w \models \left(\evalnu{\varphi}{\setmu} \wedge \Phi(\setmu, \setnu)\right)$ for some $\setmu \subseteq \sfmu(\varphi)$ and $\setnu \subseteq \sfnu(\varphi)$. Using the second part of \Cref{lem:evalFG:setFG1} we get $\setmu \subseteq \setGF_w$. Applying \Cref{lem:evalnu}.2 we get $w \models \varphi$.
\end{proof}

\begin{example}
\label{ex:oldnf}
Let $\varphi = \F( a \wedge \G( b \vee \F c))$. 
We have $\sfmu(\varphi) = \{ \varphi, \F c\}$ and $\sfnu(\varphi) = \{\G( b \vee \F c)\}$. So there are four possible choices for $\setmu$, and two for $\setnu$. It follows that the right-hand-side of \Cref{prop:nf1} has eight disjuncts. However, all disjuncts with $\varphi \notin \setmu$ are equivalent to $\false$ because then $\evalnu{\varphi}{\setmu} = \false$, and the same holds for all disjuncts with $\varphi \in \setmu$ and $\setnu = \emptyset$ because $\evalmu{\varphi}{\emptyset} = \false$.

The two remaining disjuncts are $\setmu_1 = \{\varphi\}$, $\setnu_1 = \{\G( b \vee \F c)\}$, and 
$\setmu_2 = \{\varphi, \F c\}$, $\setnu_2 =\{\G( b \vee \F c)\}$. For both we have $\evalnu{\varphi}{\setmu_1} \equiv \evalnu{\varphi}{\setmu_2} \equiv \true$. Further, for the first disjunct we have
\[\G\F (\evalmu{\varphi}{\setnu_1}) \wedge \F\G (\evalnu{(\G( b \vee \F c))}{\setmu_1}) \equiv \G\F a \wedge \F\G b\]
\noindent and for the second we get
\begin{align*}
          & \G\F (\evalmu{\varphi}{\setnu_2}) \wedge \G\F (\evalmu{(\F c)}{\setnu_2}) \wedge \F\G (\evalnu{(\G( b \vee \F c))}{\setmu_2})  \\
\equiv \; & \G\F a \wedge \G\F c \wedge \F\G (\G \true) \equiv \G\F a \wedge \G\F c \ .
\end{align*}
\noindent Together we obtain  $\; \F( a \wedge \G( b \vee \F c)) \equ{\stablewords_\varphi} \G\F a \wedge (\F\G b \vee \G\F c)$.
\end{example}
\renewcommand{\flat}[2]{#1\langle #2 \rangle}

\section{A Normal Form for LTL}
\label{sec:newnf}
\Cref{prop:nf1} has little interest in itself because of the restriction to stable words. 
However, it serves as the starting point for our search for an unrestricted normal form, valid for all words. 
Observe that \Cref{lem:evalFG:setFG1} does not depend on $w$ being stable. Contrary, \Cref{lem:evalnu}.1 refers to $\setF_w$ and we crucially depend on stability to replace it by $\setGF_w$.
Consequently, we only need to find a replacement for the first conjunct and can leave the rest of the structure, i.e. the enumeration of all possible combinations of $\setmu \subseteq \sfmu(\varphi)$ and  $\Phi(M, N)$, unchanged.
More precisely, we search for a mapping $\flat{\varphi}{\cdot}$ that assigns to every $\setmu \subseteq \sfmu(\varphi)$ a formula $\flat{\varphi}{\setmu} \in \Sigma_2$ such that:
{\small
\begin{equation}
\label{eq:goal}
\varphi \equiv \bigvee_{\substack{\setmu \subseteq \sfmu(\varphi)\\\setnu \subseteq \sfnu(\varphi)}} \left( \flat{\varphi}{\setmu} \wedge \bigwedge_{\psi \in \setmu} \G\F(\evalmu{\psi}{\setnu}) \wedge \bigwedge_{\psi \in \setnu} \F\G(\evalnu{\psi}{\setmu}) \right)
\end{equation}}

\noindent
The following lemma gives sufficient conditions for $\flat{\varphi}{\setmu}$.

\begin{lemma}\label{lem:ltl_equiv}
For every $M \subseteq \sfmu(\varphi)$, let $\flat{\varphi}{M}$ be a formula satisfying:
\begin{itemize}
\item[(a)] For every $M' \subseteq \sfmu(\varphi)$: 
$M \subseteq M' \implies \flat{\varphi}{M} \models \flat{\varphi}{M'}$
\item[(b)] For every word $w$: 
$w \models \varphi \iff w \models \flat{\varphi}{\setGF^{\;\varphi}_w}$
\end{itemize}
Then \Cref{eq:goal} holds. 
\end{lemma}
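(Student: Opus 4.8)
The plan is to establish the two language inclusions of \Cref{eq:goal} separately, letting condition (b) do the work at the parameter value $\setGF_w$ and using condition (a) to close the gap whenever the disjunction on the right-hand side only supplies a proper subset of $\setGF_w$. Throughout, write $\Phi(\setmu,\setnu)$ as in \Cref{lem:evalFG:setFG1} and let $\text{RHS}$ denote the right-hand side of \Cref{eq:goal}; note that for any word $w$ the pair $(\setGF_w, \setFG_w)$ is a legitimate index of the disjunction, since $\setGF_w \subseteq \sfmu(\varphi)$ and $\setFG_w \subseteq \sfnu(\varphi)$ by definition.

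For the inclusion $\lang(\varphi) \subseteq \lang(\text{RHS})$, I would fix a word $w$ with $w \models \varphi$ and instantiate the disjunction at the distinguished pair $(\setGF_w, \setFG_w)$. Condition (b) yields $w \models \flat{\varphi}{\setGF_w}$ immediately, while \Cref{lem:evalFG:setFG1}.1 yields $w \models \Phi(\setGF_w, \setFG_w)$. Hence $w$ satisfies this single disjunct, and therefore $w \models \text{RHS}$.

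For the converse inclusion, I would take a word $w \models \text{RHS}$, so that $w \models \flat{\varphi}{\setmu} \wedge \Phi(\setmu,\setnu)$ for some admissible pair $(\setmu,\setnu)$. From $w \models \Phi(\setmu,\setnu)$ and \Cref{lem:evalFG:setFG1}.2 I obtain $\setmu \subseteq \setGF_w$. Applying condition (a) to this inclusion (with $\setmu$ and $\setGF_w$ in the roles of $M$ and $M'$) gives $\flat{\varphi}{\setmu} \models \flat{\varphi}{\setGF_w}$, so from $w \models \flat{\varphi}{\setmu}$ I conclude $w \models \flat{\varphi}{\setGF_w}$, and condition (b) then gives $w \models \varphi$. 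Combining the two inclusions yields $\varphi \equiv \text{RHS}$, which is \Cref{eq:goal}.

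I do not expect any genuine obstacle in the computation; the single point requiring care is the backward direction, where the chosen disjunct only guarantees $\setmu \subseteq \setGF_w$ rather than equality. Condition (b) cannot be applied to $\setmu$ directly, and the whole purpose of the monotonicity assumption (a) is to lift $\flat{\varphi}{\setmu}$ up to $\flat{\varphi}{\setGF_w}$ so that (b) becomes available. Recognising that (a) is precisely the property repairing this mismatch is the conceptual heart of the otherwise routine argument.
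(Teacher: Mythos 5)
Your proof is correct and follows essentially the same route as the paper's: the forward direction instantiates the disjunction at $(\setGF^{\;\varphi}_w, \setFG^\varphi_w)$ using (b) and \Cref{lem:evalFG:setFG1}.1, and the backward direction uses \Cref{lem:evalFG:setFG1}.2 to get $\setmu \subseteq \setGF^{\;\varphi}_w$, then monotonicity (a) to lift to $\flat{\varphi}{\setGF^{\;\varphi}_w}$, then (b). Your closing remark correctly identifies the role of (a) as bridging the gap between $\setmu \subseteq \setGF^{\;\varphi}_w$ and equality, which is exactly how the paper uses it.
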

\begin{proof}
Assume that $(a,b)$ hold, and let $w$ be a word. 
We show that $w$ satisfies $\varphi$ if{}f it satisfies the 
right-hand-side of (\ref{eq:goal}).

\smallskip

\noindent ($\Rightarrow$) Assume $w$ satisfies $\varphi$. 
By (b) we have $w \models \flat{\varphi}{\setGF_w}$.
We claim that the disjunct of the right-hand-side of \Cref{eq:goal} with
$\setmu \coloneqq \setGF_w$ and $\setnu \coloneqq \setFG_w$ holds. 
Indeed, $w \models \flat{\varphi}{\setmu}$ trivially holds, and the rest follows from \Cref{lem:evalFG:setFG1}.1. 

\smallskip

\noindent ($\Leftarrow$) Assume $w$ satisfies the right-hand side of \Cref{eq:goal}. Then there exist $\setmu \subseteq \sfmu(\varphi)$ and $\setnu \subseteq \sfnu(\varphi)$ such that $w \models \flat{\varphi}{\setmu}$ holds, $w \models \G\F(\evalmu{\psi}{\setnu})$ holds for every $\psi \in \setmu$, and $w \models \F\G(\evalnu{\psi}{\setmu})$ holds for every $\psi \in \setnu$. \Cref{lem:evalFG:setFG1}.2 yields $\setmu \subseteq \setGF_w$, and (a) yields $\flat{\varphi}{\setGF_w}$.  Applying (b) we get $w \models \varphi$.
\end{proof}

Note that \Cref{lem:ltl_equiv} can also be dualised and we could search for a mapping $\flat{\varphi}{\cdot}$ that assigns to every $\setnu \subseteq \sfnu(\varphi)$ a formula $\flat{\varphi}{\setnu} \in \Pi_2$ such that \Cref{eq:goal} holds. 

Unfortunately we cannot simply take $\flat{\varphi}{M} \coloneqq \evalnu{\varphi}{M}$ or $\flat{\varphi}{N} \coloneqq \evalmu{\varphi}{N}$: Both choices satisfy condition (a) of Lemma \ref{lem:ltl_equiv}, as proven by \Cref{lem:propsofevalnu}\footnote{This lemma is needed again for the proof of \Cref{thm:logical:flatten}.}, but fail to satisfy condition (b) as shown by \Cref{ex:notusable}.

\begin{lemma}
\label{lem:propsofevalnu} $\evalnu{\varphi}{\cdot}$ and $\evalmu{\varphi}{\cdot}$ have the following properties: For every $M,M' \subseteq \sfmu(\varphi)$ and $N,N' \subseteq \sfnu(\varphi)$: 
\begin{align*}
M \subseteq M' & \implies \evalnu{\varphi}{M} \models \evalnu{\varphi}{M'} \\ 
N \subseteq N' & \implies \evalmu{\varphi}{N} \models \evalmu{\varphi}{N'}
\end{align*}
\end{lemma}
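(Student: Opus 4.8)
The plan is to prove both implications by a single structural induction on the formula $\varphi$, since the two statements are exactly dual; I describe the argument for $\evalnu{\varphi}{\cdot}$, the case of $\evalmu{\varphi}{\cdot}$ being symmetric. The backbone of the argument is the \emph{monotonicity of the LTL connectives in negation normal form}: because the syntax of \Cref{def:ltlsyntax} admits negation only directly in front of atomic propositions, every unary and binary operator is monotone with respect to $\models$ in each of its arguments. Concretely, if $\alpha \models \alpha'$ and $\beta \models \beta'$, then $\alpha \wedge \beta \models \alpha' \wedge \beta'$, $\alpha \vee \beta \models \alpha' \vee \beta'$, $\X\alpha \models \X\alpha'$, $\alpha \W \beta \models \alpha' \W \beta'$, and likewise for $\U$, $\R$, and $\M$. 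Each of these follows immediately from the satisfaction clauses of \Cref{def:ltlsemantics}, since no clause inverts an inclusion, so I would record it as a one-line auxiliary observation and then use it freely.

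Now fix $M \subseteq M' \subseteq \sfmu(\varphi)$ and induct on the structure of $\varphi$. For the base cases $\true$, $\false$, $a$, and $\neg a$, the substitution of \Cref{def:evalnu} leaves the formula unchanged, so $\evalnu{\cdot}{M}$ and $\evalnu{\cdot}{M'}$ coincide and the implication is trivial. For the homomorphic connectives $\wedge$, $\vee$, $\X$, $\W$, and $\R$, the substitution commutes with the operator, so that e.g. $\evalnu{(\psi_1 \W \psi_2)}{M} = \evalnu{\psi_1}{M} \W \evalnu{\psi_2}{M}$; the induction hypothesis gives $\evalnu{\psi_i}{M} \models \evalnu{\psi_i}{M'}$ for each argument, and monotonicity of the corresponding operator lifts this to the whole formula.

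The only cases that actually use the hypothesis $M \subseteq M'$ are $\U$ and $\M$, and they are handled by a simple case split. Take $\psi_1 \U \psi_2$. If $\psi_1 \U \psi_2 \notin M$, then $\evalnu{(\psi_1 \U \psi_2)}{M} = \false$, which entails anything, so the implication holds regardless of the right-hand side. If $\psi_1 \U \psi_2 \in M$, then $M \subseteq M'$ forces $\psi_1 \U \psi_2 \in M'$ as well, so both substitutions take the nontrivial branch, yielding $\evalnu{\psi_1}{M} \W \evalnu{\psi_2}{M}$ and $\evalnu{\psi_1}{M'} \W \evalnu{\psi_2}{M'}$ respectively; the two induction hypotheses together with monotonicity of $\W$ then give the desired inclusion. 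The $\M$ case is identical with $\R$ in place of $\W$.

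I do not expect any genuine obstacle here: the single point worth stating carefully is the monotonicity of the NNF connectives, which is precisely what powers both the homomorphic cases and the ``in-$M$'' branch of the fixpoint cases; everything else is bookkeeping over the clauses of \Cref{def:evalnu,def:evalmu}. The dual statement for $\evalmu{\varphi}{\cdot}$ runs verbatim with the roles of $\U/\M$ and $\W/\R$ exchanged and $\false$ replaced by $\true$, so that there the trivial branch is the one where the subformula lies in $N'$, giving $\evalmu{\cdot}{N'} = \true$, which is entailed by everything.
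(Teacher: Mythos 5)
Your proposal is correct and follows essentially the same route as the paper: a structural induction on $\varphi$ in which the homomorphic cases follow from monotonicity of the NNF connectives and the only cases using $M \subseteq M'$ (resp.\ $N \subseteq N'$) are $\U,\M$ (resp.\ $\W,\R$), split on membership of the subformula in the relevant set. Your explicit isolation of the monotonicity observation is merely a cleaner packaging of the step the paper phrases as ``applying the induction hypothesis''; nothing of substance differs.
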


\begin{proof}
\noindent (a) By induction on $\varphi$. We show only two cases, since all other cases are either trivial or analogous.

\smallskip\noindent Case $\varphi = \psi_1 \U \psi_2$.  Assume $w \models \evalnu{\varphi}{\setmu}$ holds. Due to the definition of $\evalnu{\varphi}{\setmu}$ we have $\varphi \in \setmu$ and thus also $\varphi \in \setmu'$. Thus we have $w \models (\evalnu{\psi_1}{\setmu}) \W (\evalnu{\psi_2}{\setmu})$ and applying the induction hypothesis we get $w \models (\evalnu{\psi_1}{\setmu'}) \W (\evalnu{\psi_2}{\setmu'})$. Hence $w \models \evalnu{\varphi}{\setmu'}$.

\smallskip\noindent Case $\varphi = \psi_1 \W \psi_2$. Assume $w \models \evalmu{\varphi}{\setnu}$ holds. If $\varphi \in \setnu'$ then $w \models \evalmu{\varphi}{\setnu'}$ trivially holds. If $\varphi \notin \setnu'$ then also $\varphi \notin \setnu$, and we get $w \models (\evalmu{\psi_1}{\setnu}) \U (\evalmu{\psi_2}{\setnu})$. Using the induction hypothesis we get $w \models (\evalmu{\psi_1}{\setnu'}) \U (\evalmu{\psi_2}{\setnu'})$, and we are done. 
\end{proof}

\begin{example}
\label{ex:notusable}
Let us first exhibit a formula $\varphi$ and a word $w$ such that $w \models \varphi$, but $w \not \models \evalnu{\varphi}{\setGF^{\;\varphi}_w}$. For this take $\varphi = \F a$ and $w = \{a\}\{\}^\omega$. Thus $w \models \varphi$ and $\setGF_w = \emptyset$. However, $\evalnu{(\F a)}{\emptyset} = \false$ and hence $w \not \models \evalnu{(\F a)}{\setGF^{\;\varphi}_w}$.

We now move to the second case. Let us exhibit $\varphi$ and $w$ such that $w \not \models \varphi$ and $w \models \evalmu{\varphi}{\setFG^{\varphi}_{w}}$. Dually, let $\varphi = \G a$ and $w = \{\}\{a\}^\omega$. Then $w \not \models \varphi$, but $\setFG_{w} = \{\G a\}$ and $\evalmu{(\G a)}{\{\G a\}} = \true$ and hence $w \models \evalnu{(\G a)}{\setFG^{\varphi}_{w}}$.
\end{example}

The key to finding a mapping $\flat{\varphi}{\cdot}$ satisfying both conditions of
\Cref{lem:ltl_equiv} is the technical result below, for which we offer the following intuition. The following equivalence is a valid law of LTL: 
\begin{align} \G \varphi \equiv \varphi \, \U \, \G \varphi \end{align}
In order to prove that a word $w$  satisfies the right-hand-side we can take an \emph{arbitrary} index $i \geq 0$, prove that $w_j \models \varphi$ holds for every $j< i$, and then prove that $w_i \models \G \varphi$. Since we are free to choose $i$, we can pick it such that $w_i$ is a stable word, which allows us to apply the machinery of Section \ref{sec:eval} and obtain:

\begin{lemma}\label{lem:flatten:correct-local}
For every word $w$:  
\[
w \models \G \varphi  \iff  w \models \varphi \; \U \; \G (\evalnu{\varphi}{\setGF^{\;\varphi}_w}) 
\]
\end{lemma}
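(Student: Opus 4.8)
The plan is to prove the two implications separately, in each case exploiting the freedom---granted by the law $\G\varphi \equiv \varphi\,\U\,\G\varphi$---to choose the index at which the \enquote{until} is discharged. The single observation that makes everything fit together is that membership in $\setGF_w$ is a tail property: since $w \models \G\F\psi \iff w_i \models \G\F\psi$ for every $i$, we have $\setGF^{\;\varphi}_{w} = \setGF^{\;\varphi}_{w_i}$ for all suffixes $w_i$. I would record this first, as it is what allows me to apply the two halves of \Cref{lem:evalnu} to suffixes of $w$ even though the abbreviation $\evalnu{\varphi}{\setGF^{\;\varphi}_w}$ is formed from the set $\setGF_w$ computed on the whole word. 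Throughout I abbreviate $M \coloneqq \setGF^{\;\varphi}_w$.

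For the direction ($\Rightarrow$), assume $w \models \G\varphi$, so every suffix of $w$ satisfies $\varphi$. By the semantics of $\U$ it suffices to exhibit an index $k$ with $w_k \models \G(\evalnu{\varphi}{M})$, the prefix obligation $w_j \models \varphi$ for $j < k$ being automatic. I would choose $k$ to be a stabilisation point, i.e.\ such that $w_k \in \stablewords_\varphi$; such a $k$ exists because all but finitely many suffixes are stable. Stability is inherited by further suffixes: from $\setF_{w_{k+l}} \subseteq \setF_{w_k}$, $\setGF_{w_{k+l}} = \setGF_{w_k}$, and the inclusion $\setGF_{w_{k+l}} \subseteq \setF_{w_{k+l}}$ one gets $\setF_{w_{k+l}} = \setGF_{w_{k+l}} = M$ for every $l \geq 0$. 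In particular $\setF_{w_{k+l}} \subseteq M$ and $w_{k+l} \models \varphi$, so \Cref{lem:evalnu}.1 gives $w_{k+l} \models \evalnu{\varphi}{M}$ for all $l$, which is exactly $w_k \models \G(\evalnu{\varphi}{M})$.

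For the direction ($\Leftarrow$), assume $w \models \varphi\,\U\,\G(\evalnu{\varphi}{M})$ and fix a witnessing index $k$, so that $w_j \models \varphi$ for $j < k$ and $w_i \models \evalnu{\varphi}{M}$ for every $i \geq k$. To conclude $w \models \G\varphi$ I only need to treat indices $i \geq k$. Here the suffix-invariance of $\setGF$ pays off directly: $M = \setGF_w = \setGF_{w_i}$, so the hypothesis $M \subseteq \setGF_{w_i}$ of \Cref{lem:evalnu}.2 is met, and together with $w_i \models \evalnu{\varphi}{M}$ it yields $w_i \models \varphi$. Note that this direction uses neither stability nor the first part of \Cref{lem:evalnu}.

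I expect the only real subtlety---rather than a genuine obstacle---to be the bookkeeping of index shifts and the justification of the two set identities $\setGF_w = \setGF_{w_i}$ and $\setF_{w_{k+l}} = \setGF_{w_{k+l}}$; both follow immediately once the tail-invariance of $\G\F$ and the inheritance of stability under suffixes are in place. Everything else is a direct appeal to the first and second parts of \Cref{lem:evalnu}, whose statements crucially do not presuppose that $w$ itself is stable.
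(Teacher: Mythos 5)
Your proof is correct and follows essentially the same route as the paper's: for ($\Rightarrow$) you pick a stable suffix as the witness for the until and apply \Cref{lem:evalnu}.1, using that stability and $\setGF$ are preserved under taking further suffixes; for ($\Leftarrow$) you apply \Cref{lem:evalnu}.2 to the suffixes past the witness index. The extra bookkeeping you supply (tail-invariance of $\setGF$ and inheritance of stability) is exactly what the paper leaves implicit.
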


\begin{proof}
We prove both directions separately.

\smallskip\noindent 
($\Rightarrow$) Assume $w \models \G \varphi$ holds. Let $w_i$ be a stable suffix of $w$. By the definition of stability we have $\setF^{\;\!\varphi}_{w_i} = \setF^{\;\!\varphi}_{w_j} = \setGF^{\;\varphi}_{w}$ for every $j \geq i$. By \Cref{lem:evalnu}.1, we have 
\[w_j \models \varphi \implies w_j \models \evalnu{\varphi}{\setGF^{\;\varphi}_{w}} \mbox{ for every $j \geq i$}\]
\noindent and so in particular $w_i \models \G (\evalnu{\varphi}{\setGF^{\;\varphi}_{w}})$. We proceed as follows:
\[\def\arraystretch{1.2}
\begin{array}{rl}
               & w \models \G \varphi \\
\implies       & w_i \models \G (\evalnu{\varphi}{\setGF^{\;\varphi}_{w}}) \wedge \forall k < i. ~ w_k \models \varphi \\
\implies       & w \models \varphi \; \U \; \G(\evalnu{\varphi}{\setGF^{\;\varphi}_{w}}) \\ 
\end{array}\]


\noindent 
($\Leftarrow$) This is an immediate consequence of \Cref{lem:evalnu}.2. 
\end{proof}

With the help of the standard LTL-equivalences
\begin{align}
\varphi \W \psi & \equiv \varphi \U (\psi \vee \G \varphi) \label{law1} \\
\varphi \R \psi & \equiv (\varphi \vee \G \psi) \M \psi \label{law2}
\end{align}
\noindent \Cref{lem:flatten:correct-local} can be extended to a more powerful proposition.

\begin{proposition}\label{prop:flatten:correct-local}
For all formulas $\varphi$, $\psi$, and for every word $w$:  
\[\begin{array}{lcl}
w \models \varphi \W \psi & \iff & w \models \varphi \; \U \; \big(\psi \vee \G (\evalnu{\varphi}{\setGF^{\;\varphi}_w})\big) \\
w \models \varphi \R \psi & \iff & w \models \big(\varphi \vee \G (\evalnu{\psi}{\setGF^{\;\psi}_w})\big) \; \M \; \psi
\end{array}\]
\end{proposition}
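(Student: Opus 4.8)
The plan is to reduce both equivalences to \Cref{lem:flatten:correct-local}, the semantic definitions of $\W$ and $\R$, and the two standard distributivity laws $\alpha \U (\beta \vee \gamma) \equiv (\alpha \U \beta) \vee (\alpha \U \gamma)$ and $(\alpha \vee \beta) \M \gamma \equiv (\alpha \M \gamma) \vee (\beta \M \gamma)$, both of which are immediate from the semantics. Throughout I fix $w$ and exploit that $\setGF$ is invariant under taking suffixes, i.e. $\setGF^{\;\varphi}_w = \setGF^{\;\varphi}_{w_m}$ for every $m$, since $w \models \G\F\chi \iff w_m \models \G\F\chi$.

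For the first equivalence, write $M \coloneqq \setGF^{\;\varphi}_w$. By the semantics of $\W$ (where $\forall k.\,w_k \models \varphi$ is exactly $w \models \G\varphi$), equivalently by law (\ref{law1}), we have $w \models \varphi \W \psi$ iff $w \models \G\varphi$ or $w \models \varphi \U \psi$. \Cref{lem:flatten:correct-local} rewrites the first disjunct as $w \models \varphi \U \G(\evalnu{\varphi}{M})$, so $w \models \varphi \W \psi$ iff $w \models (\varphi \U \G(\evalnu{\varphi}{M})) \vee (\varphi \U \psi)$. Pulling the two untils together by distributivity yields $w \models \varphi \U (\psi \vee \G(\evalnu{\varphi}{M}))$, which is precisely the right-hand side.

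For the release equivalence I proceed dually, writing $N \coloneqq \setGF^{\;\psi}_w$. The semantics of $\R$ (via law (\ref{law2})) gives $w \models \varphi \R \psi$ iff $w \models \G\psi$ or $w \models \varphi \M \psi$, and by distributivity of $\M$ over $\vee$ in its left argument it suffices to rewrite $\G\psi$ as $\G(\evalnu{\psi}{N}) \M \psi$. This is the $\M$-analogue of \Cref{lem:flatten:correct-local}, namely $w \models \G\psi \iff w \models \G(\evalnu{\psi}{N}) \M \psi$, which I would establish directly. For $(\Leftarrow)$, take a witness $k$ with $w_k \models \G(\evalnu{\psi}{N})$ and $w_j \models \psi$ for all $j \le k$; then for $m \ge k$ we have $w_m \models \evalnu{\psi}{N}$, and since $N = \setGF^{\;\psi}_{w_m}$ gives $N \subseteq \setGF^{\;\psi}_{w_m}$, \Cref{lem:evalnu}.2 yields $w_m \models \psi$, so $w \models \G\psi$. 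For $(\Rightarrow)$, reuse the argument of \Cref{lem:flatten:correct-local}: pick a stable suffix $w_k$, so that all $w_m$ with $m \ge k$ are stable and hence $\setF^{\;\!\psi}_{w_m} = \setGF^{\;\psi}_{w_m} = N$; \Cref{lem:evalnu}.1 then gives $w_m \models \evalnu{\psi}{N}$ for every $m \ge k$, i.e. $w_k \models \G(\evalnu{\psi}{N})$, and $w \models \G\psi$ supplies $w_j \models \psi$ for all $j \le k$, so $w \models \G(\evalnu{\psi}{N}) \M \psi$.

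The main obstacle is this $\M$-analogue: it does \emph{not} follow from \Cref{lem:flatten:correct-local} by the formal $\U/\M$ duality, because the substitution still uses the $\Pi_1$-evaluation $\evalnu{\cdot}{\cdot}$ rather than a dual operator. One must reprove it, but its proof is a near-verbatim copy of that of \Cref{lem:flatten:correct-local}; the only genuinely new observation is that the stable index $k$ additionally satisfies $w_k \models \psi$ (because $w \models \G\psi$), which is exactly what upgrades the $\U$-witness — requiring $w_j \models \psi$ only for $j < k$ — to the $\M$-witness requiring it for all $j \le k$. Everything else (the two distributivity laws and the suffix-invariance of $\setGF$) is routine and uses no stability assumption.
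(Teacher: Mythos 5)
Your proof is correct and follows essentially the same route as the paper's: both reduce the $\W$-equivalence to \Cref{lem:flatten:correct-local} together with the decomposition of $\varphi \W \psi$ into ``$\G\varphi$ or $\varphi \U \psi$''. The only presentational difference is that you merge the two disjuncts back into a single until via distributivity of $\U$ over $\vee$, obtaining a chain of equivalences, whereas the paper argues the two implications separately (a case split on $w \models \G\varphi$ for one direction, and \Cref{lem:evalnu}.2 followed by equation (\ref{law1}) for the other); both arguments are equally valid. Where your write-up adds genuine value is the release case: the paper dismisses it as ``dual'', and you correctly observe that the duality is not purely formal, since the substituted subformula is still $\G(\evalnu{\psi}{\setGF^{\;\psi}_w})$ rather than a dualised evaluation, so the $\M$-analogue of \Cref{lem:flatten:correct-local} has to be established on its own. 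Your direct proof of that analogue is sound, and your remark that the stable index $k$ additionally satisfies $w_k \models \psi$ --- which is precisely what upgrades the $\U$-witness (requiring $\psi$ only for $j < k$) to an $\M$-witness (requiring it for $j \leq k$) --- is exactly the point that needs checking; alternatively, the analogue can be derived from \Cref{lem:flatten:correct-local} itself by one further application of \Cref{lem:evalnu}.2, which shows that any until-witness $w_k \models \G(\evalnu{\psi}{\setGF^{\;\psi}_w})$ already forces $w_k \models \psi$.
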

\begin{proof}

We only prove the first statement. The proof of the second is dual.

\smallskip\noindent 
($\Rightarrow$) Assume $w \models \varphi \W \psi$. We split this branch of the proof further, by a case distinction on whether $w \models \G \varphi$ holds. If $w \models \G \varphi$ holds, then by Lemma \ref{lem:flatten:correct-local} we have
$w \models \varphi \; \U \; $ $\G(\evalnu{\varphi}{\setGF^{\;\varphi}_{w}})$, and so
$w \models \varphi \; \U \; (\psi \vee \G (\evalnu{\varphi}{\setGF^{\;\varphi}_w}))$ holds.
Assume now that $w \not \models \G \varphi$. Then we simply derive:
\[ \def\arraystretch{1.2}
\begin{array}{rlr}
           & w \models \varphi \W \psi \\
\iff       & w \models \varphi \U \psi & \text{($w \not \models \G \varphi$)} \\           
\implies   & w \models \varphi \U \; (\psi \vee \G (\evalnu{\varphi}{\setGF^{\;\varphi}_w}))
\end{array}\]

\noindent 
($\Leftarrow$)  By \Cref{lem:evalnu}.2 we have $(w_j \models \evalnu{\varphi}{\setGF^{\;\varphi}_w} \implies w_j \models \varphi)$ for all $j \geq 0$. Thus $w_j \models \evalnu{(\G \varphi)}{\setGF^{\;\varphi}_w} \implies w_j \models \G \varphi$ for all $j \geq 0$ and we can simply derive:
\[\arraycolsep=3.6pt\def\arraystretch{1.2}
\begin{array}{rlrc}
           & w \models \varphi \U (\psi \vee \G(\evalnu{\varphi}{\setGF^{\;\varphi}_w})) \\
\implies   & w \models \varphi \U (\psi \vee \G \varphi)   & \text{(\Cref{lem:evalnu}.2)}  \\
\iff       & w \models \varphi \W \psi                     & \text{(\Cref{law1})} & \qedhere
\end{array}\]
\end{proof}

Proposition \ref{prop:flatten:correct-local} gives us all we need to define a formula $\flatten{\varphi}{\setmu}$ satisfying \Cref{eq:goal}.

\begin{definition}\label{def:evalmunu}
Let $\varphi$ be a formula and let $\setmu \subseteq \sfmu(\varphi)$. The formula $\flatten{\varphi}{\setmu}$ is inductively defined as follows for $\R$ and $\W$
\begin{align*}
\flatten{(\varphi \R \psi)}{\setmu} & = (\flatten{\varphi}{\setmu} \vee \G (\evalnu{\psi}{\setmu}))\; \M \; \flatten{\psi}{\setmu} \\
\flatten{(\varphi \W \psi)}{\setmu} & = \flatten{\varphi}{\setmu} \; \U \; (\flatten{\psi}{\setmu} \vee \G (\evalnu{\varphi}{\setmu}))
\end{align*} %
\noindent and homomorphically for all other cases.
\end{definition}

A straightforward induction on $\varphi$ shows that $\flatten{\varphi}{\setmu} \in \Sigma_2$, justifying our notation. We prove that $\flatten{\varphi}{\setmu}$ satisfies (\ref{eq:goal}) by checking that it satisfies the conditions of \Cref{lem:ltl_equiv}.

\begin{theorem}\label{thm:logical:flatten}
Let $\varphi$ be a formula. Then:
\[
\varphi \equiv \bigvee_{\substack{\setmu \subseteq \sfmu(\varphi)\\\setnu \subseteq \sfnu(\varphi)}} \left( \flatten{\varphi}{\setmu} \wedge \bigwedge_{\psi \in \setmu} \G\F(\evalmu{\psi}{\setnu}) \wedge \bigwedge_{\psi \in \setnu} \F\G(\evalnu{\psi}{\setmu}) \right)
\]
\end{theorem}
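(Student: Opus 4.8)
The plan is to instantiate \Cref{lem:ltl_equiv} with $\flat{\varphi}{\cdot} \coloneqq \flatten{\varphi}{\cdot}$ and to verify its two hypotheses; the stated equivalence then drops out immediately. Both hypotheses are proved by structural induction on $\varphi$, and both inductions are routine away from the $\W$ and $\R$ cases, where \Cref{prop:flatten:correct-local} does the real work.

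For condition (a), the monotonicity $\setmu \subseteq \setmu' \implies \flatten{\varphi}{\setmu} \models \flatten{\varphi}{\setmu'}$, I would induct on $\varphi$. Every homomorphic case (atoms, $\wedge$, $\vee$, $\X$, $\U$, $\M$) follows from the induction hypothesis together with the fact that each LTL operator is monotone in each of its arguments (if $\alpha \models \alpha'$ and $\beta \models \beta'$ then $\alpha \Op \beta \models \alpha' \Op \beta'$, and similarly for the unary operators). The only genuine cases are $\W$ and $\R$: for $\flatten{(\varphi \W \psi)}{\setmu} = \flatten{\varphi}{\setmu} \U (\flatten{\psi}{\setmu} \vee \G (\evalnu{\varphi}{\setmu}))$ I combine the induction hypotheses for $\varphi$ and $\psi$ with the monotonicity of $\evalnu{\varphi}{\cdot}$ supplied by \Cref{lem:propsofevalnu}, and dually for $\R$.

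For condition (b), $w \models \varphi \iff w \models \flatten{\varphi}{\setGF^{\;\varphi}_w}$, I would strengthen the statement to quantify over all subformulas, proving $w \models \chi \iff w \models \flatten{\chi}{\setGF^{\;\chi}_w}$ for every subformula $\chi$ and every word $w$; the case $\chi = \varphi$ is exactly (b). Two bookkeeping observations drive the induction. First, a \emph{locality} property: since the substitutions defining $\evalnu{\chi}{\cdot}$ and $\flatten{\chi}{\cdot}$ only inspect the subformulas of $\chi$ that lie in the index set, both formulas depend on $\setmu$ only through $\setmu \cap \sfmu(\chi)$; in particular, for a subformula $\chi'$ of $\chi$ one has $\setGF^{\;\chi}_w \cap \sfmu(\chi') = \setGF^{\;\chi'}_w$, so evaluating at the ``global'' set $\setGF^{\;\chi}_w$ or the ``local'' set $\setGF^{\;\chi'}_w$ yields the same formula. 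Second, \emph{suffix-invariance}: because $w \models \G\F \psi \iff w_k \models \G\F \psi$, the index set is stable under taking suffixes, $\setGF^{\;\chi}_w = \setGF^{\;\chi}_{w_k}$ for all $k$. With these in hand the homomorphic cases go through by applying the induction hypothesis at the relevant suffix $w_k$ and substituting, inside the temporal operator, each subformula by its flattening (a standard congruence: if two formulas agree on every suffix of $w$, either may be substituted for the other inside any $\X$, $\U$, $\M$, $\wedge$, or $\vee$ context).

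The $\W$ and $\R$ cases are where \Cref{prop:flatten:correct-local} is essential. For $\chi = \chi_1 \W \chi_2$, the proposition applied to $\chi_1$ and $\chi_2$ rewrites $w \models \chi_1 \W \chi_2$ as $w \models \chi_1 \U (\chi_2 \vee \G (\evalnu{\chi_1}{\setGF^{\;\chi_1}_w}))$; using locality to rewrite $\setGF^{\;\chi_1}_w$ as $\setGF^{\;\chi}_w$ inside $\evalnu{\chi_1}{\cdot}$, and then the induction hypotheses for $\chi_1$ and $\chi_2$ (through the same congruence and suffix-invariance as above) to replace $\chi_1$ and $\chi_2$ by $\flatten{\chi_1}{\setGF^{\;\chi}_w}$ and $\flatten{\chi_2}{\setGF^{\;\chi}_w}$, turns the right-hand side into $w \models \flatten{(\chi_1 \W \chi_2)}{\setGF^{\;\chi}_w}$, as required; the $\R$ case is dual. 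The main obstacle, and the only part demanding real care, is precisely this index-set bookkeeping in condition (b): \Cref{prop:flatten:correct-local} speaks of the local set $\setGF^{\;\chi_1}_w$ of a subformula while the flattening operator is applied with the set of the enclosing formula, and pushing the induction hypothesis through temporal operators forces one to move between a word and its suffixes. The locality and suffix-invariance observations are exactly what reconcile these, after which the argument is mechanical.
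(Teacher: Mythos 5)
Your proposal is correct and follows essentially the same route as the paper's proof: it verifies conditions (a) and (b) of \Cref{lem:ltl_equiv}, handles (a) by induction using \Cref{lem:propsofevalnu}, and proves (b) by structural induction combining \Cref{prop:flatten:correct-local} in the $\W$/$\R$ cases with exactly the locality identity ($\flatten{\psi}{\setmu} = \flatten{\psi}{\setmu \cap \sfmu(\psi)}$) and suffix-invariance of $\setGF$ that the paper uses to reconcile the local and global index sets. The only difference is presentational: you phrase the induction as a strengthened statement over all subformulas, whereas the paper quantifies over all words and instantiates the hypothesis at suffixes, but these amount to the same bookkeeping.
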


\begin{proof}
We show that conditions (a) and (b) of \Cref{lem:ltl_equiv} hold. 

\smallskip\noindent (a) The proof is an easy induction on $\varphi$, applying \Cref{lem:propsofevalnu} where necessary. 

\smallskip\noindent (b) We prove that  
\begin{equation}
\label{eq:equivw2} 
\forall w. ~ w \models \varphi \iff w \models \flatten{\varphi}{\setGF^{\;\varphi}_w}
\end{equation}
\noindent holds by structural induction on $\varphi$. We make use of the  identity
\begin{align}
\flatten{\psi}{\setmu} = \flatten{\psi}{\setmu \cap \sfmu(\psi)} \label{flatten:restrict}
\end{align}
\noindent which follows immediately from the fact that formulas in $M \setminus \sfmu(\psi)$ 
are not subformulas of $\psi$.

The base of the induction is $\varphi \in \{ \true, \false, a, \neg a\}$. In all these cases we have $\varphi = \flatten{\varphi}{\setGF_w}$ by definition, and so (\ref{eq:equivw2}) holds vacuously. All other cases in which $\flatten{\varphi}{\setmu}$ is defined homomorphically are handled in the same way. We consider only one of them:

\smallskip
\noindent Case $\varphi = \psi_1 \U \psi_2$. By assumption, the induction hypothesis 
(\ref{eq:equivw2}) holds for $\psi_1$ and $\psi_2$, giving:
\begin{align}
\forall u. ~ (u \models \psi_1 \iff u \models \flatten{\psi_1}{\setGF^{\;\psi_1}_u}) \label{eq:ui} \\ 
\forall v. ~ (v \models \psi_2 \iff v \models \flatten{\psi_2}{\setGF^{\;\psi_2}_v}) \label{eq:vj}
\end{align}

In order to use these two equivalences for the induction step, we need to replace $\setGF^{\;\psi_1}_u$ and $\setGF^{\;\psi_2}_v$ by $\setGF^{\;\varphi}_w$ in the context of $\flatten{\cdot}{\cdot}$. For this we instantiate $u \coloneqq w_i$ and $v\coloneqq w_j$ for arbitrary $i, j \geq 0$ in (\ref{eq:ui}) and (\ref{eq:vj}). With this choice $u$ and $v$ are suffixes of $w$, and so thus we get $\setGF^{\;\varphi}_u = \setGF^{\;\varphi}_v = \setGF^{\;\varphi}_w$. Notice further that, by intersection with $\sfmu(\cdot)$, we have $\setGF^{\;\psi_1}_u = \setGF^{\;\varphi}_w \cap \sfmu(\psi_1)$ and $\setGF^{\;\psi_2}_u = \setGF^{\;\varphi}_w \cap \sfmu(\psi_2)$. From (\ref{flatten:restrict}) we obtain:
\begin{align}
\forall i. ~ (w_i \models \psi_1 \iff w_i \models \flatten{\psi_1}{\setGF^{\;\varphi}_w}) \label{eq:wi} \\ 
\forall j. ~ (w_j \models \psi_2 \iff w_j \models \flatten{\psi_2}{\setGF^{\;\varphi}_w}) \label{eq:wj}
\end{align}

\noindent Applying (\ref{eq:wi}) and (\ref{eq:wj}) we get:
\[\arraycolsep=4.1pt\begin{array}{ll}
& w \models \psi_1 \U \psi_2 \\
\iff & \exists k. ~ w_k \models \psi_2 \wedge (\forall \ell < k. ~ w_\ell \models \psi_1)  \\
\iff & \exists k. ~ w_k \models \flatten{\psi_2}{\setGF^{\;\varphi}_w} \wedge (\forall \ell < k. ~ w_\ell \models \flatten{\psi_1}{\setGF^{\;\varphi}_w})  \\ 
\iff & w \models \flatten{(\psi_1 \U \psi_2)}{\setGF^{\;\varphi}_w}
\end{array}\]
\noindent which concludes the proof.

The remaining cases are $\varphi = \psi_1 \R \psi_2$ and $\varphi = \psi_1 \W \psi_2$.
Again, we only consider one of them, the other one being analogous.

\smallskip\noindent Case $\varphi = \psi_1 \W \psi_2$. 
The argumentation is only slightly more complicated than that of the $\psi_1 \U \psi_2$ case. By induction hypothesis (\ref{eq:wi}) and (\ref{eq:wj}) hold. With the help of \Cref{lem:flatten:correct-local} we derive:
\[\def\arraystretch{1.2}
\begin{array}{rl}
           & w \models \psi_1 \W \psi_2 \\
\iff       & w \models \psi_1 \U (\psi_2 \vee \G (\evalnu{\psi_1}{\setGF^{\;\psi_1}_w})) \hfill \text{(\Cref{prop:flatten:correct-local})} \\
\iff       & w \models \psi_1 \U (\psi_2 \vee \G (\evalnu{\psi_1}{\setGF^{\;\varphi}_w})) \\ & \hfill \text{($\evalnu{\psi}{\setmu} = \evalnu{\psi}{\setmu \cap \sfmu(\psi)}$)} \\
\iff       & w \models \; \flatten{\psi_1}{\setGF^{\;\varphi}_w} \; \U \; (\flatten{\psi_2}{\setGF^{\;\varphi}_w} \vee \G(\evalnu{\psi_1}{\setGF^{\;\varphi}_w})) \\ & \hfill \text{((\ref{eq:wi}) and (\ref{eq:wj}))} \\
\iff   & w \models \; \flatten{(\psi_1 \W \psi_2)}{\setGF^{\;\varphi}_w}  
\end{array}\]
\end{proof}

\begin{example}
\label{ex:newnf}
Let $\varphi = \F( a \wedge \G( b \vee \F c))$. We have 
$\sfmu(\varphi) = \{ \varphi, \F c\}$ and $\sfnu(\varphi) = \{\G( b \vee \F c)\}$, and so the right-hand-side of \Cref{thm:logical:flatten} has eight disjuncts.  
However, contrary to \Cref{ex:oldnf}, we have $\flatten{\varphi}{\setmu} \neq \false$ for every $M \subseteq \{\varphi, \F c\}$. Let $\Phi(M, N)$ be the disjunct for 
given sets $M$, $N$. We consider two cases:

\smallskip \noindent Case $M\coloneqq\emptyset$, $N \coloneqq\emptyset$. In this case $\Phi(\emptyset, \emptyset)
= \flatten{\varphi}{\emptyset}$, because the conjunctions over $M$ and $N$ are vacuous. We have:
\begin{align*}
\Phi(\emptyset, \emptyset) & = \flatten{\varphi}{\emptyset} \\
& = \F \left( a \wedge \left( \flatten{\G( b \vee \F c)}{\emptyset} \right) \right) \\
& = \F \left( a \wedge \left( \flatten{(( b \vee \F c) \, \W \, \false)}{\emptyset} \right) \right) \\
& = \F \left( a \wedge \left( \flatten{( b \vee \F c)}{\emptyset} \, \U \, \left( \false \vee \G(  \evalnu{(b \vee \F c)}{\emptyset}) \right)\right) \right) \\
& = \F \left( a \wedge \left( ( b \vee \F c) \, \U \, \G b \right) \right) 
\end{align*}
\smallskip \noindent  Case $M \coloneqq\{ \F c\}$, $N \coloneqq \{\G( b \vee \F c)\} $. We get:
\begin{align*}
\flatten{\varphi}{M}
& = \F \left( a \wedge \left( \flatten{( b \vee \F c)}{M} \, \U \, \left( \false \vee \G(  \evalnu{(b \vee \F c)}{M}) \right)\right) \right) \\
& = \F \left( a \wedge \left( (b \vee \F c) \, \U \, (\false \vee \true) \right) \right) = \F a
\end{align*}
\noindent Further, we have $\F\G(\evalnu{\G( b \vee \F c)}{M}) = \F\G ( \G \true) = \true$ and $\G\F( \evalmu{(\F c)}{N}) = \G\F (\F c) = \G\F c$. 
So in this case we obtain $\Phi(\{ \F c\}, \{\G( b \vee \F c)\}) = \F a \wedge \G\F c$.

\smallskip Repeating this process for all possible sets $M,N$ and bringing the resulting formula in disjunctive normal form we finally get
\[\varphi \equiv \F \left( a \wedge \left( ( b \vee \F c) \, \U \, \G b \right) \right) \vee \left( \F a \wedge \G\F c \right)\] 
\end{example}

\subsection{Complexity of the Normalisation Procedure}

We show that the normalisation procedure has at most single exponential blowup in the length of the formula, improving on the previously known non-elementary bound. 

\begin{proposition}
Let $\varphi$ be a formula with length $n$. Then there exists an equivalent formula $\varphi_{\Delta_2}$ in $\Delta_2$ of length $2^{2n + \mathcal{O}(1)}$.	
\end{proposition}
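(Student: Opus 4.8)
The plan is to bound the length of the normal form given by Theorem~\ref{thm:logical:flatten} by carefully counting contributions from its three sources of blow-up: the number of disjuncts, the size of each $\flatten{\varphi}{\setmu}$, and the size of each inner formula $\evalmu{\psi}{\setnu}$ and $\evalnu{\psi}{\setmu}$. First I would observe that the number of disjuncts is $2^{|\sfmu(\varphi)|} \cdot 2^{|\sfnu(\varphi)|}$, and since $\sfmu(\varphi)$ and $\sfnu(\varphi)$ together pick out distinct binary-operator subformulas of $\varphi$, we have $|\sfmu(\varphi)| + |\sfnu(\varphi)| \leq n$, so the number of disjuncts is at most $2^n$.

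The second step is to control the size of the individual building blocks. Here the key fact is that $\evalnu{\varphi}{\setmu}$, $\evalmu{\varphi}{\setnu}$, and $\flatten{\varphi}{\setmu}$ are all defined by \emph{linear} syntactic substitution procedures, as emphasised repeatedly in Sections~\ref{sec:eval} and \ref{sec:newnf}. For $\evalnu{\varphi}{\setmu}$ and $\evalmu{\varphi}{\setnu}$ this is immediate from Definitions~\ref{def:evalnu} and \ref{def:evalmu}, since each operator node is replaced by an operator node (or by $\true$/$\false$), giving formulas of length $O(n)$. For $\flatten{\varphi}{\setmu}$ the situation is slightly worse: Definition~\ref{def:evalmunu} replaces a single $\R$ or $\W$ node by a formula containing a copy of $\G(\evalnu{\cdot}{\setmu})$, which itself has length $O(n)$. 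Since there are at most $n$ such nodes on the path of the recursion, a naive count gives length $O(n^2)$ for $\flatten{\varphi}{\setmu}$; I would argue this is $2^{n + O(1)}$-bounded, which is all that is needed. Each inner conjunct $\G\F(\evalmu{\psi}{\setnu})$ or $\F\G(\evalnu{\psi}{\setmu})$ thus has length $O(n)$, and there are at most $|\setmu| + |\setnu| \leq n$ of them per disjunct, so each disjunct has length $O(n^2)$, which is again within the $2^{n+O(1)}$ bound.

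The third step is to multiply: each of the at most $2^n$ disjuncts has length bounded by $2^{n + O(1)}$, so the total length is at most $2^n \cdot 2^{n + O(1)} = 2^{2n + O(1)}$, yielding the claimed $\varphi_{\Delta_2}$. Equivalence to $\varphi$ and membership in $\Delta_2$ follow directly from Theorem~\ref{thm:logical:flatten} together with the remark (just after Definition~\ref{def:evalmunu}) that $\flatten{\varphi}{\setmu} \in \Sigma_2$ and from $\evalmu{\psi}{\setnu} \in \Sigma_1$, $\evalnu{\psi}{\setmu} \in \Pi_1$ established in Section~\ref{sec:eval}: a $\G\F$ of a $\Sigma_1$-formula lies in $\Pi_2$, an $\F\G$ of a $\Pi_1$-formula lies in $\Sigma_2$, and a conjunction of a $\Sigma_2$-formula with such factors lies in $\Delta_2$, which is closed under the disjunction over all $\setmu, \setnu$.

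The main obstacle I anticipate is the bookkeeping in the second step: one must make precise that substituting a length-$O(n)$ formula into each of the $O(n)$ fixed-point nodes does not compound multiplicatively across the different substituted blocks (which would give an exponential blow-up already in a single disjunct), but only additively, because the substitutions at distinct nodes are into disjoint parts of the syntax tree. Getting a clean, honest constant in the exponent---so that the final bound is genuinely $2^{2n + O(1)}$ rather than something like $2^{O(n)}$ with a larger hidden constant---requires tracking the lengths through the recursion somewhat carefully, but involves no conceptual difficulty.
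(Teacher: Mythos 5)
Your proposal is correct and follows essentially the same route as the paper: bound the number of disjuncts by $2^n$, bound the length of each disjunct, and multiply. The only difference is in one step: you bound $|\flatten{\psi}{\setmu}|$ by $O(n^2)$ via the (correct) observation that the substituted $\G(\evalnu{\cdot}{\setmu})$ blocks land in disjoint parts of the syntax tree and hence add rather than compound, whereas the paper settles for the cruder inductive bound $|\flatten{\psi}{\setmu}| \leq 2^{|\psi|+1}$; both suffice for the stated $2^{2n+\mathcal{O}(1)}$.
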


\begin{proof}
Let $\psi$ be an arbitrary formula.
We let $|\psi|$ denote the length of formula and start by giving bounds on $\evalnu{\psi}{\setmu}$, $\evalmu{\psi}{\setnu}$, and $\flatten{\psi}{\setmu}$. For this let $M \subseteq \sfmu(\psi)$ and $N \subseteq \sfnu(\psi)$ be sets of formulas. We obtain by induction on the structure of $\psi$ that $|\evalnu{\psi}{\setmu}| \leq |\psi|$, $|\evalmu{\psi}{\setnu}| \leq |\psi|$, and $|\flatten{\psi}{\setmu}| \leq 2^{|\psi| + 1}$.

Consider now the right-hand side of \Cref{thm:logical:flatten} as the postulated $\varphi_{\Delta_2}$. Using these bounds we calculate the maximal size of a disjunct and obtain:
\[2^{n+1} + n (n + 3) + n (n + 3) + 1 = 2^{n+1} + 2n^2 + 6n + 1\]	
For sufficiently large $n$, i.e. $n > 5$, we can bound this by $2^{n+2}$. There exist at most $2^n$ disjuncts and thus the formula is at most of size $2^{2n+2}$ for $n > 5$.
\end{proof}

\subsection{A Dual Normal Form}

We obtained \Cref{thm:logical:flatten} by relying on the LTL equivalence (\ref{law1}) and (\ref{law2}) for $\W$ and $\R$. Using dual LTL-equivalences for $\U$ and $\M$, $\varphi \U \psi \equiv (\varphi \wedge \F \psi) \W \psi$ and $\varphi \M \psi \equiv \varphi \R (\psi \wedge \F \varphi)$, we can also obtain a dual normalisation procedure:

\begin{definition}\label{def:evalnumu}
Let $\varphi$ be a formula and let $\setnu \subseteq \sfnu(\varphi)$ be a set of formulas. The formula $\flattentwo{\varphi}{\setnu}$ is inductively defined as follows for $\U$ and $\M$:
\begin{align*}
\flattentwo{(\varphi \U \psi)}{\setnu} & = (\flattentwo{\varphi}{\setnu} \wedge \F (\evalmu{\psi}{\setnu})) \; \W \; \flattentwo{\psi}{\setnu} \\
\flattentwo{(\varphi \M \psi)}{\setnu} & = \flattentwo{\varphi}{\setnu} \; \R \; (\flattentwo{\psi}{\setnu} \wedge \F (\evalmu{\varphi}{\setnu}))
\end{align*} %
\noindent and homomorphically for all other cases.
\end{definition}

\begin{theorem}\label{thm:logical:flatten:dual}
Let $\varphi$ be a formula. Then:
\[
\varphi \equiv \bigvee_{\substack{\setmu \subseteq \sfmu(\varphi)\\\setnu \subseteq \sfnu(\varphi)}} \left( \flattentwo{\varphi}{\setnu} \wedge \bigwedge_{\psi \in \setmu} \G\F(\evalmu{\psi}{\setnu}) \wedge \bigwedge_{\psi \in \setnu} \F\G(\evalnu{\psi}{\setmu}) \right)
\]
\end{theorem}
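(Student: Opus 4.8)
The plan is to prove \Cref{thm:logical:flatten:dual} by mirroring the proof of \Cref{thm:logical:flatten} ingredient for ingredient, exchanging the roles of $\U,\M$ and $\W,\R$ and of $\evalnu{\cdot}{\cdot}$ and $\evalmu{\cdot}{\cdot}$. As the remark after \Cref{lem:ltl_equiv} already notes, that lemma dualises: it suffices to exhibit, for each $\setnu \subseteq \sfnu(\varphi)$, a formula $\flattentwo{\varphi}{\setnu}$ satisfying (a$'$) $N \subseteq N' \implies \flattentwo{\varphi}{N} \models \flattentwo{\varphi}{N'}$ for all $N' \subseteq \sfnu(\varphi)$, and (b$'$) $w \models \varphi \iff w \models \flattentwo{\varphi}{\setFG^{\;\varphi}_w}$ for every word $w$. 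The dual of \Cref{lem:ltl_equiv} then yields the claimed identity, its proof being verbatim that of \Cref{lem:ltl_equiv} except that the $(\Leftarrow)$-direction extracts $\setnu \subseteq \setFG^{\;\varphi}_w$ from \Cref{lem:evalFG:setFG1}.2 and applies (a$'$) and (b$'$). The whole task thus reduces to checking (a$'$) and (b$'$) for the $\flattentwo{\cdot}{\cdot}$ of \Cref{def:evalnumu}.

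Condition (a$'$) is a routine induction on $\varphi$. The homomorphic cases follow from the induction hypothesis and the monotonicity of the boolean and temporal operators; the two genuine cases $\varphi = \psi_1 \U \psi_2$ and $\varphi = \psi_1 \M \psi_2$ additionally use the second implication of \Cref{lem:propsofevalnu}, $N \subseteq N' \implies \evalmu{\psi}{N} \models \evalmu{\psi}{N'}$, to carry the monotonicity through the inserted conjuncts $\F(\evalmu{\psi_2}{N})$ and $\F(\evalmu{\psi_1}{N})$. This is exactly dual to \Cref{thm:logical:flatten}(a).

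Condition (b$'$) is proved by structural induction on $\varphi$, as in \Cref{thm:logical:flatten}(b). Now $\W$ and $\R$ are homomorphic cases, handled like the displayed $\U$ case there (induction hypothesis, suffix-invariance of $\setFG$, and the restriction identities $\evalmu{\psi}{\setnu} = \evalmu{\psi}{\setnu \cap \sfnu(\psi)}$ and $\flattentwo{\psi}{\setnu} = \flattentwo{\psi}{\setnu \cap \sfnu(\psi)}$), while $\U$ and $\M$ are the interesting cases. For these I would first establish the dual of \Cref{lem:flatten:correct-local},
\[ w \models \F \varphi \iff w \models \varphi \; \R \; \F(\evalmu{\varphi}{\setFG^{\;\varphi}_w}), \]
from the valid law $\F\varphi \equiv \varphi \R \F\varphi$ by the same stable-suffix argument, and then lift it—via the dual equivalences $\varphi \U \psi \equiv (\varphi \wedge \F\psi)\,\W\,\psi$ and $\varphi \M \psi \equiv \varphi \,\R\, (\psi \wedge \F\varphi)$ stated above \Cref{def:evalnumu}—to the dual of \Cref{prop:flatten:correct-local},
\[\begin{array}{lcl}
w \models \varphi \U \psi & \iff & w \models (\varphi \wedge \F(\evalmu{\psi}{\setFG^{\;\varphi}_w})) \; \W \; \psi , \\
w \models \varphi \M \psi & \iff & w \models \varphi \; \R \; (\psi \wedge \F(\evalmu{\varphi}{\setFG^{\;\varphi}_w})) .
\end{array}\]
Feeding these two equivalences into the induction exactly where the primal proof feeds \Cref{prop:flatten:correct-local} closes the $\U$ and $\M$ cases, establishing (b$'$).

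The only non-mechanical step is the dual local lemma and, through it, the dual proposition, whose two directions are asymmetric. One direction uses \Cref{lem:evalmu}.1 and is pointwise, needing no stability: since $\setFG^{\;\varphi}_{w_m} = \setFG^{\;\varphi}_w$ for every suffix, \Cref{lem:evalmu}.1 gives $w_m \models \varphi \implies w_m \models \evalmu{\varphi}{\setFG^{\;\varphi}_w}$ at every position $m$. The converse is the crux: from a witness of $\F(\evalmu{\varphi}{\setFG^{\;\varphi}_w})$ one must recover a witness of $\F\varphi$ through \Cref{lem:evalmu}.2, whose hypothesis $\setFG^{\;\varphi}_w \subseteq \setG^{\varphi}_{w_l}$ holds only once $w_l$ is stable; here the weak-release shape supplies \emph{arbitrarily late} witnesses, so one selects one beyond the stabilisation point, precisely dualising the stable-suffix choice in \Cref{lem:flatten:correct-local}. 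A different route would derive \Cref{thm:logical:flatten:dual} from \Cref{thm:logical:flatten} by negation, checking that $\flattentwo{\cdot}{\cdot}$, $\evalmu{\cdot}{\cdot}$ and $\setFG$ are the de Morgan images of $\flatten{\cdot}{\cdot}$, $\evalnu{\cdot}{\cdot}$ and $\setGF$; this is elegant but trades the analytic work for considerable bookkeeping, so I would favour the direct dualisation above.
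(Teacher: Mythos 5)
Your proposal is correct and follows exactly the route the paper intends: \Cref{thm:logical:flatten:dual} is stated without proof, to be obtained by dualising \Cref{thm:logical:flatten} via $\varphi \U \psi \equiv (\varphi \wedge \F\psi)\,\W\,\psi$ and $\varphi\M\psi \equiv \varphi\R(\psi\wedge\F\varphi)$, and your dualised \Cref{lem:ltl_equiv}, the dual local lemma $\F\varphi \equiv \varphi\,\R\,\F(\evalmu{\varphi}{\setFG^{\varphi}_w})$ with the \enquote{arbitrarily late stable witness} argument for its nontrivial direction, and the monotonicity condition (a$'$) via \Cref{lem:propsofevalnu} are precisely the right ingredients. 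One slip to fix: in the first line of your dual of \Cref{prop:flatten:correct-local} the superscript should match the formula under $\evalmu{\cdot}{\cdot}$, i.e.\ $\F(\evalmu{\psi}{\setFG^{\,\psi}_w})$ rather than $\F(\evalmu{\psi}{\setFG^{\,\varphi}_w})$ (read literally with $\setFG^{\,\varphi}_w \subseteq \sfnu(\varphi)$ the equivalence fails, e.g.\ for $\varphi = a$, $\psi = \G b$), although the restriction identity you already invoke makes this immaterial once the proposition is applied inside the induction.
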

\section{A Translation from LTL to Deterministic Rabin Automata (DRW)}
\label{sec:LTLtoDRW}

We apply our $\Delta_2$-normalisation procedure to derive a new translation from LTL to
DRW via weak alternating automata (AWW). While the previously existing normalisation procedures could also
be used to translate LTL into DRW, the resulting DRW could have non-elementary size in the length
of the formula, making them impractical. We show that, thanks to the single exponential blow-up
of the new procedure, the new translation has double exponential blow-up, which is asymptotically optimal.

It is well-known \cite{DBLP:conf/lics/MullerSS88,DBLP:conf/tacs/Vardi94} that
an LTL formula $\varphi$ of length $n$ can be translated into an AWW with $O(n)$ states.
We show that, if $\varphi$ is in normal form, i.e., a disjunction as in \Cref{thm:logical:flatten}, then the AWW can be chosen so that every path through the automaton switches at most once between accepting and non-accepting states. We then prove that determinising AWWs satisfying this additional property is much simpler than the general case.

The section is structured as follows: \Cref{subsec:aww} introduces basic definitions,
\Cref{subsec:LTLtoAWW2} shows how to translate an $\Delta_2$-formula into AWWs with at most one switch, and
\Cref{subsec:det} presents the determinisation procedure for this subclass of AWWs.

\subsection{Weak and Very Weak Alternating Automata}
\label{subsec:aww}

Let $X$ be a finite set. The set of positive Boolean formulas over $X$, denoted by $\mathcal{B}^+(X)$, is the closure of $X \cup \{ \true, \false \}$ under disjunction and conjunction.
A set $S \subseteq X$ is a model of $\theta \in B^+(X)$ if the truth assignment that assigns true to the elements of $S$ and false to the elements of $X \setminus S$ satisfies $\theta$.
Observe, that if $S$ is a model of $\theta$ and $S \subseteq S'$ then $S'$ is also a model.
A model $S$ is  minimal if no proper subset of $S$ is a model.
The set of minimal models is denoted $\mathcal{M}_\theta$.
Two formulas are equivalent, denoted $\theta \equiv \theta'$, if their set of minimal models is equal, i.e., $\mathcal{M}_\theta = \mathcal{M}_{\theta'}$.

\paragraph{Alternating automata.}
An alternating Büchi word automaton over an alphabet $\Sigma$ is a tuple $\mathcal{A} = \langle \Sigma, Q, \theta_0, \delta, \alpha \rangle$, where $Q$ is a finite set of states, $\theta_0 \in \mathcal{B}^+(Q)$ is an initial formula, $\delta \colon Q \times \Sigma \mapsto \mathcal{B}^+(Q)$ is the transition function, and $\alpha \subseteq Q$ is the acceptance condition. A \emph{run} of $\mathcal{A}$ on the word $w$ is a directed acyclic graph $G=(V, E)$ satisfying the following properties:
\begin{itemize}
\item $V \subseteq Q \times \mathbb{N}_0$, and $E \subseteq \bigcup_{l \geq 0} ((Q \times \{l\}) \times (Q \times \{l+1\}))$.
\item There exists a minimal model $S$ of $\theta_0$ such that $(q, 0) \in V$ if{}f $q \in S$.
\item For every $(q, l) \in V$, either $\delta(q, w[l]) \equiv \false$ or the set  $\{q' \colon ((q,l),(q',l+1))\in E\}$ is a minimal model of $\delta(q, w[l])$.
\item For every $(q,l) \in V \setminus (Q \times \{0\})$ there exists $q' \in Q$ such that $((q',l-1),(q,l)) \in E$.
\end{itemize}
\noindent Runs can be finite or infinite. A run $G$ is \emph{accepting} if
\begin{itemize}
\item[(a)] $\delta(q, w[l]) \not\equiv \false$ for every $(q, l) \in V$, and
\item[(b)] every infinite path of $G$ visits $\alpha$-nodes (that is, nodes $(q, l)$ such that $q \in \alpha$) infinitely often.
\end{itemize}
In particular, every finite run satisfying (a) is accepting. $\mathcal{A}$ accepts a word $w$ iff it has an accepting run $G$ on $w$. The language $\lang(\mathcal{A})$ recognised by $\mathcal{A}$ is the set of words accepted by $\mathcal{A}$. Two automata are equivalent if they recognise the same language.

Alternating \emph{co-Büchi} automata are defined analogously, changing condition (b) by the co-Büchi condition (every infinite path of $G$ only visits $\alpha$-nodes finitely often).
Finally, in alternating \emph{Rabin} automata $\alpha$ is a set of \emph{Rabin pairs} $(F, I) \subseteq Q \times Q$, and (b) is replaced by the Rabin condition (there exists a Rabin pair $(F, I) \in \alpha$ such that every infinite path visits states of $F$ only finitely often and states of $I$ infinitely often).

An automaton is \emph{deterministic} if for every state $q \in Q$ and every letter $a \in \Sigma$ there exists $q' \in Q$ such that $\delta(q,a) = q'$, and \emph{non-deterministic} if for every $q \in Q$ and every $a \in \Sigma$ there exists $Q' \subseteq Q$ such that $\delta(q,a) = \bigvee_{q' \in Q'} q'$.

The following definitions are useful for reasoning about runs: A set $U \subseteq Q$ is called a \emph{level}. If $U \subseteq \alpha$, then $U$ is an \emph{$\alpha$-level}. A level $U'$ is a \emph{successor} of $U$ w.r.t. $a \in \Sigma$, also called $a$-successor, if for every $q \in U$ there is a minimal model $S_q$ of $\delta(q, a)$ such that $U' = \bigcup_{q \in U} S_q$. The $k$-th level of a run $G=(V, E)$ is the set $\{q \colon (q, k) \in V \}$. Observe that a level can be empty, and empty levels are $\alpha$-levels. Further, by definition a level has no successors w.r.t. $a$ if{}f it contains a state $q$ such that $\delta(q, a) \equiv \false$. In particular, every level of an accepting run has at least one successor.

\paragraph{Weak and very weak automata.} Let $\mathcal{A} = \langle \Sigma, Q, \theta_0, \delta, \alpha \rangle$ be an alternating (co-)Büchi automaton. We write $q \trans{} q'$ if there is $a \in \Sigma$ such that $q'$ belongs to some minimal model of $\delta(q, a)$. $\mathcal{A}$
is \emph{weak} if there is a partition $Q_1$, \dots, $Q_m$ of $Q$ such that
\begin{itemize}
\item for every $q, q' \in Q$, if $q \trans{} q'$  then there are $i \leq j$ such that $q \in Q_i$ and $q' \in Q_j$, and
\item for every $1 \leq i \leq m$: $Q_i \subseteq \alpha$ or $Q_i \cap \alpha = \emptyset$.
\end{itemize}
$\mathcal{A}$ is \emph{very weak} or \emph{linear} if it is weak and every class $Q_i$ of the partition is a singleton ($|Q_i| = 1$).
We let $\aww$ and $\alw$ denote the set of weak and very weak alternating automata, respectively.
Observe that for every weak automaton with a co-Büchi acceptance condition we can define a Büchi acceptance condition on the same structure recognising the same language.
Thus we will from now on assume that every weak automaton is equipped with a Büchi acceptance condition.

We define the height of a weak alternating automaton. The definition is very similar, but not identical, to the one of \cite{DBLP:conf/charme/GurumurthyKSV03}.
A weak automaton $\mathcal{A}$ has \emph{height} $n$ if every path $q \rightarrow q' \rightarrow q'' \cdots $ of $\mathcal{A}$ alternates at most $n - 1$ times between $\alpha$ and $Q \setminus \alpha$. For example, the automaton in \Cref{fig:ex:aww:1} has height 3. We let $\aww[n]$ ($\alw[n]$) denote the sets of all (very-)weak alternating automata with height at most $n$. Further, we let $\aww[n,\A]$ (resp. $\aww[n,\R]$) denote the set of automata of $\aww[n]$ whose initial formula satisfies $\theta_0 \in \mathcal{B}(\alpha)^+$ (resp. $\theta_0 \in \mathcal{B}(Q \setminus \alpha)^+$). For example the automaton of \Cref{fig:ex:aww:1} belongs to $\alw[3,\R]$.

\newcommand{\SetToWidestCases}[1]{\mathmakebox[1.1cm][l]{#1}}%

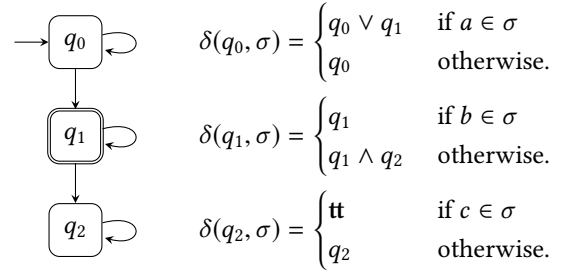
\begin{figure}
  \begin{center}
	\begin{tikzpicture}

	\node[state,initial]   (0) at (0,2.5) {$q_0$};
	\node[state,accepting] (1) at (0,1.25) {$q_1$};
	\node[state]           (2) at (0,0) {$q_2$};

	\node[] (delta0) at (4,2.5)   {$\delta(q_0, \sigma) = \begin{cases} \SetToWidestCases{q_0 \vee q_1} & \text{if } a \in \sigma \\ q_0            & \text{otherwise.} \end{cases}  $};
    \node[] (delta1) at (4,1.25) {$\delta(q_1, \sigma) = \begin{cases} \SetToWidestCases{q_1}          & \text{if } b \in \sigma \\ q_1 \wedge q_2 & \text{otherwise.} \end{cases}  $};
    \node[] (delta2) at (4,0)   {$\delta(q_2, \sigma) = \begin{cases} \SetToWidestCases{\true}        & \text{if } c \in \sigma \\ q_2            & \text{otherwise.} \end{cases}  $};

	\path[->]
	(0) edge[loop right] node[right]{} (0)
	(0) edge node[above]{} (1)
	(1) edge[loop right] node[right]{} (1)
	(1) edge node[above]{} (2)
	(2) edge[loop right] node[right]{} (2);
	\end{tikzpicture}
  \end{center}
  \caption{\alw{} for $\varphi = \F (a \wedge \X\G (b \vee \X\F c))$ with $\Sigma = 2^{\{a,b,c\}}$, $\theta_0 = q_0$, and $\alpha = \{q_1\}$.}
  \label{fig:ex:aww:1}
\end{figure}
\subsection{Translation of LTL to $\alw[2]$}
\label{subsec:LTLtoAWW2}
\newcommand{\Pairs}[1]{\mathit{An}({#1})}
\newcommand{\bsta}[1]{[{#1}]_{\leq \Gamma}}
\newcommand{\bstatwo}[2]{[{#1}]_{\leq {#2}}}
\newcommand{\sta}[2]{\langle{#1}\rangle_{#2}}

In the standard translation \cite{DBLP:conf/tacs/Vardi94} of LTL to $\alw$, the states of the $\alw$ for a formula $\varphi$ are subformulas of $\varphi$, or negations thereof. 
We show that, at the price of a slightly more complicated translation, the resulting $\alw$ for a $\Delta_i$-formula belongs to $\alw[i]$. Thus by using \Cref{thm:logical:flatten} every LTL formula can be translated to an $\alw[2]$.
The idea of the construction is to use subformulas as states ensuring that 
\begin{enumerate}
	\item the transition relation can only lead from a formula to another formula at the same level or a lower level in the syntactic-future hierarchy (\Cref{fig:syntactic-future}), and
	\item accepting states are $\Pi_i$ subformulas.
\end{enumerate}
This immediately leads to \enquote{at most one alternation}. 
However, there is a little technical problem: the level of a formula is not always well-defined, because some formulas do not belong to one single lowest level of the hierarchy. 
For example, $\X a$ belongs to both $\Pi_1$ and $\Sigma_1$. 
So we need a mechanism to \emph{disambiguate} these states. Formally we proceed as follows:

A formula is \emph{proper} if it is neither a Boolean constant ($\true$, $\false$) 
nor a conjunction or disjunction.  A \emph{state} in our modified translation is an expression of the form $\sta{\psi}{\Gamma}$, where $\psi$ is a proper formula, and $\Gamma$ 
is a smallest class of the syntactic-future hierarchy without the zeroth-level (\Cref{def:future_hierarchy}) that contains $\psi$. 
Hence we start with the classes $\Sigma_1$ and $\Pi_1$ and $\Gamma$ lies strictly above $\Delta_0$. 
Observe that for some formulas there is more than one smallest class. 
For example, since $\X a \in \Sigma_1 \cap \Pi_1$, both $\Sigma_1$ and $\Pi_1$ are smallest classes containing $\X a$, and so  both $\sta{\X a}{\Sigma_1}$ and $\sta{\X a}{\Pi_1}$ are states. For other formulas the class is unique. For example, the only state for
$a \W b$ is $\sta{ a \W b}{\Pi_1}$. 

We assign to every formula $\psi$ of LTL and every class $\Gamma$ a Boolean combination of states, denoted $\bsta{\psi}$, as follows:
\begin{itemize}
\item $\bsta{\true} = \true$ and $\bsta{\false} = \false$.
\item $\bsta{\psi_1 \vee \psi_2} = \bsta{\psi_1} \vee \bsta{\psi_2}$
\item $\bsta{\psi_1 \wedge \psi_2} = \bsta{\psi_1} \wedge \bsta{\psi_2}$
\item If $\psi$ is a proper formula, then $\bsta{\psi} = \bigvee_{\Gamma' \leq \Gamma} \, \sta{\psi}{\Gamma'}$, where $\Gamma' \leq \Gamma$ means that $\Gamma' = \Gamma$ or $\Gamma'$ is below $\Gamma$. 
\end{itemize}
For example, we obtain $\bstatwo{\X a}{\Sigma_2} = \sta{\X a}{\Sigma_1} \vee \, \sta{\X a}{\Pi_1}$ and
$\bstatwo{\X a}{\Sigma_1} = \sta{\X a}{\Sigma_1}$. 
Moreover, $\bstatwo{\F a}{\Pi_1} = \false$, since there is no $\Gamma' \leq \Pi_1$ such that $\F a \in \Gamma'$.

Let $\varphi \in \Delta_i$ for some $i \geq 0$, and let $\subf(\varphi)$ be the set of proper subformulas of $\varphi$. The automaton $\mathcal{A}_\varphi = \langle 2^{Ap}, Q, \theta_0, \delta, \alpha \rangle$ is defined as follows:
\begin{itemize}
\item $Q = \{\sta{\psi}{\Gamma} \colon \psi \in \subf(\varphi), \Gamma \leq \Delta_i\}$.
\item $\theta_0 = \bstatwo{\varphi}{\Delta_i}$.
\item $\alpha = \{ \sta{\psi}{\Pi_i} \in Q \colon i > 0 \}$.
\item $\delta$ is the restriction to $Q \times \Sigma$ of the function 
$\delta \colon \mathcal{B}^+(Q) \times \Sigma \to \mathcal{B}^+(Q)$ (notice that we overload $\delta$) defined inductively as follows: \vspace{-0.5em}
\begin{align*}
	\delta(\sta{a}{\Gamma}, \sigma) = & \begin{cases} \true & \text{if } a \in \sigma \\ \false & \text{otherwise} \end{cases} \\
	\delta(\sta{\neg a}{\Gamma}, \sigma) = & \begin{cases} \true & \text{if } a \notin \sigma \\ \false & \text{otherwise} \end{cases} \\
	\delta(\sta{\X \psi}{\Gamma}, \sigma) = & \, \bsta{\psi} \\
	\delta(\sta{\varphi \U \psi}{\Gamma}, \sigma)  = & \, \delta(\bsta{\psi \vee (\varphi \wedge \X (\varphi \U \psi))}, \sigma) \\
	\delta(\sta{\varphi \W \psi}{\Gamma}, \sigma)  = & \, \delta(\bsta{\psi \vee (\varphi \wedge \X (\varphi \W \psi))}, \sigma) \\
	\delta(\sta{\varphi \R \psi}{\Gamma}, \sigma)  = & \, \delta(\bsta{\psi \wedge (\varphi \vee \X (\varphi \R \psi))}, \sigma) \\
	\delta(\sta{\varphi \M \psi}{\Gamma}, \sigma)  = & \, \delta(\bsta{\psi \wedge (\varphi \vee \X (\varphi \M \psi))}, \sigma) 
\end{align*}
All other cases ($\true$, $\false$, $\wedge$, and $\vee$) are defined homomorphically. Observe that the $\Gamma$-bound for the $\U$, $\W$, $\R$, and $\M$ cases suffice, since every $\Gamma$ is closed under conjunction, disjunction and application of $\X$.
\end{itemize}

An example of this construction is displayed in \Cref{fig:ex:aww:1}. The states are labelled $q_0 = \sta{\varphi}{\Sigma_3}$, $q_1 = \sta{\G (b \vee \X \F c)}{\Pi_2}$, and $q_2 = \sta{\F c}{\Sigma_1}$. 

\begin{lemma}
\label{lem:aww:translation}
Let $\varphi$ be a formula of $\Delta_i$. The automaton $\mathcal{A}_\varphi$ belongs to $\alw[i]$, has $2|\subf(\varphi)|$ states, and recognises $\lang(\varphi)$.
\end{lemma}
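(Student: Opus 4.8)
The plan is to prove the three assertions—the state count, membership in $\alw[i]$ (very weakness together with height at most $i$), and the language identity $\lang(\mathcal{A}_\varphi)=\lang(\varphi)$—essentially independently. I would first dispose of the two bookkeeping parts. For the state count, note that a state is a pair $\sta{\psi}{\Gamma}$ with $\psi\in\subf(\varphi)$ proper and $\Gamma$ a \emph{smallest} class of $\psi$ satisfying $\Gamma\le\Delta_i$. The smallest classes of a fixed $\psi$ form an antichain of the hierarchy in \Cref{fig:syntactic-future}; since the only incomparable pairs there are $\Sigma_k$ and $\Pi_k$ at a common level $k$, every antichain has size at most $2$. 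Hence each proper subformula contributes at most two states and $|Q|\le 2|\subf(\varphi)|$. For very weakness I take the partition into singletons and show the reachability relation $\trans{}$ has no cycle through two distinct states: unfolding the transition function shows that every successor of $\sta{\psi}{\Gamma}$ is either $\sta{\psi}{\Gamma}$ itself (the self-loop coming from the $\X(\psi)$ summand when $\psi$ is a $\U,\W,\R,\M$-formula, whose unique smallest class equals $\Gamma$) or a state $\sta{\psi'}{\Gamma'}$ with $\psi'$ a proper subformula of $\psi$; so the formula length strictly decreases on every non-self transition and the corresponding measure is well-founded.

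The height bound is the interesting part, and I would derive it from two structural facts. First, \emph{no state carries a $\Delta$-class}: a proper formula is a literal (smallest classes $\Sigma_1,\Pi_1$), a binary temporal formula ($\U,\M$ give a $\Sigma$-class, $\W,\R$ give a $\Pi$-class), or an $\X$-formula, and $\X\psi$ either inherits the smallest classes of $\psi$ or, when $\psi$ has a $\Delta$-class, is pushed up into $\Sigma_{k+1}\cap\Pi_{k+1}$—never landing in a $\Delta$-class. Thus every state is $\Sigma_k$- or $\Pi_k$-typed, and by the definition of $\alpha$ the $\Pi$-typed states are exactly the accepting ones. Second, \emph{every successor class $\Gamma'$ satisfies $\Gamma'\le\Gamma$}, since the transition is assembled from $\bstatwo{\cdot}{\Gamma}=\bigvee_{\Gamma'\le\Gamma}\sta{\cdot}{\Gamma'}$. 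Writing the level of $\Sigma_k,\Pi_k$ as $k$, the second fact makes the level non-increasing along every path, while the order structure gives that the only class of level $k$ below $\Sigma_k$ (resp. below $\Pi_k$) is $\Sigma_k$ itself (resp. $\Pi_k$ itself), because $\Sigma_k$ and $\Pi_k$ are incomparable. Consequently any transition that changes the accepting status—i.e. switches between $\Sigma$- and $\Pi$-type—must strictly lower the level.

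Since the initial formula $\theta_0=\bstatwo{\varphi}{\Delta_i}$ produces only states of level at most $i$ and all levels are at least $1$, every path performs at most $i-1$ strict level decreases, hence at most $i-1$ alternations between $\alpha$ and $Q\setminus\alpha$, so the height is at most $i$ and $\mathcal{A}_\varphi\in\alw[i]$. I expect this count to be the main obstacle, the crux being precisely the observations that $\Delta$-classes never annotate a state and that the $\le\Gamma$ bound in $\bsta{\cdot}$ forces same-level successors to preserve the accepting status—these are exactly the properties the $\Gamma$-disambiguation was introduced to guarantee. Their proof needs the small auxiliary lemma that, for a proper $\psi$ with smallest class $\Gamma$, every subformula occurring in the one-step unfolding of $\psi$ already has a smallest class $\le\Gamma$ (equivalently, that the $\Gamma$-bound never collapses a needed summand to $\false$); this rests on each class being closed under $\wedge,\vee,\X$, as already remarked in the construction.

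For correctness I would reduce to the standard Vardi-style alternating translation of LTL \cite{DBLP:conf/tacs/Vardi94}: the transition function implements the fixpoint-expansion laws for $\U,\W,\R,\M$, and the acceptance set—non-accepting for the least-fixpoint operators $\U,\M$ and accepting for the greatest-fixpoint operators $\W,\R$—enforces exactly the right fixpoints. I would then argue that collapsing each $\sta{\psi}{\Gamma}$ to $\psi$ maps accepting runs of $\mathcal{A}_\varphi$ to accepting runs of the standard automaton and back, so the annotation does not change the language; the only thing to check is the faithfulness statement above, namely that $\bstatwo{\chi}{\Gamma}\not\equiv\false$ for every subformula $\chi$ that is actually reached, so the refinement drops no successor. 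A clean self-contained alternative, which I would favour for the mechanised proof, is a direct structural induction on $\psi$ showing that for every word $w$ and every state $\sta{\psi}{\Gamma}$ there is an accepting run from $\sta{\psi}{\Gamma}$ on $w$ if and only if $w\models\psi$, whence $\lang(\mathcal{A}_\varphi)=\lang(\bstatwo{\varphi}{\Delta_i})=\lang(\varphi)$.
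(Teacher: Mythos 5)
Your proposal is correct and follows essentially the same route as the paper's proof: the state count via the observation that each proper subformula has at most two smallest classes, the height bound via the fact that transitions satisfy $\Gamma' \leq \Gamma$ so that every $\Sigma$/$\Pi$ alternation (equivalently, $\alpha$/non-$\alpha$ alternation) forces a strict level decrease, and correctness by a structural induction analogous to Vardi's translation. Your additional explicit arguments---that no state is annotated with a $\Delta$-class, that very weakness follows because non-self transitions strictly decrease formula size, and that the $\Gamma$-bound never collapses a needed summand to $\false$---are details the paper leaves implicit, not a different approach.
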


\begin{proof}
Let us first show that $\mathcal{A}_\varphi$ belongs to $\alw[i]$. It follows immediately from the definition of $\mathcal{A}_\varphi$  that 
for every two states $\sta{\psi}{\Gamma}, \sta{\psi'}{\Gamma'}$ of $\mathcal{A}_\varphi$, if 
$\sta{\psi}{\Gamma} \trans{} \sta{\psi'}{\Gamma'}$ then $\Gamma' \leq \Gamma$. So in every path
there are at most $(i-1)$ alternations between $\Sigma$ and $\Pi$ classes.  Since the states of $\alpha$ are those 
annotated with $\Pi$ classes, there are also at most $(i-1)$ alternations between $\alpha$ and non-$\alpha$ states in a path.

To show that $\mathcal{A}_\varphi$ has at most $2|\subf(\varphi)|$ states, observe that for every formula $\psi$
there are at most two smallest classes of the syntactic-future hierarchy containing $\psi$. So $\mathcal{A}_\varphi$ has at most
two states for each formula of $\subf(\varphi)$.

To prove that $\mathcal{A}_\varphi$ recognises $\lang(\varphi)$ one shows by induction on $\psi$ that 
$\mathcal{A}_\varphi$ recognises $\lang(\psi)$ from every Boolean combination of states $\bsta{\psi}$ such that
$\psi \in \Gamma$. The proof is completely analogous to the one appearing in \cite{DBLP:conf/tacs/Vardi94}.
\end{proof}
\subsection{Determinisation of $\aww[2]$}
\label{subsec:det}

\newcommand{\Promising}{\mathit{Promising}}
\newcommand{\Levels}{\mathit{Levels}}

We present a determinisation procedure for $\aww[2,\R]$ and $\aww[2,\A]$ inspired by the break-point construction from \cite{DBLP:journals/tcs/MiyanoH84}. We only describe the construction for $\aww[2,\R]$, as the one for $\aww[2,\A]$ is dual. The following lemma states the key property of $\aww[2,\R]$:

\begin{lemma}\label{lem:rundag:normalform}
Let $\mathcal{A}$ be an $\aww[2,\R]$. $\mathcal{A}$ accepts a word $w$ if and only if there exists a run $G = (V, E)$ of $\mathcal{A}$ on $w$ such that
\begin{itemize}
\item $\delta(q, w[l]) \not \equiv \false$ for every $(q, l) \in V$, and
\item there is a threshold $k \geq 0$ such that for every $l \geq k$ and for every node $(q, l) \in V$ the state $q$ is accepting.
\end{itemize}
\end{lemma}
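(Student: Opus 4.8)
The plan is to prove the two directions separately, with essentially all of the work concentrated in the forward implication. For the backward direction, suppose $G=(V,E)$ is a run satisfying the two displayed conditions. The first condition is exactly clause (a) of the definition of an accepting run, so it only remains to verify clause (b): any infinite path of $G$ passes, for every level $l \geq k$, through a node $(q,l)$ with $q \in \alpha$, hence visits $\alpha$-nodes infinitely often. Thus $G$ is accepting and $\mathcal{A}$ accepts $w$; no new run has to be constructed.

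For the forward direction I would show that \emph{any} accepting run $G$ of $\mathcal{A}$ on $w$ already satisfies both conditions, so that again no modification is needed. Clause (a) of acceptance gives the first condition directly, so the whole content lies in producing the uniform threshold $k$. The key structural observation is that in an $\aww[2,\R]$ the non-accepting nodes of a run can only be reached from non-accepting nodes. Concretely, the level-$0$ nodes form a minimal model of $\theta_0 \in \mathcal{B}(Q\setminus\alpha)^+$ and are therefore all non-accepting; and if an edge $(q,l) \to (q',l+1)$ of $G$ had $q \in \alpha$ and $q' \notin \alpha$, then tracing a run-path from level $0$ to $(q,l)$ and projecting it to the transition graph would yield an automaton path $q_0 \trans{} \cdots \trans{} q \trans{} q'$ whose first state $q_0$ is non-accepting, whose state $q$ is accepting, and whose last state $q'$ is again non-accepting; this path alternates at least twice between $\alpha$ and $Q\setminus\alpha$, contradicting that $\mathcal{A}$ has height $2$. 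Hence every non-accepting node at level $\geq 1$ has a non-accepting predecessor, and by induction every non-accepting node is joined back to a non-accepting level-$0$ node through non-accepting nodes only.

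The remaining step is to turn the fact that every infinite path eventually becomes accepting into a \emph{uniform} bound, and this is where the main obstacle sits: a priori, distinct paths could switch into $\alpha$ at unboundedly large levels. I would rule this out with König's lemma, exploiting that each level of a run contains at most $|Q|$ nodes. Let $N$ be the sub-DAG of $G$ induced by the non-accepting nodes. By the previous paragraph $N$ is closed under predecessors, has at most $|Q|$ nodes per level, and finitely many level-$0$ sources. If $N$ were infinite then, having finite levels, it would contain nodes at arbitrarily large levels; unravelling $N$ from its sources yields a finitely-branching forest (adjoin a virtual root to obtain a single tree) of unbounded depth, so König's lemma produces an infinite branch, i.e.\ an infinite path of $G$ lying entirely in $N$ and hence never visiting $\alpha$. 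This contradicts clause (b) for the accepting run $G$. Therefore $N$ is finite, all its nodes occur below some level, and taking $k$ to be one more than the largest level occupied by a non-accepting node (or $k=0$ if $N=\emptyset$) makes every node at level $\geq k$ accepting, as required. I expect the delicate points to be the careful projection of run-paths to transition paths in the height argument, and the bookkeeping in the König-lemma step ensuring the unravelling is genuinely infinite.
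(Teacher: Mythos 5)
Your proof is correct and follows essentially the same route as the paper's: the forward direction shows that \emph{any} accepting run already has the required shape, using the observation that in an $\aww[2,\R]$ rejecting nodes are reachable only through rejecting nodes (the paper phrases this as ``descendants of accepting nodes are accepting'') and then applying K\"onig's lemma to the finitely-branching sub-DAG of rejecting nodes to extract the uniform threshold; the backward direction just checks the two acceptance clauses. Your write-up is somewhat more explicit than the paper's about why the rejecting sub-DAG is anchored at level $0$ and why the unravelling is a genuine infinite finitely-branching tree, but the argument is the same.
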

\begin{proof}
Assume that $\mathcal{A}$ accepts $w$. Let $G=(V, E)$ be an accepting run of $\mathcal{A}$ on $w$. Since $\mathcal{A}$ is an $\aww[2,\R]$, every path has by definition at most \emph{one} alternation of accepting and rejecting states and all states occurring in the initial formula are marked as rejecting. Hence if a node $(q, l) \in V$ is accepting, i.e. $q \in \alpha$), then all its descendants are accepting. Let $V_r \subseteq V$ be the set of rejecting nodes of $V$, i.e., the nodes $(q, l) \in V$ such that $q \notin \alpha$. Since the descendants of accepting nodes are accepting, the subgraph $G_r = (V_r, E \cap (V_r \times V_r))$ is acyclic and connected. If $V_r$ is infinite, then by Königs lemma $G_r$ has an infinite path of non-accepting nodes, contradicting that $G$ is an accepting run. So $G_r$ is finite, and we can choose the threshold $k$ as the largest level of a node of $V_r$, plus one.

Assume such a run $G=(V, E)$ exists. Condition (a) of an accepting run holds by hypothesis. For condition (b), just observe that, since the descendants of accepting nodes are accepting, and every infinite path of $G$ contains a node of the form $(q, k)$, where $k$ is the threshold level, every infinite path visits accepting nodes infinitely often.
\end{proof}

\noindent However, \Cref{lem:rundag:normalform} does not hold for $\aww[3,\R]$:

\begin{example}
Let $\mathcal{A}$ be the automaton shown in \Cref{fig:ex:aww:1} and let $w = \{a\}(\{b\}\{c\})^\omega$. Observe that $\mathcal{A}$ accepts $w$. We prove by contradiction that no run of $\mathcal{A}$ on $w$ satisfies the properties described in \Cref{lem:rundag:normalform}. Assume such a run exists. By the definition of $\delta$, the run must be infinite. Further, by assumption there exists a threshold $k$ such that all successor levels of the run are exactly $\{q_1\}$. But there exists $k' > k$ such that $w[k'] = \{c\}$. Since $\delta(q_1, \{c\}) = q_1 \wedge q_2$, the $(k' + 1)$-th level of the run contains $q_2$. Contradiction.
\end{example}

Given an automaton $\mathcal{A}$ from $\aww[2,\R]$, we construct a deterministic co-Büchi automaton $\mathcal{D}$ such that $L(\mathcal{A}) = L(\mathcal{D})$.
A state of the DCW $\mathcal{D}$ is a pair $(\Levels, \Promising)$, where $\Levels \subseteq 2^Q$ and $\Promising \subseteq 2^\alpha \cap \Levels$. 
It follows that $\mathcal{D}$ has at most $3^{2^n}$ states.
Intuitively, after reading a finite word $w_{0k} = a_0 \ldots a_k$ the automaton $\mathcal{D}$ is in the state $(\Levels_k$, $\Promising_k)$, where $\Levels_k$ contains the $k$-th levels of every run of $\mathcal{A}$ on all words with $w_{0k}$ as prefix, and $\Promising_k \subseteq \Levels_k$ contains the $\alpha$-levels of $\Levels_k$ that can still \enquote{generate} an accepting run. For this, when $\mathcal{D}$ reads $a_{i+1}$, it moves from $(\Levels_i$, $\Promising_i)$ to $(\Levels_{i+1}$, $\Promising_{i+1})$, where $\Levels_{i+1}$ contains the successors w.r.t. $a_{i+1}$ of $\Levels_i$, and $\Promising_{i+1}$ is defined as follows:
\begin{itemize}
\item If $\Promising_i \neq \emptyset$, then $\Promising_{i+1}$ contains the successors  w.r.t $a_{i+1}$ of $\Promising_i$.
\item If $\Promising_i = \emptyset$, then $\Promising_{i+1}$ contains the $\alpha$-levels of $\Levels_{i+1}$.
\end{itemize}
\noindent Finally, the co-Büchi condition contains the states $(\Levels$, $\Promising)$ such that $\Promising=\emptyset$.

Intuitively, during its run on a word $w$, the automaton $\mathcal{D}$ tracks the promising levels, removing those without successors, because they can no longer produce an accepting run. If the  $\Promising$ set becomes empty infinitely often, then every run of $\mathcal{A}$  on $w$ contains a level without successors, and so $\mathcal{A}$ does not accept $w$. If after some number of steps, say $k$, the $\Promising$ set never becomes empty again, then $\mathcal{A}$ has a run on $w$ such that every level is an $\alpha$-level and has at least one successor, and so this run is accepting.

For the formal definition of $\mathcal{D}$ it is convenient to identify subsets of ${2^Q}$ and $2^\alpha$ with formulas of $\mathcal{B}^+(Q)$, $\mathcal{B}^+(\alpha)$ (i.e., we identify a   formula and its set of models). Further, we lift $\delta \colon Q \times \Sigma \mapsto \mathcal{B}(Q)^+$ to $\delta \colon \mathcal{B}^+(Q) \times \Sigma \mapsto \mathcal{B}^+(Q)$ in the canonical way. Finally, given $\varphi \in \mathcal{B}^+(Q)$ and $S \subseteq Q$, we let 
$\varphi[\false / S]$ denote the result of substituting $\false$ for every state of $Q \setminus \alpha$ in $\delta(q, a)$. With these notations, the deterministic Büchi automaton $\mathcal{D}$ equivalent to $\mathcal{A}$ can be described in four lines: $\mathcal{D} = \langle \Sigma, Q', q_0', \delta', \alpha' \rangle$,  where $Q' = \mathcal{B}^+(Q) \times \mathcal{B}^+(\alpha)$, $q_0' = (\theta_0, \false)$, $\alpha' = \{(\theta, \false) \colon \theta \in \mathcal{B}^+(Q) \}$, and
\[\delta'((q, p), a) = \begin{cases}
	(\delta(q, a), \delta(p, a)) &  \text{if $p \not \equiv \false$} \\
	(\delta(q, a), \delta(q, a)[\false / Q \setminus \alpha]) & \text{otherwise.}
\end{cases}\]
\begin{lemma}\label{lem:det:1}
For every $\mathcal{A} \in \aww[2,\R]$ with $n$ states, the deterministic co-Büchi automaton $\mathcal{D}$ defined above satisfies $L(\mathcal{A}) = L(\mathcal{D})$, and has $3^{2^n}$ states. Dually, for every $\mathcal{A}' \in \aww[2,\A]$ with $n'$ states, there exists a deterministic Büchi automaton $\mathcal{D}'$ that has $3^{2^{n'}}$ states and that satisfies $L(\mathcal{A}') = L(\mathcal{D}')$.
\end{lemma}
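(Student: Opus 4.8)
The plan is to relate the formal automaton $\mathcal{D} = \langle \Sigma, Q', q_0', \delta', \alpha' \rangle$ to its intuitive description as a pair $(\Levels_k, \Promising_k)$ of a set of levels and a set of promising $\alpha$-levels, and then to transfer acceptance back and forth through \Cref{lem:rundag:normalform}. Writing $(q_k, p_k)$ for the unique state reached by $\mathcal{D}$ after reading $w_{0k}$, I would first prove by induction on $k$ the correspondence that the minimal models of $q_k$ are exactly the (minimal) $k$-th levels of runs of $\mathcal{A}$ on $w$, i.e. the levels in $\Levels_k$, and that the minimal models of $p_k$ are the promising $\alpha$-levels $\Promising_k$. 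The only algebraic facts needed are that the minimal models of $\delta(\theta, a)$ are the minimal $a$-successors of the minimal models of $\theta$ (so that applying $\delta$ on formulas mirrors taking successors on levels), and that the substitution $\theta[\false/Q\setminus\alpha]$ keeps exactly those models contained in $\alpha$, so that the second branch of $\delta'$ reseeds $\Promising$ with the $\alpha$-levels of $\Levels_{k+1}$. This bookkeeping, together with the reset-logic of the two cases of $\delta'$, is where most of the work sits.

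For the inclusion $\lang(\mathcal{A}) \subseteq \lang(\mathcal{D})$, I would take $w \in \lang(\mathcal{A})$ and apply \Cref{lem:rundag:normalform} to obtain a run $G$ of $\mathcal{A}$ all of whose transitions are non-$\false$ and whose nodes are accepting beyond some threshold $k_0$; without loss of generality its levels are minimal models, so its level $L^G_l$ lies in $\Levels_l$ for every $l$ and is an $\alpha$-level for $l \geq k_0$. I would then show that along the run of $\mathcal{D}$ the set $\Promising$ is non-empty from some point on: at the first index $k \geq k_0$ with $p_k \equiv \false$, the reset branch of $\delta'$ puts every $\alpha$-level of $\Levels_{k+1}$ into $\Promising_{k+1}$, in particular $L^G_{k+1}$; and since for every $k' > k+1$ the level $L^G_{k'}$ is a successor of $L^G_{k'-1}$, is again an $\alpha$-level, and has a successor, an easy induction shows $L^G_{k'} \in \Promising_{k'}$, so $\Promising$ never empties again. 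Hence $\mathcal{D}$'s run visits $\alpha'$ only finitely often and $w \in \lang(\mathcal{D})$.

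For the converse $\lang(\mathcal{D}) \subseteq \lang(\mathcal{A})$, co-B\"uchi acceptance of $\mathcal{D}$ gives a $k_0$ with $\Promising_k \neq \emptyset$ for all $k \geq k_0$, so for these $k$ the run takes the non-reset branch and $\Promising_{k+1}$ consists of the successors of $\Promising_k$. I would build the finitely branching tree whose depth-$k$ nodes are the levels in $\Promising_k$ and whose edges are the successor relation; it is infinite because $\Promising_k \neq \emptyset$ for all $k \geq k_0$, so K\"onig's lemma yields an infinite thread $(U_k)_{k \geq k_0}$ of $\alpha$-levels with $U_{k+1}$ a successor of $U_k$. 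Since $U_{k_0} \in \Levels_{k_0}$ is reachable, there is a prefix sequence of levels from a minimal model of $\theta_0$ to $U_{k_0}$ using only non-$\false$ successor steps; concatenating it with $(U_k)_{k \geq k_0}$ determines a run of $\mathcal{A}$ whose transitions are all non-$\false$ and whose levels are $\alpha$-levels for every $l \geq k_0$. By the converse direction of \Cref{lem:rundag:normalform} this run is accepting, so $w \in \lang(\mathcal{A})$.

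Finally, the state bound follows by counting: a reachable state is determined by a pair $(\Levels, \Promising)$ with $\Promising \subseteq \Levels \subseteq 2^Q$, and each of the $2^n$ elements of $2^Q$ is either outside $\Levels$, in $\Levels$ but not $\Promising$, or in $\Promising$, giving at most $3^{2^n}$ states. The dual statement for $\mathcal{A}' \in \aww[2,\A]$ is obtained by the dual of the whole construction, exchanging the roles of $\alpha$ and $Q\setminus\alpha$ and of the B\"uchi and co-B\"uchi conditions, so no separate argument is needed. I expect the main obstacle to be the correspondence lemma of the first step---pinning down precisely, in terms of minimal models, how the formula-level transition $\delta'$ and the substitution $[\false/Q\setminus\alpha]$ mirror the successor and $\alpha$-level operations on sets of levels---since both correctness directions rest entirely on it, and because the K\"onig's-lemma thread of the converse must be stitched onto a genuine initial segment to yield a run in the sense of the run definition.
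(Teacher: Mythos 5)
Your proposal is correct and follows essentially the same route as the paper's proof: both directions reduce to \Cref{lem:rundag:normalform}, the forward direction by tracking the levels of the normalised accepting run through the $\Promising$ component, the converse by extracting an eventually-$\alpha$ run from the eventually-non-empty $\Promising$ sets, and the state count by the three-way classification of subsets of $Q$. Two points where you differ, both to your credit: first, you make explicit the correspondence lemma between the formula-states $(q_k,p_k)$ of $\mathcal{D}$ and the set-based pairs $(\Levels_k,\Promising_k)$, which the paper silently absorbs into its identification of a positive Boolean formula with its set of (minimal) models; second, in the converse direction the paper greedily picks $U_{i+1}$ as a successor of the previously chosen $U_i \in \Promising_i$, which is not obviously possible --- $\Promising_{i+1} \neq \emptyset$ only guarantees that \emph{some} level of $\Promising_i$ has a successor, not the particular one already chosen --- whereas your K\"onig's-lemma argument on the tree of promising levels extracts a genuine infinite thread and then stitches on the finite prefix, repairing this small imprecision. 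The rest (the reset-branch analysis after the first emptying of $\Promising$ past the threshold, and the duality for $\aww[2,\A]$ via complementation) matches the paper.
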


\begin{proof}
Assume $w$ is accepted by $\mathcal{A}$.
Let $G = (V, E)$ be an accepting run of $\mathcal{A}$ on $w$.
By \Cref{lem:rundag:normalform} there exists an index $k$ such that all levels of $G$ after the $k$-th one are contained in $\alpha$ and have at least one successor.
Therefore, the run $(\Levels_0, \Promising_0), (\Levels_1, \Promising_1) \ldots$ of $\mathcal{D}$ on $w$ satisfies $\Promising_i \neq \emptyset$ for almost all $i$, and so $\mathcal{D}$ accepts.

Assume $w$ is accepted by $\mathcal{D}$. Let $(\Levels_0$, $\Promising_0)$, $(\Levels_1$, $\Promising_1) \ldots$ be the run of $\mathcal{D}$ on $w$.
By definition, there is a $k \geq 0$ such that $\Promising_i \neq \emptyset$ for every $i \geq k$. Choose levels $U_0, U_1, \ldots, U_{k}$ such that
\begin{itemize}
\item $U_k \in \Promising_k$, and
\item for every $1 \leq i \leq k$, choose $U_{i-1}$ as a predecessor of $U_i$ (this is always possible by the definition of $\delta'$).
\end{itemize}
Further, for every $i \geq k$ choose $U_{i+1}$ as a successor of $U_i$. Now, let
$G=(V, E)$ be the graph given by
\begin{itemize}
\item for every $l \geq 0$, $(q, l) \in V$ if{}f $q \in U_l$; and
\item $((q,l), (q', l+1)) \in E$ if{}f $q \in U_l$ and $q' \in S_q$, where $S_q$ is the minimal
model of $\delta(q, w[l])$ used in the definition of successor level.
\end{itemize}
It follows immediately from the definitions that $G$ is an accepting run of $\mathcal{A}$.
The second part is proven by complementing $\mathcal{A}'$, applying the just described construction, and replacing the co-Büchi condition by a Büchi condition.
\end{proof}

\noindent This result leads to a determinisation procedure for $\aww[2]$.

\begin{lemma}\label{lem:det:3}
For every $\mathcal{A} = \langle \Sigma, Q, \theta_0, \delta, \alpha \rangle \in \aww[2]$ with $n = |Q|$ states and $m = |\mathcal{M}_{\theta_0}|$ minimal models of $\theta_0$ there exists an equivalent deterministic Rabin automaton $\mathcal{D}$ with $2^{2^{n + \log_2 m + 2}}$ states and with $m$ Rabin pairs.
\end{lemma}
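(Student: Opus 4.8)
The plan is to reduce the general case to the two special cases already solved by \Cref{lem:det:1} by decomposing $L(\mathcal{A})$ according to the minimal models of the initial formula. For $\theta \in \mathcal{B}^+(Q)$ write $\mathcal{A}_\theta$ for the automaton obtained from $\mathcal{A}$ by replacing $\theta_0$ with $\theta$. First I would establish the standard fact that the initial formula acts as a Boolean combination of the single-state languages, i.e. $L(\mathcal{A}_{\theta \vee \theta'}) = L(\mathcal{A}_\theta) \cup L(\mathcal{A}_{\theta'})$ and $L(\mathcal{A}_{\theta \wedge \theta'}) = L(\mathcal{A}_\theta) \cap L(\mathcal{A}_{\theta'})$. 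Since $\theta_0 \equiv \bigvee_{S \in \mathcal{M}_{\theta_0}} \bigwedge_{q \in S} q$ and every conjunction splits along $\alpha$, this yields
\[
L(\mathcal{A}) = \bigcup_{S \in \mathcal{M}_{\theta_0}} \Big( L\big(\mathcal{A}_{\bigwedge_{q \in S \cap \alpha} q}\big) \cap L\big(\mathcal{A}_{\bigwedge_{q \in S \setminus \alpha} q}\big) \Big).
\]
Because $\mathcal{A}$ has height at most $2$, the factor $\mathcal{A}_S^{\A} \coloneqq \mathcal{A}_{\bigwedge_{q \in S \cap \alpha} q}$ has an initial formula over accepting states and hence lies in $\aww[2,\A]$, while $\mathcal{A}_S^{\R} \coloneqq \mathcal{A}_{\bigwedge_{q \in S \setminus \alpha} q}$ has an initial formula over non-accepting states and hence lies in $\aww[2,\R]$.

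Next I would apply \Cref{lem:det:1} to each of the $2m$ factors: every $\mathcal{A}_S^{\A}$ determinises to a deterministic Büchi automaton $\mathcal{D}_S^{\A}$ and every $\mathcal{A}_S^{\R}$ to a deterministic co-Büchi automaton $\mathcal{D}_S^{\R}$, each with at most $3^{2^n}$ states. Let $\mathcal{D}$ be the product of all $2m$ of these automata; as a product of deterministic automata it is deterministic, and its state count is at most $(3^{2^n})^{2m} \le (2^{2^{n+1}})^{2m} = 2^{m \cdot 2^{n+2}} = 2^{2^{n + \log_2 m + 2}}$, matching the claimed bound. For the acceptance I would equip $\mathcal{D}$ with one Rabin pair $(F_S, I_S)$ per model $S$, where $I_S$ collects the product states whose $\mathcal{D}_S^{\A}$-component is accepting and $F_S$ the product states whose $\mathcal{D}_S^{\R}$-component lies in the co-Büchi set. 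A run satisfies pair $S$, i.e. visits $F_S$ finitely and $I_S$ infinitely often, exactly when its two projections are accepted by $\mathcal{D}_S^{\A}$ and $\mathcal{D}_S^{\R}$, so the Rabin condition (the disjunction over its $m$ pairs) recognises $\bigcup_{S}\big(L(\mathcal{D}_S^{\A}) \cap L(\mathcal{D}_S^{\R})\big) = L(\mathcal{A})$, as required.

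The only genuinely delicate step is the conjunctive language decomposition of the first paragraph. The disjunctive identity is immediate from the run definition, but $L(\mathcal{A}_{\theta \wedge \theta'}) = L(\mathcal{A}_\theta) \cap L(\mathcal{A}_{\theta'})$ requires combining two accepting run DAGs into one (and, conversely, splitting one), which has to be argued carefully because of the minimal-model requirement on successor levels and the possibility that the two runs share nodes. I expect this to be the main obstacle; it is, however, a standard and monotone property of alternating automata, and once it is in place the remainder is a routine product construction together with the bookkeeping that turns the $2m$ factors of size $3^{2^n}$ into the stated doubly-exponential bound with $m$ Rabin pairs.
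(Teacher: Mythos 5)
Your proposal is correct and follows essentially the same route as the paper: decompose $L(\mathcal{A})$ over the minimal models of $\theta_0$, split each model $S$ into $S\cap\alpha$ and $S\setminus\alpha$ to land in $\aww[2,\A]$ and $\aww[2,\R]$, determinise each factor via \Cref{lem:det:1}, and recombine by product/union with one Rabin pair per model, arriving at the same $2^{2^{n+\log_2 m+2}}$ bound. The conjunctive decomposition you flag as delicate is also used (without explicit proof) in the paper, so your treatment is on par with the original.
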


\begin{proof}
Let $\mathcal{A}=  \langle \Sigma, Q, \theta_0, \delta, \alpha \rangle$. Given $Q' \subseteq Q$, let $\mathcal{A}_{Q'}$ be the AWW[2] obtaining from $\mathcal{A}$ by substituting $\bigwedge_{q \in Q'} q$ for the initial formula $\theta_0$.  We claim that for each minimal model $S \in \mathcal{M}_{\theta_0}$ we can construct a deterministic Rabin automaton (DRW) $\mathcal{D}_S$ with at most $2^{2^{n+2}}$ states and a single Rabin pair, recognising the same language as $\mathcal{A}_{S}$. Let us first see how to construct $\mathcal{D}$, assuming the claim holds. By the claim we have $\lang(\mathcal{A}) = \bigcup_{S \in \mathcal{M}_{\theta_0}} \lang(\mathcal{A}_{S})$.
So we define $\mathcal{D}$ as the union of all the automata $\mathcal{D}_S$. 
Recall that given two  DRWs with $n_1, n_2$ states and $p_1, p_2$ Rabin pairs we can construct a DRW for the union of their languages with $n_1 \times n_2$ states and $n_1 + n_2$ pairs. Since $\theta_0$ has $m$ models, $\mathcal{D}$ has at most $m$ Rabin pairs and $\left(2^{2^{n+2}}\right)^m = 2^{2^{n + \log_2 m + 2}}$ states.

It remains to prove the claim. Partition $S$ into $ S \cap \alpha$ and $S \setminus \alpha$. We have $\mathcal{A}_{S \cap \alpha} \in \aww[2,\A]$ and $\mathcal{A}_{S \setminus \alpha} \in \aww[2,\R]$. By \Cref{lem:det:1} there exists a deterministic Büchi automaton $\mathcal{D}_{S \cap \alpha}$ and a deterministic co-Büchi automaton $\mathcal{D}_{S \setminus \alpha}$ equivalent to  $\mathcal{A}_{S \cap \alpha}$ and
$\mathcal{A}_{S \setminus \alpha}$, respectively, both with at most $3^{2^n}$ states. Intersecting these two automata yields a deterministic Rabin automaton with at most $3^{2^{n+1}} \leq 2^{2^{n+2}}$ states and a single Rabin pair, and we are done.
\end{proof}

\subsection{Translation of LTL to DRW}

We combine the normalisation procedure and the translation of LTL to $\alw$ of the previous section to obtain for every formula of LTL an equivalent DRW of double exponential size. 
Given a formula $\varphi$ we have: 
$\varphi \equiv \bigvee_{\substack{\setmu \subseteq \sfmu(\varphi)\\\setnu \subseteq \sfnu(\varphi)}} \varphi_{\setmu,\setnu}$
\noindent where
\[
\varphi_{\setmu,\setnu} = \left( \flattentwo{\varphi}{\setmu} \wedge \bigwedge_{\psi \in \setmu} \G\F(\evalmu{\psi}{\setnu}) \wedge \bigwedge_{\psi \in \setnu} \F\G(\evalnu{\psi}{\setmu}) \right)
\]
Using the results of Section \ref{subsec:LTLtoAWW2}, we translate each formula $\varphi_{\setmu,\setnu}$ to an $\alw[2]$, and then, applying the determinisation algorithm of  Section \ref{subsec:det}, to a DRW. Finally, using the well-known union operation for DRWs, we obtain a DRW for $\varphi$.

In order to bound the number of states of the final DRW, we first need to determine the number of states of the $\alw$ for each $\varphi_{\setmu,\setnu}$.

\begin{restatable}{lemma}{lemSizeSf}\label{lem:size:sf}
Let $\varphi$ be a formula. For every $M \subseteq \sfmu(\varphi)$ and $N \subseteq \sfnu(\varphi)$, there exists an $\alw[2]$ with $O(|\subf(\varphi)|)$ states that recognises $\lang(\varphi_{\setmu,\setnu})$.
\end{restatable}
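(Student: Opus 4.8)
The plan is to reduce the statement to \Cref{lem:aww:translation}, which already turns any $\Delta_i$-formula into an $\alw[i]$ with $2|\subf(\cdot)|$ states recognising its language. Concretely, I would show two things: (i) $\varphi_{\setmu,\setnu} \in \Delta_2$, so that the produced automaton lies in $\alw[2]$; and (ii) the number of \emph{distinct} proper subformulas of $\varphi_{\setmu,\setnu}$ is $O(|\subf(\varphi)|)$, so that the $2|\subf(\varphi_{\setmu,\setnu})|$ bound of \Cref{lem:aww:translation} becomes the claimed $O(|\subf(\varphi)|)$. Part (ii) is the crux: one may \emph{not} bound the states by the textual length of $\varphi_{\setmu,\setnu}$, because the dual flattening $\flattentwo{\varphi}{\setnu}$ can be exponentially longer than $\varphi$. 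The point to exploit is that $\subf$ denotes the \emph{set} of proper subformulas, so only their number matters.

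For (i), a straightforward induction on the formula—dual to the remark following \Cref{def:evalmunu}—gives $\flattentwo{\varphi}{\setnu} \in \Pi_2$, while \Cref{def:evalnu,def:evalmu} give $\evalnu{\psi}{\setmu} \in \Pi_1$ and $\evalmu{\psi}{\setnu} \in \Sigma_1$. Hence each conjunct $\G\F(\evalmu{\psi}{\setnu})$ lies in $\Pi_2$ and each conjunct $\F\G(\evalnu{\psi}{\setmu})$ lies in $\Sigma_2$, so $\varphi_{\setmu,\setnu}$ is a Boolean combination of $\Sigma_2$- and $\Pi_2$-formulas and therefore belongs to $\Delta_2$ by \Cref{def:future_hierarchy}.

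For (ii), the key observation is that all three substitution operators are \emph{nearly homomorphic}: $\evalmu{\cdot}{\setnu}$ and $\evalnu{\cdot}{\setmu}$ commute with the syntax tree except for replacing certain binary modalities by $\true$/$\false$ or swapping the operator (\Cref{def:evalnu,def:evalmu}), and $\flattentwo{\cdot}{\setnu}$ commutes with the syntax tree except that its $\U$- and $\M$-cases insert a fresh conjunct $\F(\evalmu{\cdot}{\setnu})$ built from an argument of the node (\Cref{def:evalnumu}). A structural induction then yields the containments
\begin{align*}
\subf(\evalmu{\psi}{\setnu}) &\subseteq \{\evalmu{\chi}{\setnu} : \chi \in \subf(\psi)\} \cup \{\true,\false\}, \\
\subf(\evalnu{\psi}{\setmu}) &\subseteq \{\evalnu{\chi}{\setmu} : \chi \in \subf(\psi)\} \cup \{\true,\false\}, \\
\subf(\flattentwo{\psi}{\setnu}) &\subseteq \{\flattentwo{\chi}{\setnu} : \chi \in \subf(\psi)\} \cup \{\F(\evalmu{\chi}{\setnu}) : \chi \in \subf(\psi)\} \cup \textstyle\bigcup_{\chi \in \subf(\psi)}\subf(\evalmu{\chi}{\setnu}),
\end{align*}
each right-hand side being indexed by subformulas of $\psi$ and hence of size $O(|\subf(\psi)|)$. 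Taking $\psi = \varphi$ and adding the constantly many outer operators of the $\G\F(\cdot)$ and $\F\G(\cdot)$ conjuncts—which merely wrap the already-counted $\evalmu{\cdot}{\setnu}$ and $\evalnu{\cdot}{\setmu}$ formulas—gives $|\subf(\varphi_{\setmu,\setnu})| = O(|\subf(\varphi)|)$. Feeding $\varphi_{\setmu,\setnu} \in \Delta_2$ into \Cref{lem:aww:translation} then produces the desired $\alw[2]$ with $O(|\subf(\varphi)|)$ states recognising $\lang(\varphi_{\setmu,\setnu})$.

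The step I expect to be the main obstacle is the third containment above: one must verify that the conjuncts $\F(\evalmu{\chi}{\setnu})$ introduced by flattening at every until/strong-release node never escape the index set $\subf(\varphi)$, i.e. that flattening manufactures no genuinely new temporal subformulas beyond applying $\flattentwo{\cdot}{\setnu}$ and $\F(\evalmu{\cdot}{\setnu})$ to subformulas of the input. This is precisely where the exponential blow-up of the \emph{length} of $\flattentwo{\varphi}{\setnu}$ is reconciled with a linear bound on the \emph{number of distinct} subformulas, and it should be discharged by a routine induction using \Cref{def:evalnumu} together with the two $\evalmu{\cdot}{\setnu}$/$\evalnu{\cdot}{\setmu}$ containments.
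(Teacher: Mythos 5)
Your proposal is correct and follows essentially the same route as the paper: reduce to \Cref{lem:aww:translation} and then bound the number of \emph{distinct} proper subformulas of $\varphi_{\setmu,\setnu}$ by $O(|\subf(\varphi)|)$ via a structural induction showing that $\evalnu{\cdot}{\setmu}$, $\evalmu{\cdot}{\setnu}$ and the flattening operator each introduce only constantly many new subformulas per subformula of $\varphi$ (the paper phrases this as an incremental count over an enumeration of $\subf(\varphi)$ in subformula order, whereas you phrase it as explicit set containments, but the content is the same). Your additional explicit check that $\varphi_{\setmu,\setnu}\in\Delta_2$ is left implicit in the paper's proof but is established elsewhere in the text, so nothing is missing.
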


\begin{proof}
By \Cref{lem:aww:translation}, some $\alw[2]$ with $O(|\subf(\varphi_{\!\setmu,\setnu}) |)$ states recognises $\lang(\varphi_{\setmu,\setnu})$. So it suffices to show that $|\subf(\varphi_{\setmu,\setnu}) | \in O(|\subf(\varphi)|)$, which follows from these claims, proved in 
\ifarxiv
\Cref{appendix:omitted_proofs}:
\else
the appendix of the extended version of this paper \cite{XXXX:technical_report}:
\fi
\begin{enumerate}
	\item $|\bigcup \{\subf(\evalnu{\psi}{\setmu}) : \psi \in \subf(\varphi)\} | \leq |\subf(\varphi)|$; 
	\item $|\bigcup \{\subf(\evalmu{\psi}{\setnu}) : \psi \in \subf(\varphi)\} | \leq |\subf(\varphi)|$; 
	\item $|\subf(\flatten{\varphi}{\setmu})| \leq 3 |\subf(\varphi)|$. \qedhere
\end{enumerate}

\end{proof}

\begin{proposition}
Let $\varphi$ be a formula with $n$ proper subformulas. 	There exists a deterministic Rabin automaton recognising $\lang(\varphi)$ with $2^{2^{\mathcal{O}(n)}}$ states and $2^n$ Rabin pairs.
\end{proposition}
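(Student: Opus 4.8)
The plan is to assemble the final $\drw$ from the three machines already developed: the $\Delta_2$-normal form of \Cref{thm:logical:flatten:dual}, the translation of each normalised disjunct into a very weak automaton of height two (\Cref{lem:size:sf}), and the break-point determinisation of such automata (\Cref{lem:det:3}), glued together by the standard union construction for deterministic Rabin automata recalled in the proof of \Cref{lem:det:3}. Concretely, I would first write $\varphi \equiv \bigvee_{\setmu,\setnu} \varphi_{\setmu,\setnu}$ as in \Cref{thm:logical:flatten:dual}, where $\setmu$ ranges over subsets of $\sfmu(\varphi)$ and $\setnu$ over subsets of $\sfnu(\varphi)$. Since $\sfmu(\varphi)$ and $\sfnu(\varphi)$ are disjoint sets of proper subformulas of $\varphi$, we have $|\sfmu(\varphi)| + |\sfnu(\varphi)| \leq n$, so the number of pairs $(\setmu,\setnu)$, and hence the number of disjuncts, is at most $2^{|\sfmu(\varphi)|+|\sfnu(\varphi)|} \leq 2^n$.

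Next, for each disjunct I would invoke \Cref{lem:size:sf} to obtain an $\alw[2]$ $\mathcal{A}_{\setmu,\setnu}$ with $\bigO(n)$ states recognising $\lang(\varphi_{\setmu,\setnu})$, and then feed it into \Cref{lem:det:3}. Because the input automaton has $\bigO(n)$ states, its initial formula has at most $2^{\bigO(n)}$ minimal models, so the additive term $\log_2 m$ appearing in the bound of \Cref{lem:det:3} is itself $\bigO(n)$; the resulting $\drw$ $\mathcal{D}_{\setmu,\setnu}$ therefore has $2^{2^{\bigO(n)}}$ states. Finally I would take the union of the (at most $2^n$) automata $\mathcal{D}_{\setmu,\setnu}$. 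For the state count, the union multiplies the state sets, so the total is bounded by $\big(2^{2^{\bigO(n)}}\big)^{2^n} = 2^{2^{\bigO(n)}\cdot 2^{n}} = 2^{2^{\bigO(n)}}$, which is still doubly exponential, as claimed.

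For the number of Rabin pairs I would use that the union construction adds the pair counts, so the final automaton carries $\sum_{\setmu,\setnu} m_{\setmu,\setnu}$ pairs, where $m_{\setmu,\setnu}$ is the number of pairs of $\mathcal{D}_{\setmu,\setnu}$. Here I would exploit the shape of $\varphi_{\setmu,\setnu}$: its top-level connective is a conjunction of proper formulas ($\flattentwo{\varphi}{\setnu}$ together with the $\G\F$- and $\F\G$-conjuncts), so by committing to one smallest hierarchy class per conjunct the initial formula of $\mathcal{A}_{\setmu,\setnu}$ becomes a conjunction of single states $\sta{\psi}{\Gamma'}$ and thus has a unique minimal model. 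By \Cref{lem:det:3} (with $m=1$) each $\mathcal{D}_{\setmu,\setnu}$ then contributes a single Rabin pair, and summing over the at most $2^n$ disjuncts yields the claimed $2^n$ pairs. I expect this last bookkeeping to be the main obstacle: the disambiguation of states in the $\alw$-translation can in principle split a proper conjunct across two classes, so the delicate point is to verify that one may always fix a single valid class per conjunct without changing the recognised language, keeping the number of minimal models of $\theta_0$—and hence the number of pairs contributed by each disjunct—equal to one.
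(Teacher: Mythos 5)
Your proposal follows essentially the same route as the paper's proof: normalise, translate each disjunct $\varphi_{\setmu,\setnu}$ to an $\alw[2]$ with $O(n)$ states via \Cref{lem:size:sf}, determinise it via \Cref{lem:det:3}, and combine the at most $2^n$ resulting automata with the DRW union, including the identical state-count computation $\bigl(2^{2^{O(n)}}\bigr)^{2^n}=2^{2^{O(n)}}$. The only divergence is the Rabin-pair bookkeeping: the paper simply asserts that each disjunct contributes a single Rabin pair, whereas you correctly note that this hinges on the initial formula of each $\alw[2]$ having a unique minimal model and sketch how to enforce it by fixing one hierarchy class per conjunct---extra care on a point the paper leaves implicit.
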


\begin{proof}
By \Cref{lem:size:sf} the set $\subf(\varphi_{\setmu,\setnu})$ has at most $O(n)$ elements
 for every $\setmu, \setnu$. Further, due to \Cref{lem:aww:translation} the automaton 
 $\mathcal{A}_{\varphi_{\setmu,\setnu}}$ belongs to $\alw[2]$ and has at most $O(n)$ states. Applying the construction of \Cref{lem:det:3} we obtain a DRW with $2^{2^{O(n)}}$ states and a single Rabin pair. Using the union operation for DRWs we obtain a DRW for $\varphi$ with 
 $\left(2^{2^{O(n)}}\right)^{2^n} = 2^{2^{O(n)}}$ states.
\end{proof}

\begin{remark}
The construction of \Cref{lem:det:1} is close to Miyano and Hayashi's translation of alternating automata to non-de\-ter\-min\-ist\-ic automata \cite{DBLP:journals/tcs/MiyanoH84}, and to Schneider's
translation of $\Sigma_2$ formulas to deterministic co-Büchi automata \cite[p.219]{DBLP:series/txtcs/Schneider04}, all based on the break-point idea.
\end{remark}

\subsection{Determinisation of Lower Classes}

We now determinise $\aww[1]$.
A deterministic automaton is \emph{terminal-accepting} if all states are rejecting except a single accepting sink with a self-loop, and \emph{terminal-rejecting} if all states are accepting except a single rejecting sink with a self-loop. It is easy to see that terminal-accepting and terminal-rejecting deterministic automata are closed under union and intersection. When applied to $\aww[1,\A]$, the construction of \Cref{lem:det:1}, yields automata whose states have a trivial $\Promising$ set (either the empty set or the complete level). Further, the successor of an $\alpha$-level is also an $\alpha$-level. From these observations we easily get:

\begin{corollary}\label{lem:det:4}
Let $\mathcal{A}$ be an automaton with $n$ states.
\begin{itemize}
   \item If $\mathcal{A} \in \aww[1,\R]$ (resp. $\mathcal{A} \in \aww[1,\A]$), then there exists a deterministic terminal-accepting (resp. termi\-nal-re\-jecting) automaton recognising $\lang(\mathcal{A})$ with $2^{2^n}$ states.
    \item If $\mathcal{A} \in \aww[1]$, then there exists deterministic weak automaton recognising $\lang(\mathcal{A})$ with $2^{2^{n + \log_2 |\mathcal{M}_{\theta_0}| + 1}}$ states.
\end{itemize}
\end{corollary}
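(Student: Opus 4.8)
The plan is to specialise the break-point determinisation of \Cref{lem:det:1} to the height-$1$ case and to observe that the $\Promising$ component degenerates; this both removes the factor that turns $2^{2^n}$ into $3^{2^n}$ and collapses the acceptance condition into a terminal one. The second bullet then follows from the same minimal-model decomposition already used in \Cref{lem:det:3}.

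First I would treat $\aww[1,\R]$. Since the automaton has height $1$ and $\theta_0 \in \mathcal{B}^+(Q \setminus \alpha)$, no path ever alternates, so every reachable state is rejecting and the only reachable $\alpha$-level is the empty level $\emptyset$. Running the construction of \Cref{lem:det:1}, the set $\Promising \subseteq 2^\alpha \cap \Levels$ can therefore only be $\emptyset$ or $\{\emptyset\}$; since $\emptyset$ is its own successor, once $\emptyset$ enters $\Promising$ (equivalently, once $\emptyset \in \Levels$) it stays there forever, and the co-Büchi condition ``$\Promising = \emptyset$ only finitely often'' simplifies to ``$\emptyset \in \Levels$ eventually''. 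Hence $\Promising$ carries no information beyond $\Levels$ and may be dropped, leaving the $2^{2^n}$ states of $2^{2^Q}$; collapsing all states with $\emptyset \in \Levels$ into one accepting self-looping sink and declaring every other state rejecting yields the required terminal-accepting automaton. The case $\aww[1,\A]$ is dual: every level is now an $\alpha$-level, the successor of an $\alpha$-level is again an $\alpha$-level, so $\Promising$ is always either empty or the whole current level and again contributes nothing. Here acceptance reduces to ``no branch ever dies'', i.e. $\Levels \neq \emptyset$ forever (tracking $\Levels$ with dead levels removed), so the single state $\Levels = \emptyset$ becomes an absorbing rejecting sink and all others are accepting, giving a terminal-rejecting automaton with $2^{2^n}$ states.

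For the general $\aww[1]$ case I would follow the decomposition of \Cref{lem:det:3}. Writing $m = |\mathcal{M}_{\theta_0}|$ and $\mathcal{A}_S$ for the automaton obtained by replacing $\theta_0$ with $\bigwedge_{q \in S} q$, minimality of the models gives $\lang(\mathcal{A}) = \bigcup_{S \in \mathcal{M}_{\theta_0}} \lang(\mathcal{A}_S)$. For a fixed $S$, the fact that height-$1$ automata never alternate means the part of any run reachable from $S \cap \alpha$ stays inside $\alpha$ while the part reachable from $S \setminus \alpha$ stays outside $\alpha$, so the run DAG splits into two disjoint parts and $\lang(\mathcal{A}_S) = \lang(\mathcal{A}_{S \cap \alpha}) \cap \lang(\mathcal{A}_{S \setminus \alpha})$, with $\mathcal{A}_{S\cap\alpha} \in \aww[1,\A]$ and $\mathcal{A}_{S\setminus\alpha} \in \aww[1,\R]$. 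The first bullet supplies a terminal-rejecting and a terminal-accepting automaton, each with $2^{2^n}$ states; their product is a deterministic weak automaton for $\lang(\mathcal{A}_S)$ with $2^{2^n}\cdot 2^{2^n} = 2^{2^{n+1}}$ states. Taking the (weakness-preserving) product of the $m$ resulting automata to realise the union gives $\bigl(2^{2^{n+1}}\bigr)^m = 2^{m\cdot 2^{n+1}} = 2^{2^{n+\log_2 m + 1}}$ states, exactly as claimed.

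The routine parts are the closure of terminal automata, and of deterministic weak automata, under the product and union operations. The two points I would check carefully are: first, the degeneracy claim, namely that in the height-$1$ case $\Promising$ is a function of $\Levels$ and the acceptance condition becomes terminal (this is where the drop from $3^{2^n}$ to $2^{2^n}$ comes from); and second, the clean split of the run DAG into its $\alpha$- and non-$\alpha$-parts that justifies $\lang(\mathcal{A}_S) = \lang(\mathcal{A}_{S\cap\alpha}) \cap \lang(\mathcal{A}_{S\setminus\alpha})$. The latter relies crucially on the absence of alternation and is the main obstacle to get right before the size arithmetic goes through.
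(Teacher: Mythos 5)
Your proposal is correct and follows essentially the same route as the paper: the corollary is obtained there by specialising the construction of \Cref{lem:det:1} to height~$1$, observing that the $\mathit{Promising}$ component degenerates (so the state space drops from $3^{2^n}$ to $2^{2^n}$ and the acceptance condition becomes terminal), and then reusing the minimal-model decomposition and product/union arguments of \Cref{lem:det:3}. Your write-up simply makes explicit the details the paper leaves as observations preceding the corollary.
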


\subsection{Preliminary Experimental Evaluation}

We expect the LTL-to-DRW translation of this paper to produce automata similar in size (number of states, Rabin pairs) to the translations presented in \cite{DBLP:conf/lics/EsparzaKS18,DBLP:phd/dnb/Sickert19}, which have been implemented using Owl \cite{DBLP:conf/atva/KretinskyMS18} and have been extensively tested.
Indeed, the \enquote{Master Theorem} of \cite{DBLP:conf/lics/EsparzaKS18,DBLP:phd/dnb/Sickert19} characterises the words satisfying a formula $\varphi$ as those for which there exist sets $M$, $N$ of subformulas satisfying three conditions, and so it has the same rough structure as our normal form. 
Further, for each disjunct of our normal form the automata constructions used in \cite{DBLP:conf/lics/EsparzaKS18,DBLP:phd/dnb/Sickert19} and the ones used in this paper are similar. 
Finally, in preliminary experiments we have compared the LTL-to-DRW translations from \cite{DBLP:phd/dnb/Sickert19} and a prototype implementation, without optimisations, of the normalisation procedure of this paper. As benchmark sets we used the \enquote{Dwyer}-patterns \cite{DBLP:conf/fmsp/DwyerAC98}, pre-processed as described in \cite[Ch. 8]{DBLP:phd/dnb/Sickert19}, and the \enquote{Parametrised} formula set from \cite[Ch. 8]{DBLP:phd/dnb/Sickert19}. We observed that on the first set for 60\% of the formulas the number of states of the resulting DRWs was equal, for 17\% the number of states obtained using the construction of this paper was smaller, and for 23\% the number of states was larger. On the second set the ratios were: 76\% equal, 21\% smaller, and 3\% larger. For both sets combined we observed that in 85\% of all 164 cases the difference in number of states was less than or equal to three.

We concluded that the main advantage of our translation is not its performance, but its modularity (it splits the procedure into a normalisation and a simplified translation phase) and its suitability for symbolic automata constructions. 
We leave a detailed experimental comparison and possible integration in Owl \cite{DBLP:conf/atva/KretinskyMS18} (which in particular requires to examine different options for formula and automata simplification, as well as an extensive comparison to existing translations) for future work.

\section{A Hierarchy of Alternating Weak and Very Weak Automata}
\label{sec:hierarchy}

\newcommand{\awwG}{\aww_{\text{G}}}
\newcommand{\alwPS}{\alw_{\text{PS}}}

The expressive power of weak and very weak alternating automata has been studied
by Gurumurthy \textit{et al.} in \cite{DBLP:conf/charme/GurumurthyKSV03} and 
by Pel{\'{a}}nek and Strejcek in \cite{DBLP:conf/wia/PelanekS05},
respectively.  Both papers identify the number of alternations 
between accepting and non-accepting states as an important parameter, and define
a hierarchy of automata classes based on it.  Let $\awwG[k]$ denote the class of $\aww$ with at most $(k-1)$ alternations defined in \cite{DBLP:conf/charme/GurumurthyKSV03}. Similarly, let $\alwPS[k,\A]$
and $\alwPS[k,\R]$ denote the classes of $\alw$ with at most $(k-1)$ alternations and 
accepting or non-accepting initial state, respectively, defined in \cite{DBLP:conf/wia/PelanekS05}. Finally, define $\alwPS[k] = \alwPS[k,\A] \cup \alwPS[k,\R]$\footnote{In \cite{DBLP:conf/wia/PelanekS05} the classes have different names.}.
\Cref{fig:hierarchies} shows the results of \cite{DBLP:conf/charme/GurumurthyKSV03}  
and \cite{DBLP:conf/wia/PelanekS05}. We abuse language, and, for example, write $\Pi_2 = \alwPS[2, \A]$
to denote that the class of languages satisfying formulas in $\Pi_2$ and the class of languages recognized by automata 
in $\alwPS[2, \A]$ coincide.

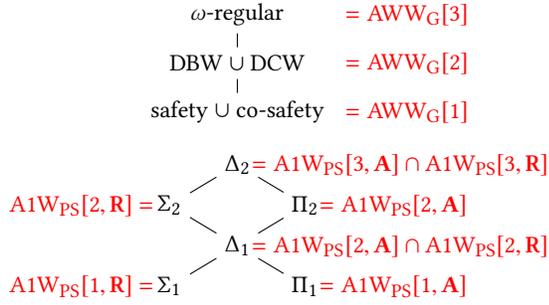
\begin{figure}[bt]
  \small
  \begin{center}
	\begin{tikzpicture}[x=1cm,y=0.60cm,outer sep=0pt,scale=0.9]
	
	\node (m3) at (0, 3.7) {$\omega$-regular}; \node[color=red] (m3l) at (2.5, 3.7) {$= \awwG[3]$};
	\node (m2) at (0, 2.5) {DBW $\cup$ DCW}; \node[color=red] (m2l) at (2.5, 2.5) {$= \awwG[2]$};
	\node (m1) at (0, 1.3) {safety $\cup$ co-safety}; \node[color=red] (m2l) at (2.5, 1.3) {$= \awwG[1]$};
	
	\path[-]
	(m2) edge node{} (m1)
     (m3) edge node{} (m2);

    \node (padding1) at ( 0,0.3) {};
	\node (1) at ( 0, 0) {$\Delta_2$}; \node[color=red] (1l) at (2.4, 0) {$= \alwPS[3,\A] \cap \alwPS[3,\R]$};
	\node (2) at ( 1,-1) {$\Pi_2$}; \node[color=red] (2l) at (2.3, -1) {$= \alwPS[2,\A]$};
	\node (3) at (-1,-1) {$\Sigma_2$}; \node[color=red] (3l) at (-2.3, -1) {$\alwPS[2,\R] = $};
    \node (4) at ( 0,-2) {$\Delta_1$}; \node[color=red] (4l) at (2.4, -2) {$= \alwPS[2,\A] \cap \alwPS[2,\R]$};
	\node (5) at ( 1,-3) {$\Pi_1$}; \node[color=red] (5l) at (2.3, -3) {$= \alwPS[1,\A]$};
	\node (6) at (-1,-3) {$\Sigma_1$}; \node[color=red] (6l) at (-2.3, -3) {$\alwPS[1,\R] = $};

	\path[-]
	(2) edge node{} (1)
    (3) edge node{} (1)
    
    (4) edge node{} (2)
    (4) edge node{} (3)
    
    (5) edge node{} (4)
    (6) edge node{} (4);
	\end{tikzpicture}
  \end{center}
\caption{Expressive power of AWWs after Gurumurthy \textit{et al.} \cite{DBLP:conf/charme/GurumurthyKSV03}, and 
of A1Ws after Pel{\'{a}}nek and Strejcek \cite{DBLP:conf/wia/PelanekS05}.}
\label{fig:hierarchies}
\end{figure}

Unfortunately, the results of \cite{DBLP:conf/charme/GurumurthyKSV03} and \cite{DBLP:conf/wia/PelanekS05} do not \enquote{match}. Due to slight differences in the definitions of height, e.g. the treatment of $\delta(\cdot) = \false$ and $\delta(\cdot) = \true$, the restriction to very weak automata of $\awwG[k]$ does not match any class $\alwPS[k']$ (that is, $\awwG[k] \cap \alw \neq \alwPS[k']$) and, vice versa, extending $\alwPS[k]$ does not yield any $\awwG$ $[k']$. We show that our new definition of height unifies the two hierarchies, yielding the pleasant result shown in \Cref{fig:hierarchies2}. The result follows from \Cref{lem:det:1,lem:det:3,lem:aww:translation}, \Cref{lem:det:4}, and from constructions appearing in \cite{DBLP:conf/ifipTCS/LodingT00,DBLP:conf/charme/GurumurthyKSV03,DBLP:conf/wia/PelanekS05}. 
\ifarxiv A proof sketch is located in \Cref{appendix:omitted_proofs}. \fi

\begin{figure}[bt]
  \small
  \begin{center}
	\begin{tikzpicture}[x=1cm,y=0.60cm,outer sep=0pt,scale=0.9]
	
	\def\d{1.1}
	
	\node (m3) at (0, 4*\d) {$\omega$-regular}; \node[color=red] (m3l) at (1.6, 4*\d) {$= \aww[2]$};
	\node (m2le) at (-0.7, 3*\d) {\dcw}; \node[color=red] (m2l) at (-2.2, 3*\d) {$\aww[2, \R] =$};
	\node (m2ri) at (0.7, 3*\d) {\dbw}; \node[color=red] (m2l) at (2.2, 3*\d) {$= \aww[2, \A]$};
	\node (m1) at (0, 2*\d) {DWW}; \node[color=red] (m2l) at (1.4, 2*\d) {$= \aww[1]$};
	\node (m0le) at (-0.7, 1*\d) {co-safety}; \node[color=red] (m2l) at (-2.5, \d) {$\aww[1,\R] =$};
	\node (m0ri) at (0.7, 1*\d) {safety}; \node[color=red] (m2l) at (2.3, \d) {$= \aww[1,\A]$};
	
	\path[-]
	(m1) edge node{} (m0ri)
     (m1) edge node{} (m0le)
	(m2le) edge node{} (m1)
	(m2ri) edge node{} (m1)
     (m3) edge node{} (m2ri)
     (m3) edge node{} (m2le);

    \node (padding1) at ( 0,0.3) {};
	\node (1) at ( 0, 0) {$\Delta_2$}; \node[color=red] (1l) at (1, 0) {$= \alw[2]$};
	\node (2) at (0.7,-1*\d) {$\Pi_2$}; \node[color=red] (2l) at (1.9, -1*\d) {$= \alw[2,\A]$};
	\node (3) at (-0.7,-1*\d) {$\Sigma_2$}; \node[color=red] (3l) at (-1.9, -1*\d) {$\alw[2,\R] = $};
    \node (4) at ( 0,-2*\d) {$\Delta_1$}; \node[color=red] (4l) at (1, -2*\d) {$= \alw[1]$};
	\node (5) at ( 0.7,-3*\d) {$\Pi_1$}; \node[color=red] (5l) at (1.9, -3*\d) {$= \alw[1,\A]$};
	\node (6) at (-0.7,-3*\d) {$\Sigma_1$}; \node[color=red] (6l) at (-1.9, -3*\d) {$\alw[1,\R] = $};

	\path[-]
	(2) edge node{} (1)
     (3) edge node{} (1)
    
    (4) edge node{} (2)
    (4) edge node{} (3)
    
    (5) edge node{} (4)
    (6) edge node{} (4);

	\end{tikzpicture}
  \end{center}
\caption{Expressive power of AWWs and A1Ws}
\label{fig:hierarchies2}
\end{figure}
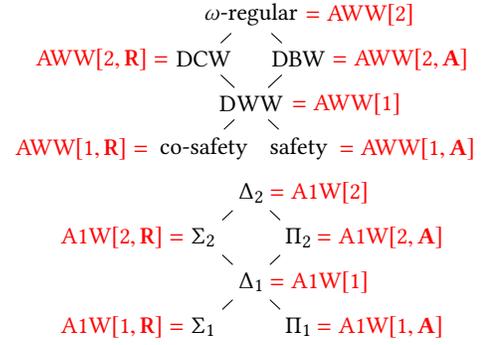

\begin{restatable}{proposition}{propCorrAwwDet}\label{prop:hierarchy:aww}
$\!\aww[2] =$ $\omega$-regular, $\aww[2,\A]$ $=$ $\dbw$, $\aww[2,\R]$ $=$ $\dcw$, $\aww[1]$ $\!=$ \emph{DWW}, $\aww[1,\A]$ $=$ safety, $\aww[1,\R] =$ co-safety, $\alw[1,\R] = \Sigma_1$, $\alw[1,\A] =$ $\Pi_1$, $\alw[1] = \Delta_1$, $\alw[2,\R] = \Sigma_2$, $\alw[2,\A] = \Pi_2$, $\alw[2] = \Delta_2$.
\end{restatable}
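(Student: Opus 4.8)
The plan is to prove each of the twelve equalities by exhibiting a cycle of inclusions that closes up, where the forward inclusions come from the determinisation and translation lemmas of this paper, and the reverse inclusions come from the classical constructions in \cite{DBLP:conf/ifipTCS/LodingT00,DBLP:conf/charme/GurumurthyKSV03,DBLP:conf/wia/PelanekS05}. I would first dispose of the left (\aww) column, whose classes match the levels of the Wagner--Landweber hierarchy of $\omega$-regular languages, and then the right (\alw) column, whose classes match the Manna--Pnueli safety-progress hierarchy. The two columns are then linked through \Cref{thm:hierarchy:correspondence}, which identifies the safety-progress classes with $\Sigma_i$, $\Pi_i$, $\Delta_i$, and through the trivial inclusion $\alw[i,\cdot] \subseteq \aww[i,\cdot]$.

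For the \aww{} column I would argue level by level. The inclusions $\aww[2,\A] \subseteq \dbw$ and $\aww[2,\R] \subseteq \dcw$ are exactly \Cref{lem:det:1}; $\aww[2] \subseteq$ $\omega$-regular is \Cref{lem:det:3} (every \aww[2] determinises to a Rabin automaton); and $\aww[1] \subseteq$ \emph{DWW}, $\aww[1,\A] \subseteq$ safety, $\aww[1,\R] \subseteq$ co-safety are \Cref{lem:det:4}. The reverse inclusions are the standard constructions turning a deterministic automaton of the respective kind into a weak alternating automaton of the correct height, taken from the cited papers: a deterministic safety (resp. co-safety) automaton becomes an \aww[1,\A] (resp. \aww[1,\R]), a \emph{DWW} becomes an \aww[1], and a \dbw{} (resp. \dcw) becomes an \aww[2,\A] (resp. \aww[2,\R]). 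For $\omega$-regular $\subseteq \aww[2]$ I would reuse the fact already exploited in \Cref{lem:det:3}, namely that every $\omega$-regular language is a finite union of intersections of a \dbw- and a \dcw-language; placing the corresponding \aww[2,\A] and \aww[2,\R] components under a single initial formula $\theta_0$ built with disjunctions and conjunctions yields an \aww[2], since $\theta_0$ only selects initial levels and does not create extra alternations along any path.

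For the \alw{} column the forward inclusions $\Sigma_i \subseteq \alw[i,\R]$, $\Pi_i \subseteq \alw[i,\A]$, and $\Delta_i \subseteq \alw[i]$ are \Cref{lem:aww:translation}, since the construction annotates the top operators of a $\Sigma_i$-formula with rejecting $\Sigma$-states, dually for $\Pi_i$, and mixes both for $\Delta_i$. For the reverse inclusions I would combine two facts. First, $\alw[i,\cdot] \subseteq \aww[i,\cdot]$ trivially, so by the already established \aww{} column every \alw[i,\R]-language is an $\omega$-regular language of the persistence level, and similarly for the other two. Second, every language recognised by a \emph{very weak} alternating automaton is LTL-definable \cite{DBLP:conf/ifipTCS/LodingT00,DBLP:conf/wia/PelanekS05}. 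An LTL-definable language that lies in the corresponding safety-progress class is, by \Cref{thm:hierarchy:correspondence}, definable in $\Sigma_i$ (resp. $\Pi_i$, $\Delta_i$), which closes the chain $\Sigma_i \subseteq \alw[i,\R] \subseteq \Sigma_i$ and its duals.

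The main obstacle is not any individual inclusion but the reconciliation of the three different notions of height. As noted in the text, \cite{DBLP:conf/charme/GurumurthyKSV03} and \cite{DBLP:conf/wia/PelanekS05} count alternations differently, in particular in their treatment of the transitions $\delta(\cdot)=\true$ and $\delta(\cdot)=\false$ and in the weak-versus-very-weak restriction, so their results do not transfer verbatim to the classes \aww[n] and \alw[n] used here. The real work is therefore to verify that each borrowed reverse construction produces an automaton whose height, measured by \emph{our} definition, is exactly $n$: concretely, one must replace explicit accepting and rejecting sink states by the Boolean constants $\true$ and $\false$ in the transition function, so that, for instance, a safety automaton becomes an all-accepting \aww[1,\A] rather than an automaton with a rejecting sink that would spuriously register as one alternation. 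Carrying out this height bookkeeping for every class, and checking that it is consistent across the dualities $\A \leftrightarrow \R$ and $\Sigma \leftrightarrow \Pi$, is the delicate part of the argument.
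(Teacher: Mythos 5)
Your proposal is correct, and for most of the twelve equalities it follows the paper's own route: the $\subseteq$-inclusions of the \aww{} column via \Cref{lem:det:1,lem:det:3} and \Cref{lem:det:4}, the $\supseteq$-inclusions via the marked/unmarked adaptation of the constructions of \cite{DBLP:conf/charme/GurumurthyKSV03} (with the union-over-minimal-models trick for $\omega$-regular $\subseteq \aww[2]$), and the inclusions $\Sigma_i \subseteq \alw[i,\R]$, $\Pi_i \subseteq \alw[i,\A]$, $\Delta_i \subseteq \alw[i]$ via \Cref{lem:aww:translation}. Where you genuinely diverge is the converse direction for the \alw{} column. The paper stays syntactic: it applies the \alw-to-LTL translation of \cite{DBLP:conf/ifipTCS/LodingT00} and observes that, because the translation emits a $\U$-formula for every rejecting state and a $\W$-formula for every accepting state, the $\U$/$\W$ alternation depth of $\chi_\mathcal{A}$ equals the alternation depth of $\mathcal{A}$, so the formula lands \emph{directly} in $\Sigma_i$, $\Pi_i$, or $\Delta_i$. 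You instead use that translation only to certify LTL-definability, locate the language in the safety-progress hierarchy via the trivial inclusion $\alw[i,\cdot]\subseteq\aww[i,\cdot]$ and the already-established \aww{} column, and then invoke \Cref{thm:hierarchy:correspondence} to return to the syntactic class. This is logically sound but costs two things the paper's route avoids: you additionally need the classical identifications of $\dbw$, $\dcw$, and DWW languages with the $\omega$-regular recurrence, persistence, and obligation properties (Landweber, Staiger--Wagner), which the proposition itself does not assert; and you lean on the hard direction of the Chang--Manna--Pnueli characterisation, whose known proofs are exactly the non-elementary normalisations this paper sets out to circumvent, thereby losing any control over the size of the resulting formula. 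For a pure expressiveness claim this is acceptable, and your closing remark that the real work is the height bookkeeping across the mismatched definitions of \cite{DBLP:conf/charme/GurumurthyKSV03,DBLP:conf/wia/PelanekS05} correctly identifies where the paper's ``slight adaptation'' actually lives.
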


\noindent Moreover, our single exponential normalisation procedure for LTL transfers to a single exponential normalisation procedure for $\alw$:
 
\begin{lemma}
Let $\mathcal{A}$ be an $\alw$ with $n$ states over an alphabet with $m$ letters. There exists $\mathcal{A}' \in \alw[2]$ with $2^{\mathcal{O}(nm)}$ states such that $\lang(\mathcal{A}) = \lang(\mathcal{A}')$.
\end{lemma}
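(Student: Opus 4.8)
The plan is to make a round trip through LTL: translate $\mathcal{A}$ into an equivalent LTL formula, normalise it with \Cref{thm:logical:flatten}, and translate the resulting $\Delta_2$-formula back into an $\alw[2]$ via \Cref{lem:aww:translation}. The only genuine work is keeping every intermediate object of size $2^{\bigO(nm)}$, which forces me to measure formulas by their number of \emph{distinct} subformulas (a shared DAG) rather than by tree size.

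First I would fix atomic propositions naming the $m$ letters, using $\lceil\log_2 m\rceil$ propositions so that each letter is denoted by a formula with $\bigO(\log m)$ distinct subformulas, and then linearise $\mathcal{A}$. Since $\mathcal{A}$ is very weak, its states can be ordered $q_1,\dots,q_n$ so that $q_i \trans{} q_j$ implies $i \le j$; thus $q_i$ reaches only itself and strictly higher states. Processing from $q_n$ downwards I assign to each $q_i$ a formula $\varphi_{q_i}$: using the monotone decomposition $\delta(q_i,a) \equiv \delta(q_i,a)[\false/q_i] \vee (q_i \wedge \delta(q_i,a)[\true/q_i])$ and substituting $\X\varphi_{q_j}$ for every higher state $q_j$, the defining equation of $q_i$ takes the shape $\varphi_{q_i} \equiv B_{q_i} \vee (C_{q_i} \wedge \X\varphi_{q_i})$ for formulas $B_{q_i}, C_{q_i}$ built from the already-defined $\varphi_{q_j}$ and the letter names. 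Its fixpoint is $\varphi_{q_i} = C_{q_i} \W B_{q_i}$ when $q_i \in \alpha$ and $\varphi_{q_i} = C_{q_i} \U B_{q_i}$ when $q_i \notin \alpha$: an infinite path of a very weak automaton is eventually trapped in a single self-loop, and the Büchi condition accepts that path iff the trapping state is accepting, which is exactly the greatest- versus least-fixpoint choice. Setting $\varphi_\mathcal{A} \coloneqq \theta_0[\varphi_{q_i}/q_i]$ and proving $\lang(\varphi_\mathcal{A}) = \lang(\mathcal{A})$ is then a routine induction along the order, reading runs off satisfying words and back.

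The crucial bookkeeping is that, if I never expand the substitutions but keep the system as a shared DAG, the set of distinct subformulas of $\varphi_\mathcal{A}$ has size $\bigO(nm + \sum_{q,a} |\delta(q,a)|)$, i.e. $\bigO(nm)$ once the transition table of $\mathcal{A}$ is taken to contribute $\bigO(nm)$ to its description. I would then feed $\varphi_\mathcal{A}$ to \Cref{thm:logical:flatten}. Although the theorem's output is a disjunction over all $M \subseteq \sfmu(\varphi_\mathcal{A})$ and $N \subseteq \sfnu(\varphi_\mathcal{A})$ — a tree that is doubly exponential — its number of \emph{distinct} subformulas is only singly exponential: there are $2^{\bigO(|\subf(\varphi_\mathcal{A})|)}$ index pairs $(M,N)$, and by the subformula bounds of \Cref{lem:size:sf} each contributes only $\bigO(|\subf(\varphi_\mathcal{A})|)$ fresh subformulas built from $\flatten{\varphi_\mathcal{A}}{M}$, $\evalmu{\psi}{N}$ and $\evalnu{\psi}{M}$. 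Hence the normalised $\Delta_2$-formula $\varphi'$ has $2^{\bigO(nm)}$ distinct subformulas, and \Cref{lem:aww:translation} turns $\varphi' \in \Delta_2$ into an $\alw[2]$ with $2|\subf(\varphi')| = 2^{\bigO(nm)}$ states recognising $\lang(\varphi') = \lang(\varphi_\mathcal{A}) = \lang(\mathcal{A})$, as required.

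The main obstacle is precisely this size control. A naive $\alw$-to-LTL translation that substitutes each $\varphi_{q_j}$ textually blows up exponentially, and \Cref{thm:logical:flatten} then blows up a second time, yielding a useless doubly-exponential bound; the argument only goes through if every stage is measured by distinct-subformula count and if one verifies that the normalisation operators $\flatten{\cdot}{\cdot}$, $\evalnu{\cdot}{\cdot}$ and $\evalmu{\cdot}{\cdot}$ preserve sharing — which is exactly what the linear-size guarantees behind \Cref{lem:size:sf} provide. A secondary subtlety is the alphabet encoding: naming letters must cost $\bigO(\log m)$ rather than $\bigO(m)$ each, otherwise the $m$ in the exponent degrades.
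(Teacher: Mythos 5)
Your proposal is correct and follows exactly the paper's route: translate $\mathcal{A}$ to an LTL formula with $\mathcal{O}(nm)$ proper subformulas via the L\"oding--Thomas-style fixpoint translation (which the paper delegates to the proof of \Cref{prop:hierarchy:aww}), normalise with \Cref{thm:logical:flatten}, and convert back with \Cref{lem:aww:translation}. The extra detail you supply --- the explicit $\U$/$\W$ fixpoint choice per state and the insistence on counting \emph{distinct} subformulas rather than tree size --- is precisely what the paper's citations of \Cref{prop:hierarchy:aww} and \Cref{lem:size:sf} encapsulate, so there is no substantive difference.
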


\begin{proof}
The translation from $\alw$ to LTL used in \Cref{prop:hierarchy:aww} (an adaption of \cite{DBLP:conf/ifipTCS/LodingT00}) yields a formula $\chi_\mathcal{A}$ with at most $\mathcal{O}(mn)$ proper subformulas. Applying our normalisation procedure to $\chi_\mathcal{A}$ yields an equivalent formula in $\Delta_2$ with at most $2^{\mathcal{O}(mn)}$ proper subformulas (\Cref{lem:size:sf}). 
Applying \Cref{lem:aww:translation} we obtain the postulated automaton $\mathcal{A}'$.
\end{proof} 
\section{Conclusion}

We have presented a purely syntactic normalisation procedure for LTL that transforms a given formula into an equivalent formula in $\Delta_2$, i.e., a formula with at most one alternation
between least- and greatest-fixpoint operators. The procedure has single exponential blow-up,
improving on the prohibitive non-elementary cost of previous constructions. The much better complexity of the new procedure (recall that normalisation procedures for CNF and DNF are
also exponential) makes it attractive for its implementation and use in tools. We have presented 
a first promising application, namely a novel translation from LTL to DRW with double exponential blow-up. Finally, we have shown that the normalisation procedure for LTL can be transferred to a 
normalisation procedure for very weak alternating automata.

Currently we do not know if our normalisation procedure is asymptotically optimal. 
We conjecture that this is the case. 
For the translation of $\aww$ to $\aww[2]$ we also have no further insight, besides the straightforward double exponential upper bound.

\begin{acks}
The authors want to thank Orna Kupferman for the suggestion to examine the expressive power of weak alternating automata and the anonymous reviewers for their helpful comments and remarks.

This work is partly funded by the \grantsponsor{CAVA2}{German Research Foundation (DFG)}{} project \enquote{Verified Model Checkers} (\grantnum{CAVA2}{317422601}) and partly funded by the \grantsponsor{PaVeS}{European Research Council (ERC)}{} under the European Union’s Horizon 2020 research and innovation programme under grant agreement PaVeS (No \grantnum{PaVeS}{787\-367}).
\end{acks}

\bibliography{bibliography}


\begin{thebibliography}{24}


\ifx \showCODEN    \undefined \def \showCODEN     #1{\unskip}     \fi
\ifx \showDOI      \undefined \def \showDOI       #1{#1}\fi
\ifx \showISBNx    \undefined \def \showISBNx     #1{\unskip}     \fi
\ifx \showISBNxiii \undefined \def \showISBNxiii  #1{\unskip}     \fi
\ifx \showISSN     \undefined \def \showISSN      #1{\unskip}     \fi
\ifx \showLCCN     \undefined \def \showLCCN      #1{\unskip}     \fi
\ifx \shownote     \undefined \def \shownote      #1{#1}          \fi
\ifx \showarticletitle \undefined \def \showarticletitle #1{#1}   \fi
\ifx \showURL      \undefined \def \showURL       {\relax}        \fi
\providecommand\bibfield[2]{#2}
\providecommand\bibinfo[2]{#2}
\providecommand\natexlab[1]{#1}
\providecommand\showeprint[2][]{arXiv:#2}

\bibitem[\protect\citeauthoryear{Brunner, Seidl, and Sickert}{Brunner
  et~al\mbox{.}}{2019}]%
        {DBLP:conf/itp/0001SS19}
\bibfield{author}{\bibinfo{person}{Julian Brunner}, \bibinfo{person}{Benedikt
  Seidl}, {and} \bibinfo{person}{Salomon Sickert}.}
  \bibinfo{year}{2019}\natexlab{}.
\newblock \showarticletitle{A Verified and Compositional Translation of {LTL}
  to Deterministic Rabin Automata}. In \bibinfo{booktitle}{\emph{10th
  International Conference on Interactive Theorem Proving, {ITP} 2019,
  September 9-12, 2019, Portland, OR, {USA}}}
  \emph{(\bibinfo{series}{LIPIcs})}, \bibfield{editor}{\bibinfo{person}{John
  Harrison}, \bibinfo{person}{John O'Leary}, {and} \bibinfo{person}{Andrew
  Tolmach}} (Eds.), Vol.~\bibinfo{volume}{141}. \bibinfo{publisher}{Schloss
  Dagstuhl - Leibniz-Zentrum f{\"{u}}r Informatik},
  \bibinfo{pages}{11:1--11:19}.
\newblock
\urldef\tempurl%
\url{https://doi.org/10.4230/LIPIcs.ITP.2019.11}
\showDOI{\tempurl}


\bibitem[\protect\citeauthoryear{{\v{C}}ern{\'{a}} and
  Pel{\'{a}}nek}{{\v{C}}ern{\'{a}} and Pel{\'{a}}nek}{2003}]%
        {DBLP:conf/mfcs/CernaP03}
\bibfield{author}{\bibinfo{person}{Ivana {\v{C}}ern{\'{a}}} {and}
  \bibinfo{person}{Radek Pel{\'{a}}nek}.} \bibinfo{year}{2003}\natexlab{}.
\newblock \showarticletitle{Relating Hierarchy of Temporal Properties to Model
  Checking}. In \bibinfo{booktitle}{\emph{Mathematical Foundations of Computer
  Science 2003, 28th International Symposium, {MFCS} 2003, Bratislava,
  Slovakia, August 25-29, 2003, Proceedings}} \emph{(\bibinfo{series}{Lecture
  Notes in Computer Science})}, \bibfield{editor}{\bibinfo{person}{Branislav
  Rovan} {and} \bibinfo{person}{Peter Vojt{\'{a}}s}} (Eds.),
  Vol.~\bibinfo{volume}{2747}. \bibinfo{publisher}{Springer},
  \bibinfo{pages}{318--327}.
\newblock
\urldef\tempurl%
\url{https://doi.org/10.1007/978-3-540-45138-9\_26}
\showDOI{\tempurl}


\bibitem[\protect\citeauthoryear{Chang, Manna, and Pnueli}{Chang
  et~al\mbox{.}}{1992}]%
        {DBLP:conf/icalp/ChangMP92}
\bibfield{author}{\bibinfo{person}{Edward~Y. Chang}, \bibinfo{person}{Zohar
  Manna}, {and} \bibinfo{person}{Amir Pnueli}.}
  \bibinfo{year}{1992}\natexlab{}.
\newblock \showarticletitle{Characterization of Temporal Property Classes}. In
  \bibinfo{booktitle}{\emph{Automata, Languages and Programming, 19th
  International Colloquium, ICALP92, Vienna, Austria, July 13-17, 1992,
  Proceedings}} \emph{(\bibinfo{series}{Lecture Notes in Computer Science})},
  \bibfield{editor}{\bibinfo{person}{Werner Kuich}} (Ed.),
  Vol.~\bibinfo{volume}{623}. \bibinfo{publisher}{Springer},
  \bibinfo{pages}{474--486}.
\newblock
\urldef\tempurl%
\url{https://doi.org/10.1007/3-540-55719-9\_97}
\showDOI{\tempurl}


\bibitem[\protect\citeauthoryear{Dwyer, Avrunin, and Corbett}{Dwyer
  et~al\mbox{.}}{1998}]%
        {DBLP:conf/fmsp/DwyerAC98}
\bibfield{author}{\bibinfo{person}{Matthew~B. Dwyer},
  \bibinfo{person}{George~S. Avrunin}, {and} \bibinfo{person}{James~C.
  Corbett}.} \bibinfo{year}{1998}\natexlab{}.
\newblock \showarticletitle{Property specification patterns for finite-state
  verification}. In \bibinfo{booktitle}{\emph{{FMSP}}}. \bibinfo{pages}{7--15}.
\newblock
\urldef\tempurl%
\url{https://doi.org/10.1145/298595.298598}
\showDOI{\tempurl}


\bibitem[\protect\citeauthoryear{Esparza, Kret{\'{\i}}nsk{\'{y}}, and
  Sickert}{Esparza et~al\mbox{.}}{2018}]%
        {DBLP:conf/lics/EsparzaKS18}
\bibfield{author}{\bibinfo{person}{Javier Esparza}, \bibinfo{person}{Jan
  Kret{\'{\i}}nsk{\'{y}}}, {and} \bibinfo{person}{Salomon Sickert}.}
  \bibinfo{year}{2018}\natexlab{}.
\newblock \showarticletitle{One Theorem to Rule Them All: {A} Unified
  Translation of {LTL} into {\(\omega\)}-Automata}. In
  \bibinfo{booktitle}{\emph{Proceedings of the 33rd Annual {ACM/IEEE} Symposium
  on Logic in Computer Science, {LICS} 2018, Oxford, UK, July 09-12, 2018}},
  \bibfield{editor}{\bibinfo{person}{Anuj Dawar} {and} \bibinfo{person}{Erich
  Gr{\"{a}}del}} (Eds.). \bibinfo{publisher}{{ACM}}, \bibinfo{pages}{384--393}.
\newblock
\urldef\tempurl%
\url{https://doi.org/10.1145/3209108.3209161}
\showDOI{\tempurl}


\bibitem[\protect\citeauthoryear{Gurumurthy, Kupferman, Somenzi, and
  Vardi}{Gurumurthy et~al\mbox{.}}{2003}]%
        {DBLP:conf/charme/GurumurthyKSV03}
\bibfield{author}{\bibinfo{person}{Sankar Gurumurthy}, \bibinfo{person}{Orna
  Kupferman}, \bibinfo{person}{Fabio Somenzi}, {and} \bibinfo{person}{Moshe~Y.
  Vardi}.} \bibinfo{year}{2003}\natexlab{}.
\newblock \showarticletitle{On Complementing Nondeterministic B{\"{u}}chi
  Automata}. In \bibinfo{booktitle}{\emph{Correct Hardware Design and
  Verification Methods, 12th {IFIP} {WG} 10.5 Advanced Research Working
  Conference, {CHARME} 2003, L'Aquila, Italy, October 21-24, 2003,
  Proceedings}} \emph{(\bibinfo{series}{Lecture Notes in Computer Science})},
  \bibfield{editor}{\bibinfo{person}{Daniel Geist} {and}
  \bibinfo{person}{Enrico Tronci}} (Eds.), Vol.~\bibinfo{volume}{2860}.
  \bibinfo{publisher}{Springer}, \bibinfo{pages}{96--110}.
\newblock
\urldef\tempurl%
\url{https://doi.org/10.1007/978-3-540-39724-3\_10}
\showDOI{\tempurl}


\bibitem[\protect\citeauthoryear{Kret{\'{\i}}nsk{\'{y}}, Meggendorfer, and
  Sickert}{Kret{\'{\i}}nsk{\'{y}} et~al\mbox{.}}{2018}]%
        {DBLP:conf/atva/KretinskyMS18}
\bibfield{author}{\bibinfo{person}{Jan Kret{\'{\i}}nsk{\'{y}}},
  \bibinfo{person}{Tobias Meggendorfer}, {and} \bibinfo{person}{Salomon
  Sickert}.} \bibinfo{year}{2018}\natexlab{}.
\newblock \showarticletitle{Owl: {A} Library for {\(\omega\)}-Words, Automata,
  and {LTL}}. In \bibinfo{booktitle}{\emph{Automated Technology for
  Verification and Analysis - 16th International Symposium, {ATVA} 2018, Los
  Angeles, CA, USA, October 7-10, 2018, Proceedings}}
  \emph{(\bibinfo{series}{Lecture Notes in Computer Science})},
  \bibfield{editor}{\bibinfo{person}{Shuvendu~K. Lahiri} {and}
  \bibinfo{person}{Chao Wang}} (Eds.), Vol.~\bibinfo{volume}{11138}.
  \bibinfo{publisher}{Springer}, \bibinfo{pages}{543--550}.
\newblock
\urldef\tempurl%
\url{https://doi.org/10.1007/978-3-030-01090-4\_34}
\showDOI{\tempurl}


\bibitem[\protect\citeauthoryear{Lichtenstein, Pnueli, and Zuck}{Lichtenstein
  et~al\mbox{.}}{1985}]%
        {DBLP:conf/lop/LichtensteinPZ85}
\bibfield{author}{\bibinfo{person}{Orna Lichtenstein}, \bibinfo{person}{Amir
  Pnueli}, {and} \bibinfo{person}{Lenore~D. Zuck}.}
  \bibinfo{year}{1985}\natexlab{}.
\newblock \showarticletitle{The Glory of the Past}. In
  \bibinfo{booktitle}{\emph{Logics of Programs, Conference, Brooklyn College,
  New York, NY, USA, June 17-19, 1985, Proceedings}}
  \emph{(\bibinfo{series}{Lecture Notes in Computer Science})},
  \bibfield{editor}{\bibinfo{person}{Rohit Parikh}} (Ed.),
  Vol.~\bibinfo{volume}{193}. \bibinfo{publisher}{Springer},
  \bibinfo{pages}{196--218}.
\newblock
\urldef\tempurl%
\url{https://doi.org/10.1007/3-540-15648-8\_16}
\showDOI{\tempurl}


\bibitem[\protect\citeauthoryear{L{\"{o}}ding and Thomas}{L{\"{o}}ding and
  Thomas}{2000}]%
        {DBLP:conf/ifipTCS/LodingT00}
\bibfield{author}{\bibinfo{person}{Christof L{\"{o}}ding} {and}
  \bibinfo{person}{Wolfgang Thomas}.} \bibinfo{year}{2000}\natexlab{}.
\newblock \showarticletitle{Alternating Automata and Logics over Infinite
  Words}. In \bibinfo{booktitle}{\emph{Theoretical Computer Science, Exploring
  New Frontiers of Theoretical Informatics, International Conference {IFIP}
  {TCS} 2000, Sendai, Japan, August 17-19, 2000, Proceedings}}
  \emph{(\bibinfo{series}{Lecture Notes in Computer Science})},
  \bibfield{editor}{\bibinfo{person}{Jan van Leeuwen}, \bibinfo{person}{Osamu
  Watanabe}, \bibinfo{person}{Masami Hagiya}, \bibinfo{person}{Peter~D.
  Mosses}, {and} \bibinfo{person}{Takayasu Ito}} (Eds.),
  Vol.~\bibinfo{volume}{1872}. \bibinfo{publisher}{Springer},
  \bibinfo{pages}{521--535}.
\newblock
\urldef\tempurl%
\url{https://doi.org/10.1007/3-540-44929-9\_36}
\showDOI{\tempurl}


\bibitem[\protect\citeauthoryear{Maler and Pnueli}{Maler and Pnueli}{1990}]%
        {DBLP:conf/focs/MalerP90}
\bibfield{author}{\bibinfo{person}{Oded Maler} {and} \bibinfo{person}{Amir
  Pnueli}.} \bibinfo{year}{1990}\natexlab{}.
\newblock \showarticletitle{Tight Bounds on the Complexity of Cascaded
  Decomposition of Automata}. In \bibinfo{booktitle}{\emph{31st Annual
  Symposium on Foundations of Computer Science, St. Louis, Missouri, USA,
  October 22-24, 1990, Volume {II}}}. \bibinfo{publisher}{{IEEE} Computer
  Society}, \bibinfo{pages}{672--682}.
\newblock
\urldef\tempurl%
\url{https://doi.org/10.1109/FSCS.1990.89589}
\showDOI{\tempurl}


\bibitem[\protect\citeauthoryear{Manna and Pnueli}{Manna and Pnueli}{1990}]%
        {DBLP:conf/podc/MannaP89}
\bibfield{author}{\bibinfo{person}{Zohar Manna} {and} \bibinfo{person}{Amir
  Pnueli}.} \bibinfo{year}{1990}\natexlab{}.
\newblock \showarticletitle{A Hierarchy of Temporal Properties}. In
  \bibinfo{booktitle}{\emph{Proceedings of the Ninth Annual {ACM} Symposium on
  Principles of Distributed Computing, Quebec City, Quebec, Canada, August
  22-24, 1990}}, \bibfield{editor}{\bibinfo{person}{Cynthia Dwork}} (Ed.).
  \bibinfo{publisher}{{ACM}}, \bibinfo{pages}{377--410}.
\newblock
\urldef\tempurl%
\url{https://doi.org/10.1145/93385.93442}
\showDOI{\tempurl}


\bibitem[\protect\citeauthoryear{Manna and Pnueli}{Manna and Pnueli}{1992}]%
        {DBLP:books/daglib/0077033}
\bibfield{author}{\bibinfo{person}{Zohar Manna} {and} \bibinfo{person}{Amir
  Pnueli}.} \bibinfo{year}{1992}\natexlab{}.
\newblock \bibinfo{booktitle}{\emph{The temporal logic of reactive and
  concurrent systems - specification}}.
\newblock \bibinfo{publisher}{Springer}.
\newblock


\bibitem[\protect\citeauthoryear{Miyano and Hayashi}{Miyano and
  Hayashi}{1984}]%
        {DBLP:journals/tcs/MiyanoH84}
\bibfield{author}{\bibinfo{person}{Satoru Miyano} {and}
  \bibinfo{person}{Takeshi Hayashi}.} \bibinfo{year}{1984}\natexlab{}.
\newblock \showarticletitle{Alternating Finite Automata on omega-Words}.
\newblock \bibinfo{journal}{\emph{Theor. Comput. Sci.}}  \bibinfo{volume}{32}
  (\bibinfo{year}{1984}), \bibinfo{pages}{321--330}.
\newblock
\urldef\tempurl%
\url{https://doi.org/10.1016/0304-3975(84)90049-5}
\showDOI{\tempurl}


\bibitem[\protect\citeauthoryear{Muller, Saoudi, and Schupp}{Muller
  et~al\mbox{.}}{1988}]%
        {DBLP:conf/lics/MullerSS88}
\bibfield{author}{\bibinfo{person}{David~E. Muller}, \bibinfo{person}{Ahmed
  Saoudi}, {and} \bibinfo{person}{Paul~E. Schupp}.}
  \bibinfo{year}{1988}\natexlab{}.
\newblock \showarticletitle{Weak Alternating Automata Give a Simple Explanation
  of Why Most Temporal and Dynamic Logics are Decidable in Exponential Time}.
  In \bibinfo{booktitle}{\emph{Proceedings of the Third Annual Symposium on
  Logic in Computer Science {(LICS} '88), Edinburgh, Scotland, UK, July 5-8,
  1988}}. \bibinfo{publisher}{{IEEE} Computer Society},
  \bibinfo{pages}{422--427}.
\newblock
\urldef\tempurl%
\url{https://doi.org/10.1109/LICS.1988.5139}
\showDOI{\tempurl}


\bibitem[\protect\citeauthoryear{Nipkow, Paulson, and Wenzel}{Nipkow
  et~al\mbox{.}}{2002}]%
        {DBLP:books/sp/NipkowPW02}
\bibfield{author}{\bibinfo{person}{Tobias Nipkow}, \bibinfo{person}{Lawrence~C.
  Paulson}, {and} \bibinfo{person}{Markus Wenzel}.}
  \bibinfo{year}{2002}\natexlab{}.
\newblock \bibinfo{booktitle}{\emph{Isabelle/HOL - {A} Proof Assistant for
  Higher-Order Logic}}. \bibinfo{series}{Lecture Notes in Computer Science},
  Vol.~\bibinfo{volume}{2283}.
\newblock \bibinfo{publisher}{Springer}.
\newblock
\showISBNx{3-540-43376-7}
\urldef\tempurl%
\url{https://doi.org/10.1007/3-540-45949-9}
\showDOI{\tempurl}


\bibitem[\protect\citeauthoryear{Pel{\'{a}}nek and Strejcek}{Pel{\'{a}}nek and
  Strejcek}{2005}]%
        {DBLP:conf/wia/PelanekS05}
\bibfield{author}{\bibinfo{person}{Radek Pel{\'{a}}nek} {and}
  \bibinfo{person}{Jan Strejcek}.} \bibinfo{year}{2005}\natexlab{}.
\newblock \showarticletitle{Deeper Connections Between {LTL} and Alternating
  Automata}. In \bibinfo{booktitle}{\emph{Implementation and Application of
  Automata, 10th International Conference, {CIAA} 2005, Sophia Antipolis,
  France, June 27-29, 2005, Revised Selected Papers}}
  \emph{(\bibinfo{series}{Lecture Notes in Computer Science})},
  \bibfield{editor}{\bibinfo{person}{Jacques Farr{\'{e}}},
  \bibinfo{person}{Igor Litovsky}, {and} \bibinfo{person}{Sylvain Schmitz}}
  (Eds.), Vol.~\bibinfo{volume}{3845}. \bibinfo{publisher}{Springer},
  \bibinfo{pages}{238--249}.
\newblock
\urldef\tempurl%
\url{https://doi.org/10.1007/11605157\_20}
\showDOI{\tempurl}


\bibitem[\protect\citeauthoryear{Piterman and Pnueli}{Piterman and
  Pnueli}{2018}]%
        {PitermanP18}
\bibfield{author}{\bibinfo{person}{Nir Piterman} {and} \bibinfo{person}{Amir
  Pnueli}.} \bibinfo{year}{2018}\natexlab{}.
\newblock \showarticletitle{Temporal Logic and Fair Discrete Systems}.
\newblock In \bibinfo{booktitle}{\emph{Handbook of Model Checking}},
  \bibfield{editor}{\bibinfo{person}{Edmund~M. Clarke},
  \bibinfo{person}{Thomas~A. Henzinger}, \bibinfo{person}{Helmut Veith}, {and}
  \bibinfo{person}{Roderick Bloem}} (Eds.). \bibinfo{publisher}{Springer},
  \bibinfo{pages}{27--73}.
\newblock
\urldef\tempurl%
\url{https://doi.org/10.1007/978-3-319-10575-8\_2}
\showDOI{\tempurl}


\bibitem[\protect\citeauthoryear{Schneider}{Schneider}{2004}]%
        {DBLP:series/txtcs/Schneider04}
\bibfield{author}{\bibinfo{person}{Klaus Schneider}.}
  \bibinfo{year}{2004}\natexlab{}.
\newblock \bibinfo{booktitle}{\emph{Verification of Reactive Systems - Formal
  Methods and Algorithms}}.
\newblock \bibinfo{publisher}{Springer}.
\newblock
\showISBNx{978-3-642-05555-3}
\urldef\tempurl%
\url{https://doi.org/10.1007/978-3-662-10778-2}
\showDOI{\tempurl}


\bibitem[\protect\citeauthoryear{Seidl and Sickert}{Seidl and Sickert}{2019}]%
        {DBLP:journals/afp/SeidlS19}
\bibfield{author}{\bibinfo{person}{Benedikt Seidl} {and}
  \bibinfo{person}{Salomon Sickert}.} \bibinfo{year}{2019}\natexlab{}.
\newblock \showarticletitle{A Compositional and Unified Translation of {LTL}
  into {\(\omega\)}-Automata}.
\newblock \bibinfo{journal}{\emph{Archive of Formal Proofs}}
  \bibinfo{volume}{2019} (\bibinfo{year}{2019}).
\newblock
\urldef\tempurl%
\url{https://www.isa-afp.org/entries/LTL\_Master\_Theorem.html}
\showURL{%
\tempurl}


\bibitem[\protect\citeauthoryear{Sickert}{Sickert}{2016}]%
        {DBLP:journals/afp/Sickert16}
\bibfield{author}{\bibinfo{person}{Salomon Sickert}.}
  \bibinfo{year}{2016}\natexlab{}.
\newblock \showarticletitle{Linear Temporal Logic}.
\newblock \bibinfo{journal}{\emph{Archive of Formal Proofs}}
  \bibinfo{volume}{2016} (\bibinfo{year}{2016}).
\newblock
\urldef\tempurl%
\url{https://www.isa-afp.org/entries/LTL.shtml}
\showURL{%
\tempurl}


\bibitem[\protect\citeauthoryear{Sickert}{Sickert}{2019}]%
        {DBLP:phd/dnb/Sickert19}
\bibfield{author}{\bibinfo{person}{Salomon Sickert}.}
  \bibinfo{year}{2019}\natexlab{}.
\newblock \emph{\bibinfo{title}{A Unified Translation of Linear Temporal Logic
  to {\(\omega\)}-Automata}}.
\newblock \bibinfo{thesistype}{Ph.D. Dissertation}. \bibinfo{school}{Technical
  University of Munich, Germany}.
\newblock
\urldef\tempurl%
\url{http://nbn-resolving.de/urn:nbn:de:bvb:91-diss-20190801-1484932-1-4}
\showURL{%
\tempurl}


\bibitem[\protect\citeauthoryear{Sickert}{Sickert}{2020}]%
        {XXXX:journals/afp/Sickert20}
\bibfield{author}{\bibinfo{person}{Salomon Sickert}.}
  \bibinfo{year}{2020}\natexlab{}.
\newblock \showarticletitle{An Efficient Normalisation Procedure for Linear
  Temporal Logic: Isabelle/HOL Formalisation}.
\newblock \bibinfo{journal}{\emph{Archive of Formal Proofs}}
  \bibinfo{volume}{2020} (\bibinfo{year}{2020}).
\newblock
\urldef\tempurl%
\url{https://www.isa-afp.org/entries/LTL\_Normal\_Form.html}
\showURL{%
\tempurl}


\bibitem[\protect\citeauthoryear{Vardi}{Vardi}{1994}]%
        {DBLP:conf/tacs/Vardi94}
\bibfield{author}{\bibinfo{person}{Moshe~Y. Vardi}.}
  \bibinfo{year}{1994}\natexlab{}.
\newblock \showarticletitle{Nontraditional Applications of Automata Theory}. In
  \bibinfo{booktitle}{\emph{Theoretical Aspects of Computer Software,
  International Conference {TACS} '94, Sendai, Japan, April 19-22, 1994,
  Proceedings}} \emph{(\bibinfo{series}{Lecture Notes in Computer Science})},
  \bibfield{editor}{\bibinfo{person}{Masami Hagiya} {and}
  \bibinfo{person}{John~C. Mitchell}} (Eds.), Vol.~\bibinfo{volume}{789}.
  \bibinfo{publisher}{Springer}, \bibinfo{pages}{575--597}.
\newblock
\urldef\tempurl%
\url{https://doi.org/10.1007/3-540-57887-0\_116}
\showDOI{\tempurl}


\bibitem[\protect\citeauthoryear{Zuck}{Zuck}{1986}]%
        {XXXX:phd/Zuck86}
\bibfield{author}{\bibinfo{person}{Lenore~D. Zuck}.}
  \bibinfo{year}{1986}\natexlab{}.
\newblock \emph{\bibinfo{title}{{Past Temporal Logic}}}.
\newblock \bibinfo{thesistype}{Ph.D. Dissertation}. \bibinfo{school}{The
  Weizmann Institute of Science, Israel}.
\newblock


\end{thebibliography}

\ifarxiv
\newpage
\appendix
\section{Proofs for the Lemmas from \cite{DBLP:conf/lics/EsparzaKS18,DBLP:phd/dnb/Sickert19}}
\label{sec:updated_proofs}

Since we had to change the notations of \cite{DBLP:conf/lics/EsparzaKS18,DBLP:phd/dnb/Sickert19}, we include for convenience proofs in the new notation.

\lemEvalnu*

\begin{proof}
All parts are proved by a straightforward structural induction on $\varphi$. Here we only present two cases of the induction for (1) and (2). (3) then follows from (1) and (2).

(1) Assume $\setF_w^{\,\varphi} \subseteq \setmu$. Then $\setF_{w_i}^{\,\varphi} \subseteq \setmu$ for all $i \geq 0$. We prove the following stronger statement via structural induction on $\varphi$. We consider one representative of the \enquote{interesting} cases and one of the \enquote{straightforward} cases:
\[\forall i.\, (\, (w_i \models \varphi) \implies (w_i \models \evalnu{\varphi}{\setmu}) \, )\]

\noindent Case $\varphi = \psi_1 \U \psi_2$. Let $i \geq 0$ and assume $w_i \models \psi_1 \U \psi_2$. Then $\psi_1 \U \psi_2 \in \setF_{w_i}^{\,\varphi}$ and so $\varphi \in \setmu$. We prove $w_i \models \evalnu{(\psi_1 \U \psi_2)}{\setmu}$:
\begin{align*}
 	      & \SetToWidest{w_i \models \psi_1 \U \psi_2} & \SetToWidestTwo{}  \\
\implies  & w_i \models \psi_1 \W \psi_2 \\
\iff      & \forall j. \, w_{i+j} \models \psi_1 \vee \exists k \leq j.\, w_{i+k} \models \psi_2 \\
\implies  & \forall j. \, w_{i+j} \models \evalnu{\psi_1}{\setmu} \vee \exists k \leq j.\, w_{i+k} \models \evalnu{\psi_2}{\setmu} & \text{(I.H.)} \\
\implies  & w_i \models (\evalnu{\psi_1}{\setmu}) \W (\evalnu{\psi_2}{\setmu}) \\
\iff      & w_i \models \evalnu{(\psi_1 \U \psi_2)}{\setmu} 
\end{align*}

\noindent Case $\varphi = \psi_1 \vee \psi_2$. Let $i \geq 0$ and assume $w_i \models \psi_1 \vee \psi_2$:
\begin{align*}
	     & \SetToWidest{w_i \models \psi_1 \vee \psi_2} & \SetToWidestTwo{} \\
\iff     & w_i \models \psi_1 \vee w_i \models \psi_2 \\
\implies & w_i \models \evalnu{\psi_1}{\setmu} \vee w_i \models \evalnu{\psi_2}{\setmu} & \text{(I.H.)} \\
\iff     & w_i \models \evalnu{(\psi_1 \vee \psi_2)}{\setmu}                       
\end{align*}

(2) Assume $\setmu \subseteq \setGF_w^{\;\varphi}$. Then $\setmu \subseteq \setGF_{w_i}^{\;\varphi}$ for all $i \geq 0$. We prove the following stronger statement via structural induction on $\varphi$:
\[\forall i. \, (\, (w_i \models \evalnu{\varphi}{\setmu}) \implies (w_i \models \varphi) \,)\]

\noindent Case $\varphi = \psi_1 \U \psi_2$. If $\varphi \notin \setmu$, then by definition $\evalnu{\varphi}{\setmu} = \false$. So $w_i \not \models \evalnu{\varphi}{\setmu} = \false$ for all $i$ and thus the implication $(w_i \models \evalnu{\varphi}{\setmu}) \implies (w_i \models \varphi)$ holds for every $i \geq 0$. Assume now $\varphi \in \setmu$. Since $\setmu \subseteq \setGF_w^{\;\varphi}$ we have $w_i \models \G \F \varphi$ and so in particular $w_i \models \F \psi_2$. To prove the implication assume $w_i \models \evalnu{(\psi_1 \U \psi_2)}{\setmu}$ for an arbitrary fixed $i$. We show $w_i \models \psi_1 \U \psi_2$:
\begin{align*}
	     & \SetToWidest{w_i \models \evalnu{(\psi_1 \U \psi_2)}{\setmu}} & \SetToWidestTwo{} \\
\iff     & w_i \models (\evalnu{\psi_1}{\setmu}) \W (\evalnu{\psi_2}{\setmu}) \\
\iff     & \forall j. \, w_{i+j} \models \evalnu{\psi_1}{\setmu} \vee \exists k \leq j.\, w_{i+k} \models \evalnu{\psi_2}{\setmu} \\
\implies & \forall j. \, w_{i+j} \models \psi_1 \vee \exists k \leq j.\, w_{i+k} \models \psi_2 & \text{(I.H)} \\
\iff     & w_i \models \psi_1 \W \psi_2 \\
\iff     & w_i \models \psi_1 \U \psi_2 
\end{align*}

\noindent Case $\varphi = \psi_1 \vee \psi_2$. Let $i \geq 0$ arbitrary and assume $w_i \models \psi_1 \vee \psi_2$. We have:
\begin{align*}
	     & \SetToWidest{w_i \models \evalnu{(\psi_1 \vee \psi_2)}{\setmu}} & \SetToWidestTwo{} \\
\iff     & w_i \models \evalnu{\psi_1}{\setmu} \vee (w_i \models \evalnu{\psi_2}{\setmu} \\
\implies & w_i \models \psi_1 \vee w_i \models \psi_2 & \text{(I.H.)} \\
\iff     & w_i \models \psi_1 \vee \psi_2
\end{align*}
\end{proof}

\lemEvalmu*

\begin{proof}
All parts are proved by a straightforward structural induction on $\varphi$. Here we only present two cases of the induction for (1) and (2). (3) then follows from (1) and (2).

(1) Assume $\,\setFG_w^{\,\varphi} \subseteq \setnu$. Then $\setFG_{w_i}^{\,\varphi} \subseteq \setnu$ for all $i$. We prove the following stronger statement via structural induction on $\varphi$:
\[\forall i.\,(\,(w_i \models \varphi) \implies (w_i \models \evalmu{\varphi}{\setnu})\,)\]

\noindent Case $\varphi = \psi_1 \W \psi_2$. Let $i \geq 0$ arbitrary and assume $w_i \models \varphi$. If $\varphi \in \setnu$ then  $\evalmu{\varphi}{\setnu} = \true$ and so $w_i \models \evalmu{\varphi}{\setnu} $ trivially holds. Assume now $\varphi \notin \setnu$. Since $\setFG_{w_i}^{\,\varphi} \subseteq \setnu$ we have $w_i \not \models \F\G \varphi$ and so in particular $w_i \not \models \G\psi_1$. We prove $w_i \models \evalmu{(\psi_1 \W \psi_2)}{\setnu}$:
\begin{align*}
 	     & \SetToWidest{w_i \models \psi_1 \W \psi_2} & \SetToWidestTwo{} \\
\iff     & w_i \models \psi_1 \U \psi_2 \\
\iff     & \exists j. \, w_{i+j} \models \psi_2 \wedge \forall k < j.\, w_{i+k} \models \psi_1 \\
\implies & \exists j. \, w_{i+j} \models \evalmu{\psi_2}{\setnu} \wedge \forall k < j.\, w_{i+k} \models \evalmu{\psi_1}{\setnu} & \text{(I.H.)} \\
\iff     & w_i \models (\evalmu{\psi_1}{\setnu}) \U (\evalmu{\psi_2}{\setnu}) \\
\iff     & w_i \models \evalmu{(\psi_1 \W \psi_2)}{\setnu}
\end{align*}

\noindent Case $\varphi = \psi_1 \vee \psi_2$. Let $i \geq 0$ arbitrary and assume $w_i \models \psi_1 \vee \psi_2$. We have:
\begin{align*}
	     & \SetToWidest{w_i \models \psi_1 \vee \psi_2} & \SetToWidestTwo{} \\
\iff     & w_i \models \psi_1 \vee w_i \models \psi_2 \\
\implies & w_i \models \evalmu{\psi_1}{\setnu} \vee w_i \models \evalmu{\psi_2}{\setnu} & \text{(I.H.)} \\
\iff     & w_i \models \evalmu{(\psi_1 \vee \psi_2)}{\setnu}
\end{align*}

(2) Assume $\setnu \subseteq \setG_w^{\varphi}$. Then $\setnu \subseteq \setG_{w_i}^\varphi$ for all $i$. We prove the following stronger statement via structural induction on $\varphi$:
\[\forall i. \, (\,(w_i \models \evalmu{\varphi}{\setnu}) \implies (w_i \models \varphi)\,)\]

\noindent Case $\varphi = \psi_1 \W \psi_2$. If $\varphi \in \setnu$, then since $\setnu \subseteq \setG_w^{\varphi}$ we have $w_i \models \G \varphi$ and so $w_i \models \varphi$. Assume now that $\varphi \notin \setnu$ and $w_i \models \evalmu{(\psi_1 \W \psi_2)}{\setnu}$ for an arbitrary fixed $i$. We prove $w_i \models \psi_1 \W \psi_2$:
\begin{align*}
	     & \SetToWidest{w_i \models \evalmu{(\psi_1 \W \psi_2)}{\setnu}} & \SetToWidestTwo{} \\
\iff     & w_i \models (\evalmu{\psi_1}{\setnu}) \U (\evalmu{\psi_2}{\setnu}) \\
\iff     & \exists j. \, w_{i+j} \models \evalmu{\psi_2}{\setnu} \wedge \forall k < j.\, w_{i+k} \models \evalmu{\psi_1}{\setnu} \\
\implies & \exists j. \, w_{i+j} \models \psi_2 \wedge \forall k < j.\, w_{i+k} \models \psi_1 & \text{(I.H.)} \\
\iff     & w_i \models \psi_1 \U \psi_2 \\
\implies & w_i \models \psi_1 \W \psi_2
\end{align*}

\noindent Case $\varphi = \psi_1 \vee \psi_2$. We derive in a straightforward manner for an arbitrary and fixed $i$:
\begin{align*}
	     & \SetToWidest{w_i \models \evalmu{(\psi_1 \vee \psi_2)}{\setnu}} & \SetToWidestTwo{} \\
\iff     & w_i \models \evalmu{\psi_1}{\setnu} \vee w_i \models \evalmu{\psi_2}{\setnu} \\
\implies & w_i \models \psi_1 \vee w_i \models \psi_2 & \text{(I.H.)} \\
\iff     & w_i \models \psi_1 \vee \psi_2
\end{align*}
\end{proof}

\lemInduction*

\begin{proof}
Let us first focus on part (1) and then move to part (2).

(1) Let $\psi \in \setGF_w^{\;\varphi}$. We have $w \models \G\F\psi$, and so $w_i \models \psi$ for infinitely many $i \geq 0$. Since $\setFG_{w_i}^{\,\varphi} = \setFG_w^{\,\varphi}$ for every $i \geq 0$, \Cref{lem:evalmu}.1 can be applied to $w_i$, $\setFG_{w_i}^{\,\varphi}$, and $\psi$. This yields $w_i \models \evalmu{\psi}{\setFG_w^{\,\varphi}}$ for infinitely many $i \geq 0$ and thus $w \models \G\F(\evalmu{\psi}{\setFG_w^{\,\varphi}})$.

Let $\psi \in \setFG_w^{\,\varphi}$. Since $w_i \models \F\G\psi$, there is an index $j$ such that  $w_{j+k} \models \psi$ for every $k \geq 0$. The index $j$ can be chosen so that it also satisfies $\setGF_{w}^{\;\varphi} = \setF_{w_{j+k}}^{\,\varphi} = \setGF_{w_{j+k}}^{\;\varphi}$ for every $k\geq 0$. So \Cref{lem:evalnu}.1 can be applied to $\setF_{w_{j+k}}^{\,\varphi}$, $w_{j+k}$, and $\psi$. This yields $w_{j+k} \models \evalnu{\psi}{\setGF_w^{\;\varphi}}$ for every $k \geq 0$ and thus $w \models \F\G(\evalnu{\psi}{\setGF_w^{\;\varphi}})$.

(2) Let $\setmu \subseteq \sfmu(\varphi)$ and $\setnu \subseteq \sfnu(\varphi)$. Observe that $\setmu \cap \setnu = \emptyset$. Let $n \coloneqq |\setmu \cup \setnu|$. Let $\psi_1, \ldots, \psi_n$ be an enumeration of $\setmu \cup \setnu$ compatible with the subformula order, i.e., if $\psi_i$ is a subformula of $\psi_j$, then $i \leq j$. Let $(\setmu_0, \setnu_0), (\setmu_1, \setnu_1), \ldots, (\setmu_{n}, \setnu_{n})$ be the unique sequence of pairs satisfying:

\begin{itemize}
	\item $(\setmu_0, \setnu_0) = (\emptyset, \emptyset)$ and  $(\setmu_n, \setnu_n) = (\setmu, \setnu)$.
	\item For every $0 < i \leq n$, if $\psi_i \in \setmu$ then $\setmu_{i} \setminus \setmu_{i-1} = \{ \psi_i \}$ and $\setnu_i = \setnu_{i-1}$, and if $\psi_i \in \setnu$, then $\setmu_i = \setmu_{i-1}$ and $\setnu_{i} \setminus \setnu_{i-1} = \{ \psi_i \}$.
\end{itemize}

We prove $\setmu_i \subseteq \setGF_w^{\;\varphi}$ and $\setnu_i \subseteq \setFG_w^{\,\varphi}$ for every $0 \leq i \leq n$ by induction on $i$. For $i=0$ the result follows immediately from $\setmu_0 = \emptyset= \setnu_0$. For $i > 0$ we consider two cases:

\begin{itemize}
	\item $\psi_{i} \in \setnu$, i.e., $\setmu_i = \setmu_{i-1}$ and $\setnu_i \setminus \setnu_{i-1} = \{ \psi_i \}$.

	By induction hypothesis and $\setmu_{i} = \setmu_{i-1}$ we have $\setmu_{i} \subseteq \setGF_w^{\;\varphi}$ and $\setnu_{i-1} \subseteq \setFG_w^{\,\varphi}$. We prove $\psi_i \in \setFG_w^{\,\varphi}$, i.e., $w \models \F\G \psi_i$, in three steps.

	\begin{itemize}
		\item Claim 1: $\evalnu{\psi_i}{\setmu} = \evalnu{\psi_i}{\setmu_{i}}$.

		By the definition of $\evalnu{\cdot}{\cdot}$, $\evalnu{\psi_i}{\setmu}$ is completely determined by the $\mu$-subformulas of $\psi_i$ that belong to $\setmu$. By the definition of the sequence $(\setmu_0, \setnu_0), \ldots,$ $(\setmu_{n}, \setnu_{n})$, a $\mu$-subformula of $\psi_i$ belongs to $\setmu$ if and only if it belongs to $\setmu_{i}$, and we are done.

		\item Claim 2: $\setmu_i \subseteq \setGF_{w_{k}}^{\;\varphi}$ for every $k \geq 0$.

		Follows immediately from $\setmu_{i} \subseteq \setGF_w^{\;\varphi}$.

		\item Proof of $w \models \F\G \psi_i$.

		By the assumption of (2) we have $w \models \F\G(\evalnu{\psi_i}{\setmu})$, and so, by Claim 1, $w \models \F\G(\evalnu{\psi_i}{\setmu_{i}})$. So there exists an index $j$ such that $w_{j+k} \models \evalnu{\psi_i}{\setmu_i}$ for every $k \geq 0$. By Claim~2 we further have $\setmu_i \subseteq \setGF_{w_{j+k}}^{\;\varphi}$ for every $j, k \geq 0$. So we can apply \Cref{lem:evalnu}.2 to $\setmu_i$, $w_{j+k}$, and $\psi_i$, which yields $w_{j+k} \models \psi_i$ for every $k \geq 0$. So $w \models \F\G \psi_i$.
	\end{itemize}

	\item $\psi_{i} \in \setmu$, i.e., $\setmu_i \setminus \setmu_{i-1} = \{ \psi_i \}$ and $\setnu_i = \setnu_{i-1}$.

	By induction hypothesis we have in this case $\setmu_{i-1} \subseteq \setGF_w^{\;\varphi}$ and $\setnu_{i} \subseteq \setFG_w^{\,\varphi}$. We prove $\psi_i \in \setGF_w^{\;\varphi}$, i.e., $w \models \G\F \psi_i$ in three steps.

	\begin{itemize}
 		\item Claim 1: $\evalmu{\psi_i}{\setnu} = \evalmu{\psi_i}{\setnu_{i}}$.

		The claim is proved as in the previous case.

		\item Claim 2: There is an $j \geq 0$ such that $\setnu_i \subseteq \setG_{w_{k}}^{\varphi}$ for every $k \geq j$.

		Follows immediately from $\setnu_{i} \subseteq \setFG_w^{\,\varphi}$.

 		\item Proof of $w \models \G\F \psi_i$.

		By the assumption of (2) we have $w \models \G\F(\evalmu{\psi_i}{\setnu})$. Let $j$ be the index of Claim~2. By Claim~1 we have $w \models \G\F(\evalmu{\psi_i}{\setnu_{i}})$, and so there exist infinitely many $k \geq j$ such that $w_k \models \evalmu{\psi_i}{\setnu_{i}}$. By Claim~2 we further have $\setnu_i \subseteq \setG_{w_{k}}^{\varphi}$. So we can apply \Cref{lem:evalmu}.2 to $\setnu_i$, $w_{k}$, and $\psi_i$, which yields $w_{k} \models \psi_i$ for infinitely many $k \geq j$. So $w \models \G\F \psi_i$.
	\end{itemize}
\end{itemize}
\end{proof}

\section{Omitted Proofs}
\label{appendix:omitted_proofs}

\lemSizeSf*

\begin{proof}
By \Cref{lem:aww:translation}, some $\alw[2]$ with 
$O(|\subf(\varphi_{\!\setmu,\setnu})|)$ states recognises $\lang(\varphi_{\setmu,\setnu})$. So it suffices to show that $|\subf(\varphi_{\setmu,\setnu}) | \in O(|\subf(\varphi)|)$. This follows from the following claims:

\begin{enumerate}
	\item $|\bigcup \{\subf(\evalnu{\psi}{\setmu}) : \psi \in \subf(\varphi)\} | \leq |\subf(\varphi)|$
	\item $|\bigcup \{\subf(\evalmu{\psi}{\setnu}) : \psi \in \subf(\varphi)\} | \leq |\subf(\varphi)|$
	\item $|\subf(\flatten{\varphi}{\setmu})| \leq 3 |\subf(\varphi)|$
\end{enumerate}

\noindent  Let $\psi_1, \ldots, \psi_n$ be an enumeration of $\subf(\varphi)$ compatible with the subformula order, i.e., if $\psi_i$ is a subformula of $\psi_j$, then $i \leq j$. Let
$X_0 = \emptyset$, and $X_i = X_{i-1} \cup  \{\psi_i\}$ for every $1 \leq i \leq n$
To prove (1-3) we show that for every $0 \leq i \geq n$
\begin{itemize}
\item[(1')] $\left| \bigcup \{\subf(\evalnu{\psi}{\setmu}) : \psi \in X_i\} \right| \leq i$
\item[(2')] $\left| \bigcup \{\subf(\evalmu{\psi}{\setnu}) : \psi \in X_i\} \right| \leq i$
\item[(3')] $\left| \bigcup \{\subf(\flatten{\psi}{M}) \cup \subf(\evalnu{\psi}{\setmu}) : \psi \in X_i\} \right| \leq 3i$
\end{itemize}
\noindent Since $X_n = \subf(\varphi)$, (1) and (2) follow immediately from (1') and (2'), while
(3) follows from (3') and the inclusion 
\[\subf(\flatten{\varphi}{M}) \subseteq \bigcup \{\subf(\flatten{\psi}{M}) \cup \subf(\evalnu{\psi}{\setmu}) : \psi \in \subf(\varphi)\}\]
\noindent which follows easily from the definitions.

We only prove (1') and (3'), since (2') is analogous to (1'). For $i=0$ (1') and (3') hold immediately,  and so it suffices to show
\begin{align*}
       &\left|\bigcup\{\subf(\evalnu{\psi}{\setmu}) : \psi \in X_{i}\}\right| \\
\leq & \left|\bigcup\{\subf(\evalnu{\psi}{\setmu}) : \psi \in X_{i-1}\}\right| + 1 & (*)\\
     & \left|\bigcup\{\subf(\flatten{\psi}{M}) \cup \subf(\evalnu{\psi}{\setmu}) : \psi \in X_{i}\}\right| \\
\leq & \left|\bigcup\{\subf(\flatten{\psi}{M}) \cup \subf(\evalnu{\psi}{\setmu}) : \psi \in X_{i-1}\}\right| + 3 & (**)
\end{align*}
\noindent We prove $(*)$ and $(**)$ by a case distinction on $\psi_i$. We only show one case as an example, since all other cases are either straightforward or analogous.

\smallskip
\noindent Case $\psi_i = \psi_i' \W \psi_i''$. Observe that the subformula ordering ensures $\subf(\psi_i') \subseteq X_{i-1}$ and $\subf(\psi_i'') \subseteq X_{i-1}$. Thus the only \emph{new} proper subformulas we derive are the ones that are directly derived from $\psi_i' \W \psi_i''$. Inserting the definitions for $\subf$, $\evalnu{\cdot}{\cdot}$, and $\flatten{\cdot}{\cdot}$ we obtain the following two set inclusions from which the bound on the cardinality follows:
\begin{align*}
\subf(&\evalnu{\psi_i}{M}) \subseteq  \bigcup\{\subf(\evalnu{\psi}{\setmu}) : \psi \in X_{i-1}\} \\
& \cup \{(\evalnu{\psi_i'}{M}) \W (\evalnu{\psi_i''}{M})\} \\
\subf(&\flatten{\psi_i}{M}) \subseteq  \bigcup\{\subf(\flatten{\psi}{M}) \cup \subf(\evalnu{\psi}{\setmu}) : \psi \in X_{i-1}\} \\
& \cup \{(\flatten{\psi_i'}{M}) \U (\flatten{\psi_i''}{M} \vee \G (\evalnu{\psi_i'}{M})), \G (\evalnu{\psi_i'}{M})\}
\end{align*}
\end{proof}

\propCorrAwwDet*

\begin{proof} Let us sketch the proof.

\medskip

(\aww): The $\subseteq$-inclusion follows immediately from \Cref{lem:det:1,lem:det:3} and \Cref{lem:det:4}. The $\supseteq$-inclusion is a slight adaptation of similar proofs in \cite{DBLP:conf/charme/GurumurthyKSV03}. In order to translate a \dcw{} into a $\aww[2,\R]$ we duplicate the set of states into two sets of marked and unmarked states. We remove from the marked states all rejecting states, and add transitions that allow unmarked states to nondeterministically choose to move to another unmarked state, or to its marked copy. Finally, we define all unmarked states to be rejecting and all marked states to be accepting. The proof of $\aww[2,\A] \supseteq \dbw{}$ is dual. The inclusion $\aww[2] \supseteq \text{$\omega$-regular}$ follows from the previous two results, because every \drw{} is equivalent to a Boolean combination of {\dbw}s and {\dcw}s, which we can express in our initial formula $\theta_0$. The proofs for the remaining inclusions are analogous.

\medskip

(\alw): The $\supseteq$-inclusion for $\Delta_i$ is proven in \Cref{lem:aww:translation}. For a formula $\varphi$ that belongs to $\Sigma_i$ ($\Pi_i$) we also rely on \Cref{lem:aww:translation}, but add a new initial state, $\sta{\varphi}{\Sigma_i}$ ($\sta{\varphi}{\Pi_i}$) that is marked as rejecting (accepting) such that the automaton belongs to $\alw[i,\R]$ ($\alw[i,\A]$). For the $\subseteq$-inclusion, let $\mathcal{A} = \langle \Sigma, Q, \theta_0, \delta, \alpha \rangle$ be a very weak alternating automaton with $\Sigma = 2^{Ap}$. We use the translation from $\alw$ to LTL presented in \cite[Thm. 6]{DBLP:conf/ifipTCS/LodingT00}, with minimal modifications, to define a formula $\chi_\mathcal{A}$ such that $\lang(\chi_\mathcal{A})=\lang(\mathcal{A})$. Then, we show that when $\mathcal{A}$ belongs to one of the classes in the hierarchy,
$\chi_\mathcal{A}$ belongs to the corresponding class of formulas. For the proof of correctness of the translation we refer the reader to \cite{DBLP:conf/ifipTCS/LodingT00}.

For the definition of $\chi_\mathcal{A}$, we assign to every $\theta \in \mathcal{B}^+(Q)$ an LTL formula $\chi(\theta)$ such that $\lang(\chi(\theta)) = \lang(\mathcal{A}_\theta)$, where $\mathcal{A}_\theta$ denotes $\mathcal{A}$ with $\theta$ as initial formula, and set $\chi_\mathcal{A} := \chi(\theta_0)$. Similarly, for the definition of $\chi(\theta)$, we first assign a formula $\chi(q)$ to every state $q$, and then define $\chi(\theta)$ as the result of substituting $\chi(q)$ for $q$ in $\theta$, for every state $q$. It remains to define $\chi(q)$. Using that $\mathcal{A}$ is very weak, we proceed inductively, i.e., we assume that $\chi(q')$ has already been defined for all $q'$ such that $q \rightarrow q'$ and $q \neq q'$. 

For every $q \in Q$ and $\sigma \in 2^{Ap}$, let $\theta_{q,\sigma}$ and $\theta'_{q,\sigma}$ be formulas such that  $\delta(q, \sigma) \equiv (q \wedge \theta_{q,\sigma}) \vee \theta'_{q,\sigma}$ (it is easy to see that they exist). Define
\[\chi(q) = \begin{cases}
	\varphi_q \U \varphi_q' & \text{if $q \notin \alpha$} \\
	\varphi_q \W \varphi_q' & \text{if $q \in \alpha$} \\
\end{cases}\]
\noindent with: \[ \varphi_q  = \bigvee_{\sigma \subseteq \Sigma} \left(\psi_\sigma \wedge \X \chi(\theta_{q,\sigma})\right) \] \[ \varphi_q'  = \bigvee_{\sigma \subseteq \Sigma} \left(\psi_\sigma \wedge \X \chi(\theta'_{q,\sigma})\right)\]  
\[\psi_\sigma  = \bigwedge_{a \in \sigma} a \wedge \bigwedge_{a \notin \sigma} \neg a\]

Since this translation assigns to each $\U$-formula a rejecting state and to each $\W$-formula an accepting state, the syntax tree of $\chi_\mathcal{A}$ has an alternation between $\U$ and $\W$ exactly when there is an alternation between accepting and non-accepting states. This yields all the desired inclusions in $\Sigma_1$, $\Pi_1, \ldots,$ $\Delta_2$. 
\end{proof}

\fi

\end{document}